\begin{document}

%

%

\newcommand{\Br}{\mathbb{R}}
\newcommand{\BB}{\mathbb{B}}
\newcommand{\BE}{\mathbb{E}}
\newcommand{\BI}{\mathbb{I}}
\newcommand{\T}{\top}
\newcommand{\st}{\textnormal{s.t.}}
\newcommand{\tr}{\textnormal{Tr}\,}
\newcommand{\Tr}{\textnormal{Tr}}
\newcommand{\conv}{\textnormal{conv}}
\newcommand{\diag}{\textnormal{diag}}
\newcommand{\Diag}{\textnormal{Diag}\,}
\newcommand{\Prob}{\textnormal{Prob}}
\newcommand{\var}{\textnormal{var}}
\newcommand{\rank}{\textnormal{rank}}
\newcommand{\sign}{\textnormal{sign}}
\newcommand{\cone}{\textnormal{cone}\,}
\newcommand{\cl}{\textnormal{cl}\,}
\newcommand{\vct}{\textnormal{vec}\,}
\newcommand{\sym}{\textnormal{sym}\,}
\newcommand{\matr}{\boldsymbol M}\,
\newcommand{\vect}{\boldsymbol V}\,
\newcommand{\tens}{\mathscr{T}\,}
\newcommand{\feas}{\textnormal{feas}\,}
\newcommand{\opt}{\textnormal{opt}\,}
\newcommand{\SM}{\mathbf\Sigma}
\newcommand{\SI}{\mathbf S}
\newcommand{\RR}{\mathbf R}
\newcommand{\x}{\mathbf x}
\newcommand{\y}{\mathbf y}
\newcommand{\s}{\mathbf s}
\newcommand{\sa}{\mathbf a}
\newcommand{\g}{\mathbf g}
\newcommand{\e}{\mathbf e}
\newcommand{\z}{\mathbf z}
\newcommand{\w}{\mathbf w}
\newcommand{\su}{\mathbf u}
\newcommand{\sv}{\mathbf v}
\newcommand{\argmin}{\mathop{\rm argmin}}
\newcommand{\argmax}{\mathop{\rm argmax}}
\newcommand{\LCal}{\mathcal{L}}
\newcommand{\CCal}{\mathcal{C}}
\newcommand{\DCal}{\mathcal{D}}
\newcommand{\ECal}{\mathcal{E}}
\newcommand{\GCal}{\mathcal{G}}
\newcommand{\KCal}{\mathcal{K}}
\newcommand{\OCal}{\mathcal{O}}
\newcommand{\QCal}{\mathcal{Q}}
\newcommand{\SCal}{\mathcal{S}}
\newcommand{\Xcal}{\mathcal{X}}
\newcommand{\Ycal}{\mathcal{Y}}
\newcommand{\ICal}{\mathcal{I}}
\newcommand{\JCal}{\mathcal{J}}
\newcommand{\half}{\frac{1}{2}}
\newcommand{\K}{\mbox{$\mathbb K$}}
\newcommand{\Z}{\mbox{$\mathbb Z$}}
\newcommand{\card}{\textnormal{card}}
\newcommand{\trac}{\textnormal{trace}}
\newcommand{\prox}{\textnormal{prox}}
\newcommand{\diam}{\textnormal{diam}}
\newcommand{\dom}{\textnormal{dom}}
\newcommand{\proj}{\textnormal{proj}}
\newcommand{\EP}{{\textbf{E}}}
\newcommand{\CL}{\mathcal L}
\newcommand{\br}{\mathbb{R}}
\newcommand{\bs}{\mathbb{S}}
\newcommand{\bn}{\mathbb{N}}
\newcommand{\ba}{\begin{array}}
\newcommand{\ea}{\end{array}}
\newcommand{\ACal}{\mathcal{A}}
\newcommand{\BCal}{\mathcal{B}}
\newcommand{\ZCal}{\mathcal{Z}}
\newcommand{\FCal}{\mathcal{F}}
\newcommand{\RCal}{\mathcal{R}}
\newcommand{\XCal}{\mathcal{X}}
\newcommand{\bp}{\mathbb{P}}
\newcommand{\bq}{\mathbb{Q}}
\newcommand{\NCal}{\mathcal{N}}
\newcommand{\indc}{\mathbb{I}}
\newcommand{\UCal}{\mathcal{U}}
\newcommand{\red}{\color{red}}
\newcommand{\Rspace}{\mathbb{R}}
\newcommand{\one}{\textbf{1}}
\newcommand{\bigO}{\mathcal{O}}
\newcommand{\bigOtil}{\widetilde{\mathcal{O}}}
\newcommand{\mydefn}{:=}
\newcommand{\defeq}{:=}
\newcommand{\icol}[1]{
  \left(\begin{smallmatrix}#1\end{smallmatrix}\right)%
}

\newtheorem{theorem}{Theorem}[section]
\newtheorem{corollary}[theorem]{Corollary}
\newtheorem{lemma}[theorem]{Lemma}
\newtheorem{proposition}[theorem]{Proposition}
\newtheorem{condition}{Condition}
\newtheorem{definition}{Definition}
\newtheorem{example}{Example}
\newtheorem{question}{Question}
\newtheorem{remark}[theorem]{Remark}
\newtheorem{assumption}[theorem]{Assumption}

\twocolumn[

\aistatstitle{Fast Algorithms for Computational Optimal Transport and Wasserstein Barycenter}



\aistatsauthor{Wenshuo Guo \And Nhat Ho \And Michael I. Jordan }

\aistatsaddress{ UC Berkeley \And  UC Berkeley \And UC Berkeley } ]

\begin{abstract}
We provide theoretical complexity analysis for new algorithms to compute the optimal transport (OT) distance between two discrete probability distributions, and demonstrate their favorable practical performance compared to state-of-art primal-dual algorithms. First, we introduce the \emph{accelerated primal-dual randomized coordinate descent} (APDRCD) algorithm for computing the OT distance. We show that its complexity is $\bigOtil(\frac{n^{5/2}}{\varepsilon})$, where $n$ stands for the number of atoms of these probability measures and $\varepsilon > 0$ is the desired accuracy. This complexity bound matches the best known complexities of primal-dual algorithms for the OT problems, including the adaptive primal-dual accelerated gradient descent (APDAGD) and the adaptive primal-dual accelerated mirror descent (APDAMD) algorithms. Then, we demonstrate the improved practical efficiency of the APDRCD algorithm through comparative experimental studies.  We also propose a greedy version of APDRCD, which we refer to as \emph{accelerated primal-dual greedy coordinate descent} (APDGCD), to further enhance practical performance. Finally, we generalize the APDRCD and APDGCD algorithms to distributed algorithms for computing the Wasserstein barycenter for multiple probability distributions. 
\end{abstract}

\section{Introduction}
Optimal transport has become an important topic in statistical machine learning.  It finds the minimal cost couplings between pairs of probability measures and provides a geometrically faithful way to compare two probability distributions, with diverse applications in areas including Bayesian nonparametrics \citep{Nguyen-2013-Convergence, Nguyen-2016-Borrowing}, scalable Bayesian inference \citep{Srivastava-2015-WASP, Srivastava-2018-Scalable}, topic modeling~\citep{Lin-2018-Sparsemax}, isotonic regression~\citep{Rigollet-2019-Uncoupled}, and deep learning~\citep{Courty-2017-Optimal, Arjovsky-2017-Wasserstein, Tolstikhin-2018-Wasserstein}. 

Nevertheless, the practical impact of OT has been limited by its computational burden. By viewing the optimal transport distance as a linear programming problem, interior-point methods have been employed as a computational solver, with a best known practical complexity of $\bigOtil(n^{3})$ ~\citep{Pele-2009-Fast}. Recently, Lee and Sidford~\citep{Lee-2014-Path} proposed to use the Laplace linear system solver to theoretically improve the complexity of interior-point methods to $\bigOtil(n^{5/2})$. However, it remains a practical challenge to develop efficient interior-point implementations in the high-dimensional settings for OT applications in machine learning. 

Several algorithms have been proposed to circumvent the scalability issue of the interior-point methods, including the Sinkhorn algorithm~\citep{Sinkhorn-1974-Diagonal, Knight-2008-Sinkhorn, Kalantari-2008-Complexity,Cuturi-2013-Sinkhorn}, which has a complexity bound of $\bigOtil(\frac{n^{2}}{\varepsilon^2})$ where $\varepsilon > 0$ is the desired accuracy~\citep{Dvurechensky-2018-Computational}. The Greenkhorn algorithm~\citep{Altschuler-2017-Near} further improves the performance of the Sinkhorn algorithm, with a theoretical complexity of $\bigOtil(\frac{n^{2}}{\varepsilon^2})$~\citep{lin2019efficient}. However, for large-scale applications of the OT problem, particularly in randomized and asynchronous scenarios such as computational Wasserstein barycenters, existing literature has shown that neither the Sinkhorn algorithm nor the Greenkhorn algorithm are sufficiently scalable and flexible~\citep{Cuturi-2014-Fast, dvurechenskii2018decentralize}.

Recent research has demonstrated the advantages of the family of accelerated primal-dual algorithms over the Sinkhorn algorithms. This family includes the adaptive primal-dual accelerated gradient descent (APDAGD) algorithm~\citep{Dvurechensky-2018-Computational} and the adaptive primal-dual accelerated mirror descent (APDAMD) algorithms~\citep{lin2019efficient}, with complexity bounds of $\bigOtil(\frac{n^{2.5}}{\varepsilon})$ and $\bigOtil(\frac{n^{2}\sqrt{\gamma}}{\varepsilon})$, respectively, where $\gamma \leq \sqrt{n}$ is the inverse of the strong complexity constant of Bregman divergence with respect to the $l_{\infty}$-norm. In addition, the primal-dual algorithms possess the requisite flexibility and scalability compared to the Sinkhorn algorithm, which is crucial for computational OT problems in large-scale applications~\citep{Cuturi-2014-Fast,Ho-2018-Probabilistic}. Specifically, they are flexible enough to be generalized to the computation of the Wasserstein barycenter for multiple probability distributions in decentralized and asynchronous settings~\citep{Cuturi-2014-Fast, dvurechenskii2018decentralize}. 

In the optimization literature, primal-dual methods have served as standard techniques that are readily parallelized for high-dimensional applications~\citep{combettes2012primal, wainwright2005map}. On the other hand, coordinate descent methods have been shown to be well suited to the solution of large-scale machine learning problems~\citep{nesterov2012efficiency, richtarik2016parallel, fercoq2015accelerated}. 

\textbf{Our contributions.} The contributions of the paper are three-fold.
\begin{enumerate}
    \item We introduce an \emph{accelerated primal-dual randomized coordinate descent} (APDRCD) algorithm for solving the OT problem. We provide a complexity upper bound of $\bigOtil(\frac{n^{5/2}}{\varepsilon})$ for the APDRCD algorithm, which is comparable to the complexity of state-of-art primal-dual algorithms for OT problems, such as the APDAGD and APDAMD algorithms~\citep{Dvurechensky-2018-Computational, lin2019efficient}. To the best of our knowledge, this is the first accelerated primal-dual coordinate descent algorithm for computing the OT problem.
    
    \item  We show that the APDRCD algorithm outperforms the APDAGD and APDAMD algorithms with experiments on both synthetic and real image datasets. To further improve the practical performance of the APDRCD algorithm, we present a greedy version of it which we refer to as the \emph{accelerated primal-dual greedy coordinate descent} (APDGCD) algorithm. To the best of our knowledge, the APDRCD and APDGCD algorithms achieve the best performance among state-of-art accelerated primal-dual algorithms on solving the OT problem.
    
    \item  We demonstrate that the APDRCD and APDGCD algorithms are suitable and parallelizable for other large-scale problems besides the OT problem, e.g., approximating the Wasserstein barycenter for a finite set of probability measures stored over a distributed network.
\end{enumerate}

\textbf{Organization.} The remainder of the paper is organized as follows. In Section~\ref{Sec:setup}, we provide the formulation of the entropic OT problem and its dual form. In Section~\ref{Sec:APDRCD_algorithm}, we introduce the APDRCD algorithm for solving the regularized OT problem and provide a complexity upper bound for it. We then present the greedy APDGCD algorithm. In Section~\ref{Sec:experiments}, we present comparative experiments between the APDRCD, the APDGCD algorithms and other primal-dual algorithms including the APDAGD and APDAMD algorithms. We conclude the paper with a few future directions in Section~\ref{Sec:discussion}. Finally, the proofs of all the results are included in the Appendix~\ref{sec:supplementary_material}, and the generalized APDRCD and APDGCD algorithms for approximating Wasserstein barycenters are presented in Appendix~\ref{Sec:WB}. Additional experiments are presented in Appendix~\ref{subsec:further_exp}.

\textbf{Notation.} We denote the probability simplex $\Delta^n : = \{u = \left(u_1, \ldots, u_n\right) \in \Rspace^n: \sum_{i = 1}^{n} u_{i} = 1, \ u \geq 0\}$ for $n \geq 2$. Furthermore, $[n]$ stands for the set $\{1, 2, \ldots, n\}$ while $\Rspace^n_+$ stands for the set of all vectors in $\Rspace^n$ with nonnegative components for any $n \geq 1$. For a vector $x \in \Rspace^n$ and $1 \leq p \leq \infty$, we denote $\|x\|_p$ as its $\ell_p$-norm and $\text{diag}(x)$ as the diagonal matrix with $x$ on the diagonal. For a matrix $A \in \Rspace^{n \times n}$, the notation $\text{vec}(A)$ stands for the vector in $\Rspace^{n^2}$ obtained from concatenating the rows and columns of $A$. $\one$ stands for a vector with all of its 
components equal to $1$. $\partial_x f$ refers to a partial gradient of $f$ with respect to $x$. Lastly, given the dimension $n$ and accuracy $\varepsilon$, the notation $a = \bigO\left(b(n,\varepsilon)\right)$ stands for the upper bound $a \leq C \cdot b(n, \varepsilon)$ where $C$ is independent of $n$ and $\varepsilon$. Similarly, the notation $a = \bigOtil(b(n, \varepsilon))$ indicates the previous inequality may depend on the logarithmic function of $n$ and $\varepsilon$, and where $C>0$.

\section{Problem Setup}
\label{Sec:setup}
In this section, we provide necessary background for the entropic regularized OT problem. The objective function for the problem is presented in Section~\ref{Sec:reg_OT} while its dual form as well as the key properties of that dual form are given in Section~\ref{Sec:dual_reg_OT}.
\subsection{Entropic Regularized OT}
\label{Sec:reg_OT}
As shown in~\citep{Kantorovich-1942-Translocation}, the problem of approximating the OT distance between two discrete probability distributions with at most $n$ components is equivalent to the following linear programming problem:
\begin{align}\label{prob:OT}
&\min\limits_{X \in \br^{n \times n}} \left\langle C, X\right\rangle  \\ 
&\ \ \ \st \ \ X\one = r, \ X^\top\one = l, \ X \geq 0 \nonumber 
\end{align}
where $X$ is a \textit{transportation plan}, 
$C = (C_{ij}) \in \br_+^{n \times n}$ is a cost matrix with non negative elements, and $r$ and $l$ refer to two known probability distributions in the probability simplex $\Delta^n$. The best known practical complexity bound for~\eqref{prob:OT} is $\bigOtil(n^3)$~\citep{Pele-2009-Fast}, while the best theoretical complexity bound is $\bigOtil(n^{2.5})$~\citep{Lee-2014-Path}, achieved via interior-point methods. However, these methods are not efficient in the high-dimensional setting of OT applications in machine learning. This motivates the \emph{entropic regularized OT} problem~\citep{Cuturi-2013-Sinkhorn}:
\begin{align}\label{prob:regOT}
&\min\limits_{X \in \br_{+}^{n \times n}}  \left\langle C, X\right\rangle - \eta H(X) \\
& \ \ \ \st \ \ X\one = r, \ X^\top\one = l, \nonumber
\end{align}
where $\eta > 0$ is the \textit{regularization parameter} and 
$H(X)$ is the entropic regularization given by $H(X) \ : = \ - \sum_{i, j = 1}^n X_{ij} \log(X_{ij})$. The main focus of the paper is to determine an  \emph{$\varepsilon$-approximate transportation plan} $\hat{X} \in \br_{+}^{n \times n}$ such that $\hat{X}\one = r$ and $\hat{X}^\top\one = l$ and the following bound holds:
\begin{equation}\label{Criteria:Approximation}
\langle C, \hat{X}\rangle \ \leq \ \langle C, X^*\rangle + \varepsilon, 
\end{equation}
where $X^*$ is an optimal solution; i.e., an optimal transportation plan for the OT problem~\eqref{prob:OT}. To ease the ensuing presentation, we let $\langle C, \hat{X}\rangle$ denote an \emph{$\varepsilon$-approximation} for the OT distance. Furthermore, we define matrix $A$ such that $A\text{vec}(X) : = \begin{pmatrix} X\one \\ X^\top\one \end{pmatrix}$ for any $X \in \br^{n \times n}.$

\subsection{Dual Entropic Regularized OT}
\label{Sec:dual_reg_OT}
The Lagrangian function for problem~\eqref{prob:regOT} is given by
\begin{align*}
\LCal(X, \alpha, \beta) : = &\left\langle C, X\right\rangle - \eta H(X)+ \langle\alpha, r\rangle\\
& + \langle\beta, l\rangle - \langle\alpha, X\one\rangle - \langle\beta, X^\top\one\rangle.
\end{align*}
Given the Lagrangian function, the dual form of the entropic regularized OT problem can be obtained by solving the optimization problem $\min_{X \in \Rspace^{n \times n}} \LCal(X, \alpha, \beta)$. Since the Lagrangian function $\LCal(\cdot, \alpha, \beta)$ is strictly convex, that optimization problem can be solved by setting $\partial_X \LCal(X, \alpha, \beta) = 0$, which leads to the following form of the transportation plan: $X_{ij} \ = \ e^{\frac{-C_{ij} + \alpha_i + \beta_j}{\eta} - 1}$ for all $i, j \in [n]$. With this solution, we have $\min_{X \in \br^{n \times n}} \LCal(X, \alpha, \beta) \ = \ -\eta\sum_{i,j=1}^n e^{- \frac{C_{ij} - \alpha_i - \beta_j}{\eta}-1} + \left\langle \alpha, r\right\rangle + \left\langle \beta, l\right\rangle$. 
The \emph{dual entropic regularized OT} problem is, therefore, equivalent to the following optimization problem:
\begin{align}
\label{eq:dual_entropic}
    \lefteqn{\min_{\alpha, \beta \in \br^{n}} \varphi(\alpha, \beta)}  \nonumber \\ 
     &:= \eta \sum_{i,j=1}^n e^{- \frac{C_{ij} - \alpha_i - \beta_j}{\eta}-1} - \left\langle \alpha, r\right\rangle - \left\langle \beta, l\right\rangle
\end{align}
Building on Lemma 4.1 in~\citep{lin2019efficient}, the dual objective function $\varphi(\alpha, \beta)$ can be shown to be smooth with respect to $\|\cdot\|_{2}$ norm:
\begin{lemma}
\label{lemma:dual_smooth}
 The dual objective function $\varphi$ is smooth with respect to $\|.\|_{2}$ norm:
 \begin{equation*}
\varphi(\lambda_1)-\varphi(\lambda_2)-\left \langle \nabla \varphi(\lambda_2), \lambda_1 - \lambda_2 \right \rangle \leq \frac{2}{\eta}||\lambda_1-\lambda_2||^2_2.
 \end{equation*}
\end{lemma}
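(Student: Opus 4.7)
The plan is to build on Lemma~4.1 of \citep{lin2019efficient}, which establishes smoothness of $\varphi$ with respect to the $\|\cdot\|_\infty$-norm, and then convert the conclusion to the $\|\cdot\|_2$-norm via a trivial norm comparison. The alternative would be to bound the spectral radius of $\nabla^2\varphi$ directly, but routing through the existing lemma lets us inherit bookkeeping that was already carried out in that reference.

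First, I would invoke Lemma~4.1 of \citep{lin2019efficient}, which applies here since $\varphi$ is exactly the dual entropic-OT objective studied there, to obtain the descent-lemma form
\[
\varphi(\lambda_1) - \varphi(\lambda_2) - \langle \nabla\varphi(\lambda_2),\, \lambda_1 - \lambda_2\rangle \ \leq \ \frac{2}{\eta}\,\|\lambda_1 - \lambda_2\|_\infty^2.
\]
Second, since $\|v\|_\infty \leq \|v\|_2$ for every $v \in \br^{2n}$, squaring gives $\|\lambda_1-\lambda_2\|_\infty^2 \leq \|\lambda_1-\lambda_2\|_2^2$. Substituting this into the displayed bound produces exactly the inequality asserted in Lemma~\ref{lemma:dual_smooth}, so the only non-trivial step is the first one.

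If one wished to produce a self-contained proof rather than citing \citep{lin2019efficient}, the main obstacle would be the Hessian estimate itself. A direct computation with $B_{ij} := e^{-(C_{ij}-\alpha_i-\beta_j)/\eta - 1}$ gives
\[
\nabla^2\varphi(\alpha,\beta) \ = \ \frac{1}{\eta}\begin{pmatrix}\diag(B\one) & B \\ B^\top & \diag(B^\top\one)\end{pmatrix},
\]
so $(x,y)^\top\nabla^2\varphi(\alpha,\beta)(x,y) = \frac{1}{\eta}\sum_{i,j}B_{ij}(x_i+y_j)^2$, which is at most $\frac{2}{\eta}\sum_{i,j}B_{ij}(x_i^2+y_j^2)$ by $(a+b)^2 \leq 2(a^2+b^2)$. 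Bounding the row and column sums of $B$ by $\|B\|_1$ and using the iterate-level bound $\|B\|_1 \leq 2$ then yields $\nabla^2\varphi \preceq \frac{4}{\eta}I_{2n}$, and a Taylor expansion along the segment from $\lambda_2$ to $\lambda_1$ recovers the descent inequality. This iterate boundedness is precisely what Lemma~4.1 of \citep{lin2019efficient} already packages for us, which is why the short proof above is sufficient.
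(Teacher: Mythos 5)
Your proof follows essentially the same route as the paper's: invoke Lemma~4.1 of \citep{lin2019efficient} for smoothness in the $\|\cdot\|_\infty$-norm and then pass to the $\|\cdot\|_2$-norm via $\|v\|_\infty \leq \|v\|_2$. The only detail the paper spells out that you fold into the citation is the computation $\|A\|_1 = 2$ (each column of $A$ has exactly two unit entries), which turns the cited constant $\frac{\|A\|_1^2}{2\eta}$ into the stated $\frac{2}{\eta}$; your optional Hessian sketch is a reasonable self-contained alternative but is not needed.
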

\begin{proof}
The proof is straightforward application of the result from Lemma 4.1 in~\citep{lin2019efficient}. Here, we provide the details of this proof for the completeness. Indeed, invoking Lemma 4.1 in~\citep{lin2019efficient}, we find that 
\begin{equation*}
\varphi(\lambda_1)-\varphi(\lambda_2)-\left \langle \nabla \varphi(\lambda_2), \lambda_1 - \lambda_2 \right \rangle \leq \frac{||A||_1^2}{2\eta}||\lambda_1-\lambda_2||^2_\infty.
 \end{equation*}
Since $||A||_1$ is equal to the maximum $\ell_1$-norm of a column of A and each column of A contains only two nonzero elements which are equal to one, we have $||A||_1 = 2$. Combining with the fact that $||\lambda_1-\lambda_2||^2_\infty \leq ||\lambda_1-\lambda_2||^2_2 $, we establish the result.
\end{proof}

\section{Accelerated Primal-Dual Coordinate Descent Algorithms}
\label{Sec:APDRCD_algorithm}
In this section, we present and analyze an accelerated primal-dual coordinate descent algorithms to obtain an $\varepsilon$-approximate transportation plan for the OT problem~\eqref{prob:OT}. First, in Section~\ref{sec:algorithmic_framework}, we introduce the accelerated primal-dual randomized coordinate descent (APDRCD) method for the entropic regularized OT problem. Then, following the approximation scheme of~\citep{Altschuler-2017-Near}, we show how to approximate the OT distance within the APDRCD algorithm; see Algorithm~\ref{Algorithm:ApproxOT_APDCD} for the pseudo-code for this problem. Furthermore, we provide theoretical analysis to establish a complexity bound of $\mathcal{O}(\frac{n^{\frac{5}{2}}\sqrt{||C||_\infty\log(n)}}{\varepsilon})$ for the APDRCD algorithm to achieve an $\varepsilon$-approximate transportation plan for the OT problem in Section~\ref{sec:complexity}. This complexity upper bound of the APDRCD algorithm matches the best known complexity bounds of the APDAGD~\citep{Dvurechensky-2018-Computational} and APDAMD algorithms~\citep{lin2019efficient}. Finally, to further improve the practical performance of the algorithm, we present a greedy variant---the accelerated primal-dual greedy coordinate descent (APDGCD) algorithm---in Section~\ref{Sec:APDGCD}. 

\begin{algorithm}[!ht]
\caption{APDRCD ($C, \eta, A, b, \varepsilon'$)}
\label{Algorithm:APDCD}
\textbf{Input:}
$\{\theta_i| \theta_0=1, \frac{1-\theta_{i+1}}{\theta_{i+1}^2}=\frac{1}{\theta_i^2}\},  C_0 = 1, \lambda^0 = z^0 = k =0, L = \frac{4}{\eta}$ \\ 
\While{$||Ax^k-b||_1 > \varepsilon'$}
{Set $y^k = (1-\theta_k)\lambda^k+\theta_k z^k$ \\  
Compute $x^{k} = \frac{1}{C_{k}} \biggr(\sum_{j = 0}^{k} \dfrac{x(y^j)}{\theta_{j}} \biggr)$ \\
\textbf{Randomly sample one coordinate} $i_k$  \textbf{where} $i_k \in \{1,2,..., 2n\}$:\\
Update
\begin{equation}\label{a2}
    \lambda^{k+1}_{i_k} = y^k_{i_k} - \frac{1}{L}\nabla_{i_k}\varphi(y^k)
\end{equation}  \\
Update
\begin{equation}\label{a3}
    z^{k+1}_{i_k} = z^k_{i_k}-\frac{1}{2n L \theta_k} \nabla_{i_k} \varphi(y^k)
\end{equation}   \\
Update $k = k+1$ and $C_k = C_k  + \frac{1}{\theta_k}$
}
\textbf{Output:} $X^k$ where $x^k = vec(X^k)$
\end{algorithm}
\subsection{Accelerated Primal-Dual Randomized Coordinate Descent (APDRCD)}
\label{sec:algorithmic_framework}
We denote by $L$ the Lipschitz constant for the dual objective function $\varphi$, which means that $L \defeq \frac{4}{\eta}$, and $x(\lambda) : = \mathop{\arg \max}\limits_{x \in \br^{n \times n}} \biggr\{ - \left\langle C, x\right\rangle - \left\langle A^{\top}\lambda, x\right\rangle \biggr\}$. The APDRCD algorithm is initialized with the auxiliary sequence $\{\theta_i\}$ and two auxiliary dual variable sequences $\{\lambda_i\}$ and $\{\z_i\}$, where the first auxiliary sequence $\{\theta_k\}$ is used for the key averaging step and the two dual variable sequences are used to perform the accelerated randomized coordinate descent on the dual objective function $\varphi$ as a subroutine. The APDRCD algorithm is composed of two main parts. First, exploiting the convexity property of the dual objective function, we perform a randomized accelerated coordinate descent step on the dual objective function as a subroutine in step \ref{a2} and  \ref{a3}. In the second part, we take a weighted average over the past iterations to get a good approximate solution for the primal problem from the approximate solutions to the dual problem~\eqref{eq:dual_entropic}. Notice that the auxiliary sequence $\{\theta_k\}$ is decreasing and the primal solutions corresponding to the more recent dual solutions have larger weight in this average. 

\begin{algorithm}[!ht]
\caption{Approximating OT by APDRCD} \label{Algorithm:ApproxOT_APDCD}
\begin{algorithmic}
\STATE \textbf{Input:} $\eta = \dfrac{\varepsilon}{4\log(n)}$ and $\varepsilon'=\dfrac{\varepsilon}{8\left\|C\right\|_\infty}$. 
\STATE \textbf{Step 1:} Let $\tilde{r} \in \Delta_n$ and $\tilde{l} \in \Delta_n$ be defined as
\begin{equation*}
\left(\tilde{r}, \tilde{l}\right) = \left(1 - \frac{\varepsilon'}{8}\right) \left(r, l\right) + \frac{\varepsilon'}{8n}\left(\one, \one\right).  
\end{equation*}
\STATE \textbf{Step 2:} Let $A \in \mathbb{R}^{2n \times n^2}$ and $b \in \mathbb{R}^{2n}$ be defined by
\begin{equation*}
    Avec(X) = \icol{X\boldsymbol{1}\\X^T\boldsymbol{1}} \ \  \text{and} \ \  b = \icol{\tilde{r}\\\tilde{l}}
\end{equation*}
\STATE \textbf{Step 3:} Compute $\tilde{X} = \text{APDRCD}\left(C, \eta, A, b, \varepsilon'/2\right)$ with $\varphi$ defined in~\ref{eq:dual_entropic}.
\STATE \textbf{Step 4:} Round $\tilde{X}$ to $\hat{X}$ by Algorithm 2~\citep{Altschuler-2017-Near} such that $\hat{X}\one = r$, $\hat{X}^\top\one = l$. 
\STATE \textbf{Output:} $\hat{X}$.  
\end{algorithmic}
\end{algorithm}

\subsection{Complexity Analysis of APDRCD}
\label{sec:complexity}
Given the updates from APDRCD algorithm in Algorithm~\ref{Algorithm:APDCD}, we have the following result regarding the difference of the values of $\varphi$ at $\lambda^{k + 1}$ and $y^{k}$:
 \begin{lemma}\label{lemma2}
 Given the updates $\lambda^{k + 1}$ and $y^{k}$ from the APDRCD algorithm, we have the following inequality:
 \begin{equation*}
      \varphi(\lambda^{k+1})-\varphi(y^k) \leq -\frac{1}{2L}|\nabla_{i_k} \varphi (y^k)|^2,
 \end{equation*}
 where $i_{k}$ is chosen in the APDRCD algorithm.
 \end{lemma}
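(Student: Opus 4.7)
The plan is to use the smoothness bound from Lemma~\ref{lemma:dual_smooth} together with the fact that $\lambda^{k+1}$ and $y^k$ differ in only the single coordinate $i_k$. The key observation is that the stepsize $1/L$ with $L = 4/\eta$ is precisely calibrated so that the gradient step along one coordinate yields exactly the claimed decrease.

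First I would apply Lemma~\ref{lemma:dual_smooth} with $\lambda_1 \leftarrow \lambda^{k+1}$ and $\lambda_2 \leftarrow y^k$, which, after rearranging, gives
\begin{equation*}
\varphi(\lambda^{k+1}) \leq \varphi(y^k) + \langle \nabla\varphi(y^k), \lambda^{k+1}-y^k\rangle + \frac{2}{\eta}\|\lambda^{k+1}-y^k\|_2^2.
\end{equation*}
Identifying $L = 4/\eta$, the coefficient in front of the squared norm is $L/2$, so this is standard $L$-smoothness along every direction (in particular, along any coordinate).

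Next I would exploit the coordinate-descent structure of update~\eqref{a2}: since $\lambda^{k+1}_j = y^k_j$ for all $j \neq i_k$ and $\lambda^{k+1}_{i_k} - y^k_{i_k} = -\tfrac{1}{L}\nabla_{i_k}\varphi(y^k)$, the inner product collapses to $\langle \nabla\varphi(y^k), \lambda^{k+1}-y^k\rangle = -\tfrac{1}{L}|\nabla_{i_k}\varphi(y^k)|^2$, and the squared norm collapses to $\|\lambda^{k+1}-y^k\|_2^2 = \tfrac{1}{L^2}|\nabla_{i_k}\varphi(y^k)|^2$. Substituting these two identities into the smoothness inequality and combining terms gives
\begin{equation*}
\varphi(\lambda^{k+1}) - \varphi(y^k) \leq -\frac{1}{L}|\nabla_{i_k}\varphi(y^k)|^2 + \frac{1}{2L}|\nabla_{i_k}\varphi(y^k)|^2 = -\frac{1}{2L}|\nabla_{i_k}\varphi(y^k)|^2,
\end{equation*}
which is exactly the claim.

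This proof is essentially a textbook ``descent lemma'' argument, so no genuinely hard step is expected. The only point that warrants care is matching the smoothness constant: Lemma~\ref{lemma:dual_smooth} is stated with constant $2/\eta$, and one must verify that this is consistent with the definition $L = 4/\eta$ so that the stepsize $1/L$ really does produce the cancellation $\tfrac{1}{L} - \tfrac{1}{2L} = \tfrac{1}{2L}$. Aside from this bookkeeping, the argument is a direct plug-in calculation.
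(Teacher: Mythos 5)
Your proof is correct and follows essentially the same route as the paper: both apply the smoothness bound of Lemma~\ref{lemma:dual_smooth} (with constant $2/\eta = L/2$) to the pair $\lambda^{k+1}, y^k$, and then use the fact that these iterates differ only in coordinate $i_k$ to collapse the inner product and squared norm, yielding the cancellation $-\tfrac{1}{L} + \tfrac{1}{2L} = -\tfrac{1}{2L}$. The paper phrases the single-coordinate step via an explicit indicator vector $h(i_k)$, but the computation is identical to yours.
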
 
\begin{proof}
For convenience, we define a vector-valued function $h(i_k)\in \mathbb{R}^{2n}$ such that $h(i_k)_i=1$ if $i=i_k$, and 
$h(i_k)_i=0$ otherwise.
By the update in Eq.~\eqref{a2} of Algorithm~\ref{Algorithm:APDCD}, we obtain: 
\begin{equation} \label{lemma1eq1}
    \varphi(\lambda^{k+1})-\varphi(y^k) = \varphi \biggr( y^k-h(i_k) \cdot \frac{1}{L}\nabla_{i_k}\varphi(y^k) \biggr)-\varphi(y^k).
\end{equation}
Due to the smoothness of $\varphi$ with respect to $\|.\|_{2}$ norm in Lemma~\ref{lemma:dual_smooth}, the following inequalities hold:
\begin{align} \label{lemma1eq2}
    \lefteqn{\varphi \biggr( y^k-h(i_k)\frac{1}{L}\nabla_{i_k}\varphi(y^k) \biggr)-\varphi(y^k)} \nonumber \\
    \leq & \left \langle \nabla \varphi(y^k),- h(i_k)\frac{1}{L}\nabla_{i_k}\varphi(y^k)\right\rangle \nonumber  \\
    & \ + \frac{L}{2}||h(i_k)\frac{1}{L}(\nabla_{i_k}\varphi(y^k))||^2 \nonumber \\
    =& -\frac{1}{L}\left \langle \nabla\varphi(y^k),\nabla_{i_k}\varphi(y^k)h(i_k)\right \rangle + \frac{1}{2L}(\nabla_{i_k}\varphi(y^k))^2 \nonumber \\
    =& -\frac{1}{L}(\nabla_{i_k}\varphi(y^k))^2 + \frac{1}{2L}(\nabla_{i_k}\varphi(y^k))^2 \nonumber \\
    =& -\frac{1}{2L}(\nabla_{i_k}\varphi(y^k))^2.  
\end{align}
Combining the results of Eq.~\eqref{lemma1eq1} and Eq.~\eqref{lemma1eq2} completes the proof of the lemma.
\end{proof}
The result of Lemma~\ref{lemma2} is vital for establishing an upper bound for $\mathbb{E}_{i_k}\varphi(\lambda^{k+1})$, as shown in the following lemma.
\begin{lemma}\label{lemma 3} For each iteration ($k>0$) of the APDRCD algorithm, we have
\begin{align*}
    \lefteqn{\mathbb{E}_{i_k} \big[\varphi(\lambda^{k+1}) \big]} \\
    & \leq  (1-\theta_k)\varphi(\lambda^k) +\theta_k \big[ \varphi(y^k)+(\lambda-y^k)^T\nabla\varphi(y^k) \big] \\
    & \ +2 L^2 n^2 \theta_k^2 \biggr(||\lambda-z^k||^2 - \mathbb{E}_{i_k} \big[||\lambda-z^{k+1}||^2 \big]\biggr),
\end{align*}
where the outer expectation in the above display is taken with respect to the random coordinate $i_{k}$ in Algorithm~\ref{Algorithm:APDCD}.
\end{lemma}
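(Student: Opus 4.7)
The plan is to combine the per-iteration descent estimate of Lemma~\ref{lemma2} with the convexity of the dual objective $\varphi$ and an exact identity for the one-step change in $\|\lambda-z^k\|^2$ arising from the separable $z$-update, following the standard template for analyzing accelerated randomized coordinate descent. As the opening step I would take the expectation of the bound in Lemma~\ref{lemma2} over the uniform random index $i_k\in\{1,\dots,2n\}$; since each coordinate is selected with probability $1/(2n)$, this converts the coordinate-wise squared gradient into the full $\ell_2$-norm and yields
\[
\mathbb{E}_{i_k}[\varphi(\lambda^{k+1})] \leq \varphi(y^k) - \tfrac{1}{4nL}\|\nabla\varphi(y^k)\|_2^2.
\]

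Next, I would use convexity of $\varphi$ at $\lambda^k$ relative to $y^k$, namely $\varphi(\lambda^k)\geq \varphi(y^k)+\nabla\varphi(y^k)^T(\lambda^k-y^k)$, together with the algebraic identity $(1-\theta_k)(y^k-\lambda^k)=\theta_k(z^k-y^k)$ that follows directly from $y^k=(1-\theta_k)\lambda^k+\theta_k z^k$. Rearranging gives $\varphi(y^k)\leq (1-\theta_k)\varphi(\lambda^k)+\theta_k\varphi(y^k)+\theta_k\nabla\varphi(y^k)^T(z^k-y^k)$. Splitting $z^k-y^k=(z^k-\lambda)+(\lambda-y^k)$ then reproduces the target linearization $\theta_k[\varphi(y^k)+\nabla\varphi(y^k)^T(\lambda-y^k)]$ and leaves a residual cross term $\theta_k\nabla\varphi(y^k)^T(z^k-\lambda)$ to be absorbed into the $z$-distance gap.

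Finally, using the separable update $z^{k+1}_{i_k}=z^k_{i_k}-\tfrac{1}{2nL\theta_k}\nabla_{i_k}\varphi(y^k)$, I would expand $\|\lambda-z^{k+1}\|^2-\|\lambda-z^k\|^2$ coordinate-by-coordinate; because only the $i_k$-th coordinate changes, this is a pointwise identity that is linear in $(z^k_{i_k}-\lambda_{i_k})\nabla_{i_k}\varphi(y^k)$ and quadratic in $\nabla_{i_k}\varphi(y^k)$. Taking expectation over $i_k$ turns these into $(z^k-\lambda)^T\nabla\varphi(y^k)$ and $\|\nabla\varphi(y^k)\|_2^2$ with the expected $1/(2n)$ factors, and multiplying the resulting equality by the prefactor $2L^2n^2\theta_k^2$ should simultaneously reconstruct the cross term $\theta_k\nabla\varphi(y^k)^T(z^k-\lambda)$ from the previous step and combine favorably with the $-\tfrac{1}{4nL}\|\nabla\varphi(y^k)\|_2^2$ contribution from Lemma~\ref{lemma2}. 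Assembling the three bounds then delivers the stated inequality.

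The main obstacle, and really the only delicate step, is the constant accounting at this last stage: the prefactor $2L^2n^2\theta_k^2$, the $z$-step size $\tfrac{1}{2nL\theta_k}$, and the coordinate descent step size $\tfrac{1}{L}$ must line up so that the two sources of $\|\nabla\varphi(y^k)\|_2^2$ combine with a nonpositive net coefficient while the coupling term is reproduced with coefficient exactly $\theta_k$. This is precisely the calculation that pins down the choice of step size in the $z$-update and is where the specific form of the Lipschitz constant $L=4/\eta$ from Lemma~\ref{lemma:dual_smooth} enters essentially.
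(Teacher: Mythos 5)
Your proposal is correct and follows essentially the same route as the paper's proof: it combines Lemma~\ref{lemma2}, convexity of $\varphi$ through the identity $\theta_k(\lambda-z^k)=\theta_k(\lambda-y^k)+(1-\theta_k)(\lambda^k-y^k)$, and the exact coordinate-wise expansion of $\|\lambda-z^{k+1}\|^2$ induced by the $z$-update, differing only in that you take the expectation over $i_k$ at the outset rather than at the end. On the constant accounting you single out as delicate: the bookkeeping closes exactly with prefactor $2Ln^2\theta_k^2$ (the linear part of the expansion reproduces $\theta_k\nabla\varphi(y^k)^{\top}(z^k-\lambda)$ and the quadratic part exactly cancels the $-\tfrac{1}{4nL}\|\nabla\varphi(y^k)\|_2^2$ from Lemma~\ref{lemma2}), which is precisely the coefficient the paper's own proof derives; the $L^2$ appearing in the lemma statement seems to be a typo, so do not be troubled when the $2L^2n^2\theta_k^2$ version fails to balance.
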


The proof of Lemma~\ref{lemma 3} is in Appendix~\ref{subsec:proof:lemm_3}. Now, equipped with the result of Lemma~\ref{lemma 3}, we are ready to provide the convergence guarantee and complexity bound of the APDRCD algorithm for approximating the OT problem. First, we start with the following result regarding an upper bound on $k$ to reach the stopping rule $||A \text{vec}(X^k)-b||_1 \leq \varepsilon'$ for $\varepsilon' = \dfrac{\varepsilon}{8\left\|C\right\|_\infty}$. Here, the outer expectation is taken with respect to the random coordinates $i_{j}$ in Algorithm~\ref{Algorithm:APDCD} for $1 \leq j \leq k$.
\begin{theorem} \label{thm: OTconvergence}
The APDRCD algorithm for approximating optimal transport (Algorithm~\ref{Algorithm:ApproxOT_APDCD}) returns an output $X^k$ that satisfies the stopping criterion $\mathbb{E} \big[||A \text{vec}(X^k)-b||_1\bigr] \leq \varepsilon'$ in a number of iterations k bounded as follows:
\begin{equation*}
    k \leq 12 n^{\frac{3}{2}}\sqrt{\frac{R+1/2}{\varepsilon}} +1,
\end{equation*}
where $R \defeq \frac{||C||_\infty}{\eta}+ \log(n) - 2 \log(\min \limits_{1\leq i,j\leq n} \{r_i,l_i\})$. Here, $\varepsilon'$ and $\eta$ are chosen in Algorithm~\ref{Algorithm:ApproxOT_APDCD}.
\end{theorem}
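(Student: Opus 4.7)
The plan is to iterate Lemma~\ref{lemma 3} into a telescoping potential, pass to the total expectation, and then translate dual suboptimality into the primal feasibility residual $\|A\,\text{vec}(X^k)-b\|_1$ via the weighted-averaging rule that defines $x^k$.

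First, I would use convexity of $\varphi$ to bound the linearisation term in Lemma~\ref{lemma 3} by $\varphi(\lambda)$, subtract $\varphi(\lambda)$ from both sides, and divide by $\theta_k^2$. The auxiliary recurrence $\tfrac{1-\theta_{k+1}}{\theta_{k+1}^2}=\tfrac{1}{\theta_k^2}$ makes the coefficient of $\varphi(\lambda^k)-\varphi(\lambda)$ on the right match that of $\varphi(\lambda^{k+1})-\varphi(\lambda)$ on the left in the next step, so summing from $0$ to $k-1$ and using $\theta_0=1$ yields a telescoping inequality of the form
\begin{equation*}
\tfrac{1}{\theta_{k-1}^2}\bigl(\mathbb{E}[\varphi(\lambda^k)]-\varphi(\lambda)\bigr)\;\leq\; 2L^2n^2\,\|\lambda-z^0\|_2^2,
\end{equation*}
for every $\lambda$. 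Since the standard $\theta$-recurrence implies $\theta_{k-1}\leq 2/(k+1)$, this gives an $O(n^2 L^2/k^2)$ bound on the expected dual gap against any comparison point $\lambda$.

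Second, I would connect this bound to the primal residual. The key point (as in Nesterov's primal--dual averaging and in the APDAGD analysis of~\citep{Dvurechensky-2018-Computational}) is that the weighted iterate $x^k = C_k^{-1}\sum_{j=0}^k x(y^j)/\theta_j$, combined with the Fenchel-duality identity $\varphi(\lambda)=\max_{x}\{-\langle C,x\rangle-\langle A^\top\lambda,x\rangle\}+\langle b,\lambda\rangle$ evaluated at each $y^j$, produces the inequality
\begin{equation*}
\langle \lambda-\lambda^*,\,A\,\text{vec}(x^k)-b\rangle \;\leq\; \mathbb{E}[\varphi(\lambda^k)]-\varphi(\lambda^*)
\end{equation*}
for every $\lambda$. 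Maximising the left-hand side over $\lambda-\lambda^*$ on the unit $\ell_\infty$-ball reproduces $\mathbb{E}\|A\,\text{vec}(x^k)-b\|_1$, and combining with the telescoping bound above (with $\lambda$ chosen on a ball of radius $\|\lambda^*\|_\infty+1$ around $\lambda^*$) yields
\begin{equation*}
\mathbb{E}\bigl\|A\,\text{vec}(X^k)-b\bigr\|_1 \;\leq\; \frac{c\,n^2 L\,(\|\lambda^*\|_\infty+1)}{(k-1)^2},
\end{equation*}
for an absolute constant $c$.

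Third, I would bound $\|\lambda^*\|_\infty$. This is the standard a priori estimate for the entropic-OT dual optimum (see Lemma~3.2 in~\citep{Dvurechensky-2018-Computational} or the analogous statement in~\citep{lin2019efficient}): the KKT conditions applied to the shifted marginals $\tilde r,\tilde l$ of Step~1 of Algorithm~\ref{Algorithm:ApproxOT_APDCD} give $\|\lambda^*\|_\infty \leq R=\|C\|_\infty/\eta+\log n-2\log\min_{i,j}\{r_i,l_j\}$. Plugging in $L=4/\eta$ and $\eta=\varepsilon/(4\log n)$, setting the right-hand side equal to $\varepsilon'=\varepsilon/(8\|C\|_\infty)$, and solving for $k$ produces the announced bound $k\leq 12\,n^{3/2}\sqrt{(R+1/2)/\varepsilon}+1$ after absorbing constants.

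The main obstacle is the second step: one needs the Fenchel-duality identity that turns the weighted average of primal candidates $x(y^j)$ into a genuine primal iterate whose constraint violation is controlled by the accumulated dual gap. Once that identity is in place, the telescoping from Lemma~\ref{lemma 3}, the a priori dual bound~$R$, and the choice of $\eta,\varepsilon'$ combine mechanically to give the stated iteration complexity.
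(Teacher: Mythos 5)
Your roadmap matches the paper's (telescope Lemma~\ref{lemma 3} via the $\theta$-recurrence, convert to the primal residual through the weighted average $x^k$ and the identity $\nabla\varphi(y^i)=b-Ax(y^i)$, invoke an a priori bound on the dual optimum, then substitute $\eta$ and $\varepsilon'$), but the way you arrange the first two steps creates a genuine gap. In your first step you apply convexity at the \emph{free} comparison point $\lambda$ to replace $\varphi(y^k)+\langle\nabla\varphi(y^k),\lambda-y^k\rangle$ by $\varphi(\lambda)$ before telescoping. That collapses the recursion into a pure dual-gap bound $\mathbb{E}[\varphi(\lambda^k)]-\varphi(\lambda)\lesssim n^2L\theta_{k-1}^2\|\lambda-z^0\|^2$, and in doing so it discards exactly the terms you need in step two: the primal candidates $x(y^j)$ enter the analysis only through those linearisation terms, via $\varphi(y^i)+\langle\nabla\varphi(y^i),\lambda-y^i\rangle=-f(x(y^i))+\langle\lambda,b-Ax(y^i)\rangle$. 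The paper keeps the sum $\sum_i\theta_i^{-1}[\varphi(y^i)+\langle\nabla\varphi(y^i),\lambda-y^i\rangle]$ intact, rewrites it as $C_k(-f(\mathbb{E}[x^k])+\langle\lambda,b-A\mathbb{E}[x^k]\rangle)$ using convexity of $f$, and only then minimises over $\lambda$ in a ball of radius $2\hat R$; convexity of $\varphi$ is used only at the fixed point $\lambda^k$ inside the proof of Lemma~\ref{lemma 3}. You cannot use the same terms both ways, so as written your step two does not follow from your step one.

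Relatedly, the inequality you assert in step two, $\langle\lambda-\lambda^*,A\,\mathrm{vec}(x^k)-b\rangle\leq\mathbb{E}[\varphi(\lambda^k)]-\varphi(\lambda^*)$ for every $\lambda$, cannot hold as stated (the left side is unbounded in $\lambda$ while the right side is a fixed number), and its correct, restricted form is precisely the content you defer as ``the main obstacle.'' You are also missing the weak-duality step: after minimising over $\lambda\in\mathbb{B}_2(2\hat R)$ one is left with $f(\mathbb{E}[x^k])+\mathbb{E}[\varphi(\lambda^{k+1})]+2\hat R\,\mathbb{E}\|Ax^k-b\|_2\leq 8Ln^2\hat R^2/C_k$, and one must lower-bound $f(\mathbb{E}[x^k])+\mathbb{E}[\varphi(\lambda^{k+1})]\geq-\hat R\,\mathbb{E}\|Ax^k-b\|_2$ using the optimal dual $\lambda^*$ and H\"older before the residual can be isolated. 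Finally, the induction $\theta_k\leq 2/(k+2)$, hence $C_k\geq(k+1)^2/4$, and the $\ell_2$-to-$\ell_1$ conversion are needed to reach the stated constant; these are routine but should appear explicitly.
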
 
The proof of Theorem~\ref{thm: OTconvergence} is provided in Appendix~\ref{subsec:proof:thm: OTconvergence}. Given an upper bound on $k$ for the stopping rule in Theorem~\ref{thm: OTconvergence} where $\varepsilon' = \dfrac{\varepsilon}{8\left\|C\right\|_\infty}$ in Theorem~\ref{thm: OTconvergence}, we obtain the following complexity bound for the APDRCD algorithm. 
\begin{theorem}
\label{theorem:complex_APDRCD}
The APDRCD algorithm for approximating optimal transport (Algorithm \ref{Algorithm:ApproxOT_APDCD}) returns $\hat{X}\in \mathbb{R}^{n\times n}$ satisfying $\hat{X}\textbf{1} = r$, $\hat{X}^T\textbf{1} = l$ and $\mathbb{E} [\langle C, \hat{X} \rangle ] - \left \langle C, X^\ast \right \rangle  \leq \epsilon$ in a total number of  
\begin{equation*}
    \mathcal{O} \biggr(\frac{n^{\frac{5}{2}}\sqrt{||C||_\infty\log(n)}}{\varepsilon}\biggr)
\end{equation*}
arithmetic operations.
\end{theorem}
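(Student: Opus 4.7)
The plan is to reduce Theorem~\ref{theorem:complex_APDRCD} to three pieces: (i) the iteration bound already established in Theorem~\ref{thm: OTconvergence}, (ii) a bookkeeping bound on per-iteration arithmetic cost, and (iii) the standard rounding argument of~\citet{Altschuler-2017-Near} that converts an approximately feasible primal iterate into an exactly feasible transportation plan with controlled excess cost. With the regularization and tolerance fixed at $\eta = \varepsilon/(4\log n)$ and $\varepsilon' = \varepsilon/(8\|C\|_\infty)$ as prescribed in Algorithm~\ref{Algorithm:ApproxOT_APDCD}, the expected objective suboptimality of the final output $\hat X$ decomposes into an entropic bias of order $\eta\log n$, a rounding penalty of order $\|C\|_\infty \cdot \mathbb{E}\|A\text{vec}(\tilde X)-b\|_1$, and the APDRCD termination tolerance; both of the first two are $\leq \varepsilon/2$ by the choices of $\eta$ and $\varepsilon'$.

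First I would substitute $\eta = \varepsilon/(4\log n)$ into $R \defeq \|C\|_\infty/\eta + \log n - 2\log\min_{i,j}\{r_i,l_i\}$. The pre-processing step $(\tilde r,\tilde l) = (1-\varepsilon'/8)(r,l) + (\varepsilon'/(8n))(\one,\one)$ guarantees $\min\{\tilde r_i, \tilde l_i\} \geq \varepsilon'/(8n)$, so the boundary term contributes only $\mathcal{O}(\log n + \log(\|C\|_\infty/\varepsilon))$ and is dominated by $\|C\|_\infty/\eta = 4\|C\|_\infty\log(n)/\varepsilon$. Hence $R + 1/2 = \mathcal{O}(\|C\|_\infty \log(n)/\varepsilon)$, and plugging this into the bound $k \leq 12 n^{3/2}\sqrt{(R+1/2)/\varepsilon} + 1$ from Theorem~\ref{thm: OTconvergence} yields iteration complexity
\begin{equation*}
k \;=\; \mathcal{O}\!\left(\frac{n^{3/2}\sqrt{\|C\|_\infty \log(n)}}{\varepsilon}\right).
\end{equation*}

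Next I would account for the per-iteration cost. A single iteration samples $i_k \in [2n]$, computes one coordinate $\nabla_{i_k}\varphi(y^k)$ as a sum of $n$ exponentials along a row or column of $-(C-\alpha\one^\T-\one\beta^\T)/\eta$ in $\mathcal{O}(n)$ time, and updates $\lambda^{k+1}_{i_k}$ and $z^{k+1}_{i_k}$ in $\mathcal{O}(1)$ time. One never needs to materialize the $n\times n$ primal iterate $x^k$ during the loop: the stopping criterion depends only on the marginals $A\text{vec}(X^k) \in \mathbb{R}^{2n}$, which can be maintained incrementally in $\mathcal{O}(n)$ per iteration using the recursive formula for $x^k$ and the rank-one structure of the change in $y^k$. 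Thus each iteration costs $\mathcal{O}(n)$; the final matrix $\tilde X$ and the Altschuler rounding are produced once in $\mathcal{O}(n^2)$ time, which is absorbed by the total $k\cdot\mathcal{O}(n) = \mathcal{O}(n^{5/2}\sqrt{\|C\|_\infty\log(n)}/\varepsilon)$.

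Finally, I would close the accuracy side by applying Lemma~7 of~\citet{Altschuler-2017-Near}: rounding $\tilde X$ to $\hat X$ with exact marginals costs an extra $\langle C, \hat X - \tilde X\rangle \leq 2\|C\|_\infty\,\|A\text{vec}(\tilde X) - b\|_1$, and by the stopping rule in expectation this is at most $2\|C\|_\infty \cdot \varepsilon' = \varepsilon/4$; combined with the entropic bias $\eta\log n \leq \varepsilon/4$ and the standard triangle-inequality argument against the optimum $X^*$, one obtains $\mathbb{E}[\langle C,\hat X\rangle] - \langle C, X^*\rangle \leq \varepsilon$. The main obstacle I anticipate is the expectation bookkeeping: Theorem~\ref{thm: OTconvergence} gives a bound only on $\mathbb{E}\|A\text{vec}(X^k)-b\|_1$, and the rounding lemma as written applies pathwise, so one must either invoke a Markov-type argument or check that the Altschuler bound is linear in the infeasibility so that taking expectation on both sides preserves the inequality. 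Everything else is a bookkeeping combination of the iteration bound, the per-iteration cost, and the preset values of $\eta, \varepsilon'$.
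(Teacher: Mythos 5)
Your proposal is correct and follows essentially the same route as the paper's proof: substitute the prescribed $\eta$ and $\varepsilon'$ into the definition of $R$, invoke the iteration bound of Theorem~\ref{thm: OTconvergence}, multiply by the $\mathcal{O}(n)$ per-iteration cost (which you in fact justify more explicitly than the paper does), and control the final accuracy by the entropic bias plus a rounding penalty that is linear in the marginal infeasibility and therefore survives taking expectations. The one point worth flagging is that the paper's own proof substitutes $\varepsilon'$ rather than $\varepsilon$ under the square root of the iteration bound (consistent with the stopping criterion being $\varepsilon'$), which produces $\|C\|_\infty\sqrt{\log n}$ rather than the $\sqrt{\|C\|_\infty\log n}$ that you and the theorem statement report---an internal inconsistency of the paper in tracking the $\|C\|_\infty$ dependence rather than a flaw in your argument.
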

The proof of Theorem~\ref{theorem:complex_APDRCD} is provided in Appendix~\ref{subsec:proof:theorem:complex_APDRCD}. We show in Appendix~\ref{subsec:proof:theorem:complex_APDRCD} that Theorem~\ref{theorem:complex_APDRCD} also directly implies a complexity bound for obtaining an $\epsilon$-optimal solution with high probability. Theorem~\ref{theorem:complex_APDRCD} indicates that the complexity upper bound of APDRCD matches the best known complexity $\bigOtil(\frac{n^{5/2}}{\varepsilon})$ of the APDAGD~\citep{Dvurechensky-2018-Computational} and APDAMD~\citep{lin2019efficient} algorithms. Furthermore, that complexity of APDRCD is better than that of the Sinkhorn and Greenkhorn algorithms, which is $\bigOtil(\frac{n^{2}}{\varepsilon^2})$, in terms of the desired accuracy $\varepsilon$. Later in Section~\ref{Sec:experiments}, we demonstrate that the APDRCD algorithm also has better practical performance than APDAGD and APDAMD algorithms on both synthetic and real datasets.

\begin{algorithm}[h]
\caption{APDGCD ($C, \eta, A, b, \varepsilon'$)}
\label{Algorithm:APDGCD}
\textbf{Input:} $\{\theta_i| \theta_0=1, \frac{1-\theta_{i+1}}{\theta_{i+1}^2}=\frac{1}{\theta_i^2}\},  C_0 = 1, \lambda^0 = z^0 = k =0, L = \frac{4}{\eta}$ \\
\While{$||Ax^k-b||_1 > \varepsilon'$}
{Set $y^k = (1-\theta_k)\lambda^k+\theta_k z^k$ \\
Compute $x^{k} = \frac{1}{C_{k}} \biggr(\sum_{j = 0}^{k} \dfrac{x(y^j)}{\theta_{j}} \biggr)$ \\
\textbf{Select coordinate} $i_k =\argmax \limits_{i_k \in \{1,2,..., 2n\} } |\nabla_{i_k}\varphi(y^k)|$:
Update 
\begin{align*}
    \lambda^{k+1}_{i_k} = y^k_{i_k} - \frac{1}{L}\nabla_{i_k}\varphi(y^k)
\end{align*}
Update 
\begin{align*}
    z^{k+1}_{i_k} = z^k_{i_k}-\frac{1}{2n L \theta_k} \nabla_{i_k} \varphi(y^k)
\end{align*} 
Update $k = k+1$ and $C_k = C_k  + \frac{1}{\theta_k}$
}
\textbf{Output:} $X^k$ where $x^k = vec(X^k)$
\end{algorithm}


\begin{figure*}[!ht]
\begin{minipage}[b]{.5\textwidth}
\includegraphics[width=65mm,height=48mm]{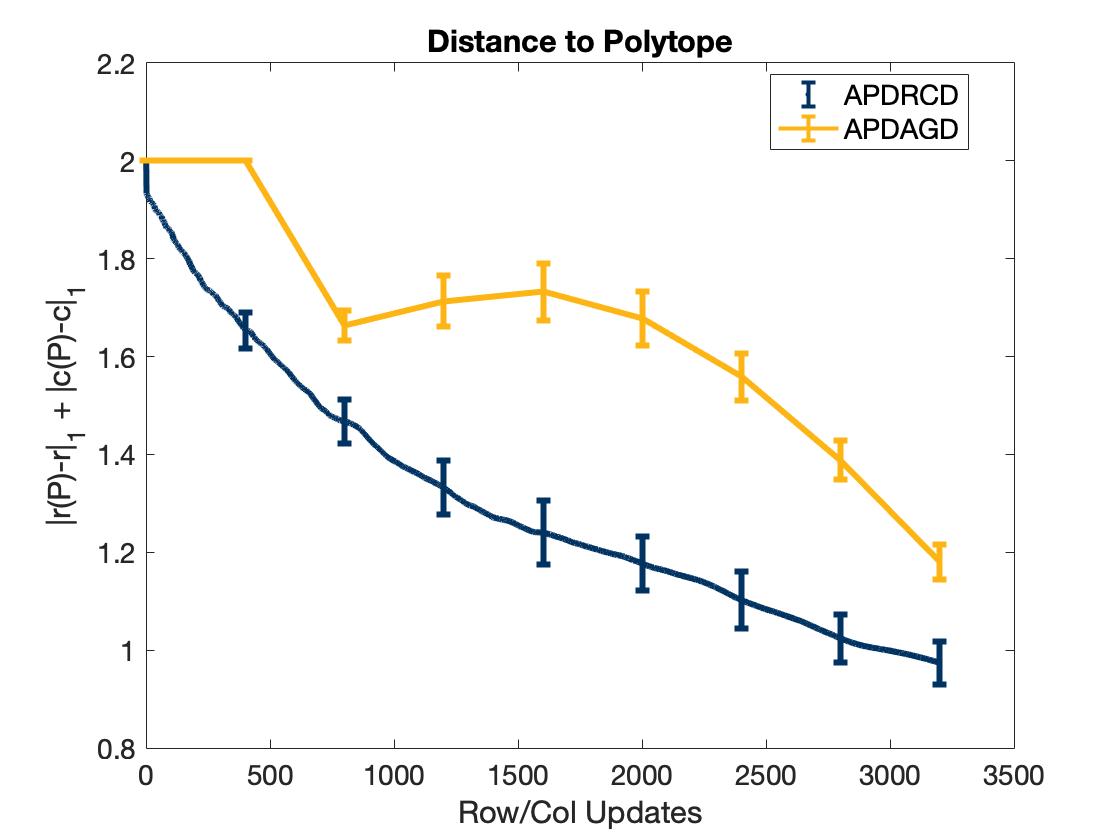}
\end{minipage}
\quad
\begin{minipage}[b]{.5\textwidth}
\includegraphics[width=65mm,height=48mm]{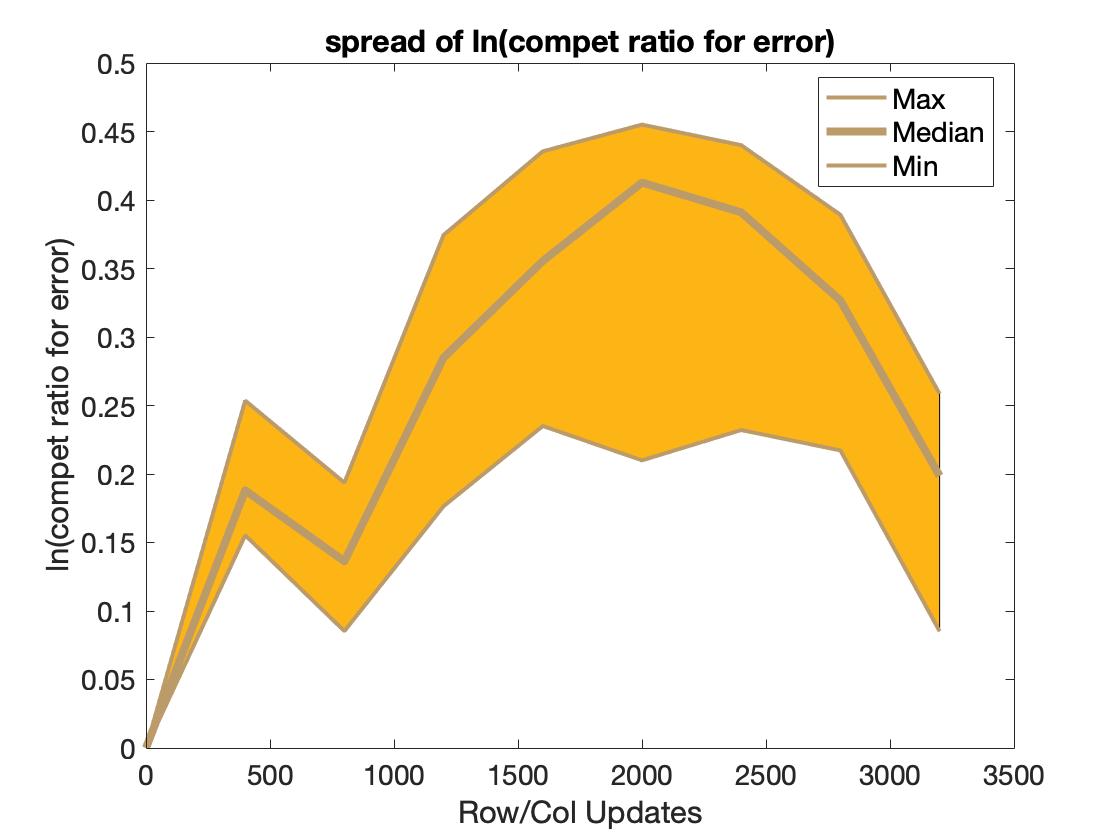}
\end{minipage}
\\
\begin{minipage}[b]{.5\textwidth}
\includegraphics[width=65mm,height=48mm]{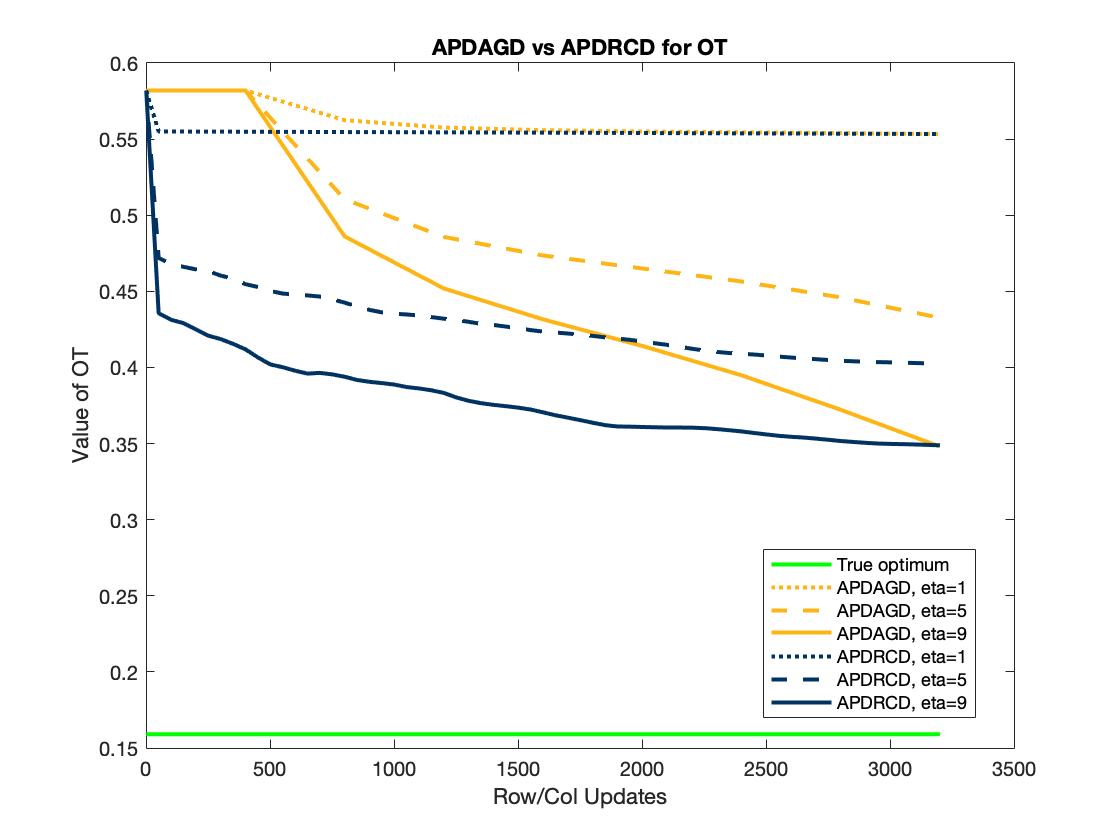}
\end{minipage}
\quad
\begin{minipage}[b]{.5\textwidth}
\includegraphics[width=65mm,height=48mm]{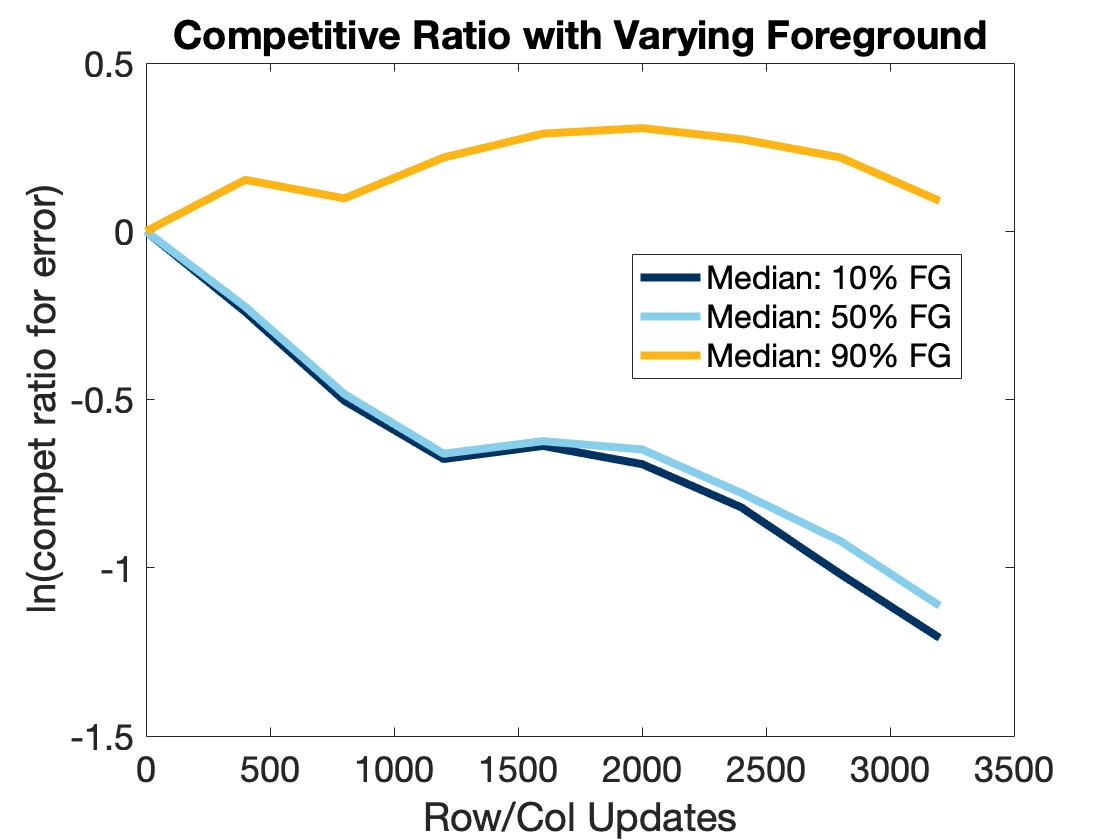}
\end{minipage}
\caption{\small Performance of APDRCD and APDAGD algorithms on the synthetic images. In
the top two images, the comparison is based on using the distance $d(P)$ to the transportation
polytope, and the maximum, median and minimum of competitive ratios
on ten random pairs of images. In the bottom left image, the comparison is based on varying the
regularization parameter $\eta\in\{1, 5, 9\}$ and reporting the optimal value of the original optimal
transport problem without entropic regularization. Note that the foreground covers 10\% of
the synthetic images here. In the bottom right image, we compare the algorithms by using the median of
competitive ratios with varying coverage ratio of foreground in the range of $\{0.1, 0.5, 0.9\}$.}
\label{fig:rcd-agd-syn}
\end{figure*}
\subsection{Accelerated Primal-Dual Greedy Coordinate Descent (APDGCD)}
\label{Sec:APDGCD}
We next present a greedy version of APDRCD algorithm, which we refer to as the \emph{accelerated primal-dual greedy coordinate descent} (APDGCD) algorithm. The detailed pseudo-code of that algorithm is in Algorithm~\ref{Algorithm:APDGCD} while an approximating scheme of OT based on the APDGCD algorithm is summarized in Algorithm~\ref{Algorithm:ApproxOT_APDGCD}. 

Both the APDGCD and APDRCD algorithms follow along the general accelerated primal-dual coordinate descent framework.
Similar to the APDRCD algorithm, the algorithmic framework of APDGCD is composed by
two main parts: First, instead of performing
randomized accelerated coordinate descent on the dual objective function as a subroutine, the APDGCD algorithm chooses the best coordinate that maximizes the absolute value of the gradient of the dual objective function of regularized OT problem among all the coordinates. In the second part, we follow the key averaging step in the APDRCD algorithm by taking a weighted average over
the past iterations to get a good approximated solution for the primal problem from the approximated
solutions to the dual problem. Since the auxiliary sequence is decreasing, the
primal solutions corresponding to the more recent dual solutions have larger weight in this average.

We demonstrate that the APDGCD algorithm enjoys favorable practical performance than APDRCD algorithm on both synthetic and real datasets (cf.\ Appendix~\ref{subsec:further_exp}).

\begin{algorithm}[h]
\caption{Approximating OT by APDGCD} \label{Algorithm:ApproxOT_APDGCD}
\begin{algorithmic}
\STATE \textbf{Input:} $\eta = \dfrac{\varepsilon}{4\log(n)}$ and $\varepsilon'=\dfrac{\varepsilon}{8\left\|C\right\|_\infty}$. 
\STATE \textbf{Step 1:} Let $\tilde{r} \in \Delta_n$ and $\tilde{l} \in \Delta_n$ be defined as
\begin{equation*}
\left(\tilde{r}, \tilde{l}\right) = \left(1 - \frac{\varepsilon'}{8}\right) \left(r, l\right) + \frac{\varepsilon'}{8n}\left(\one, \one\right).  
\end{equation*}
\STATE \textbf{Step 2:} Let $A \in \mathbb{R}^{2n \times n^2}$ and $b \in \mathbb{R}^{2n}$ be defined by
\begin{equation*}
    Avec(X) = \icol{X\boldsymbol{1}\\X^T\boldsymbol{1}} \ \  \text{and} \ \  b = \icol{\tilde{r}\\\tilde{l}}
\end{equation*}
\STATE \textbf{Step 3:} Compute $\tilde{X} = \text{APDGCD}\left(C, \eta, A, b, \varepsilon'/2\right)$ with $\varphi$ defined in~\ref{eq:dual_entropic}.
\STATE \textbf{Step 4:} Round $\tilde{X}$ to $\hat{X}$ by Algorithm 2~\citep{Altschuler-2017-Near} such that $\hat{X}\one = r$, $\hat{X}^\top\one = l$. 
\STATE \textbf{Output:} $\hat{X}$.  
\end{algorithmic}
\end{algorithm}

\begin{figure*}[!ht]
\begin{minipage}[b]{.5\textwidth}
\includegraphics[width=65mm,height=48mm]{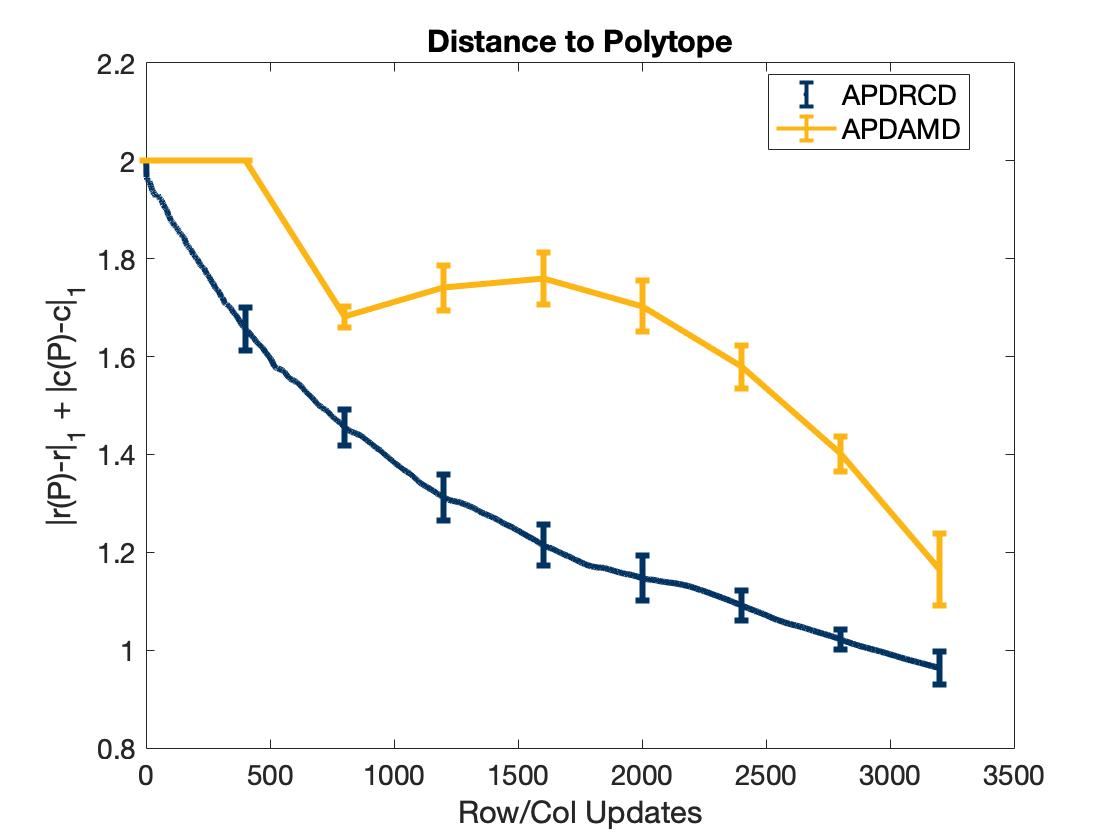}
\end{minipage}
\quad
\begin{minipage}[b]{.5\textwidth}
\includegraphics[width=65mm,height=48mm]{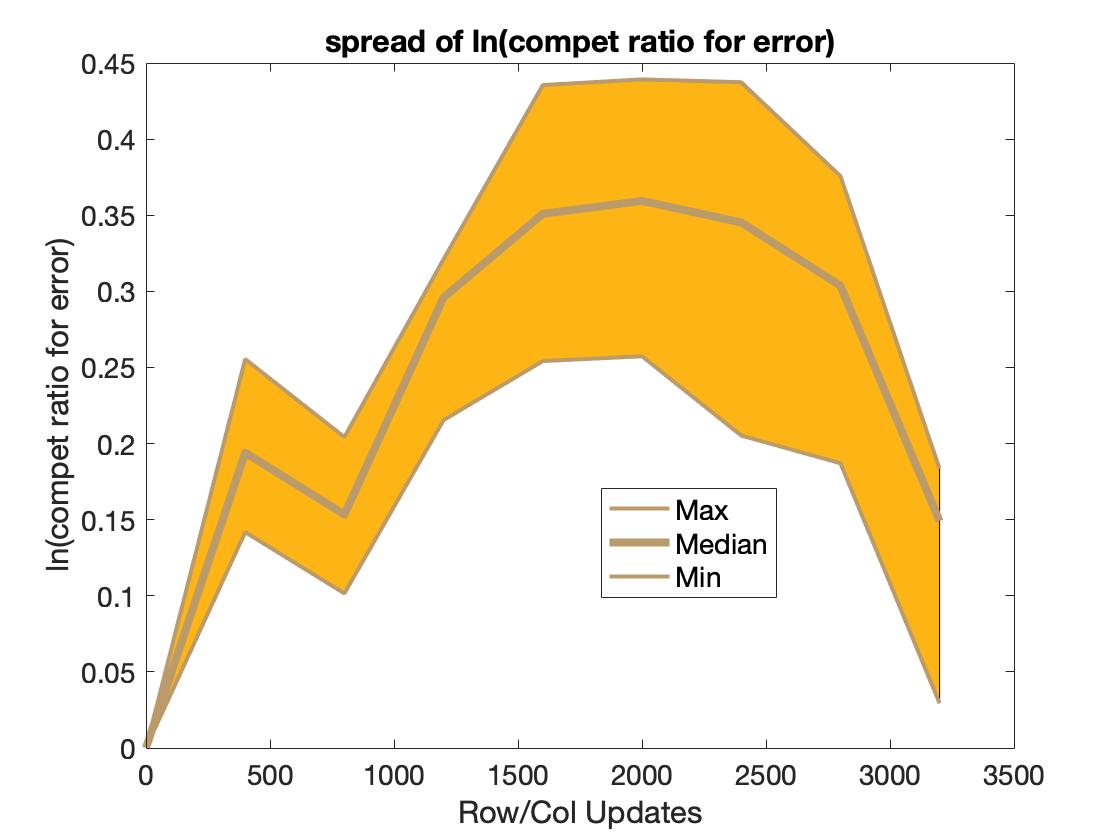}
\end{minipage}
\\
\begin{minipage}[b]{.5\textwidth}
\includegraphics[width=65mm,height=48mm]{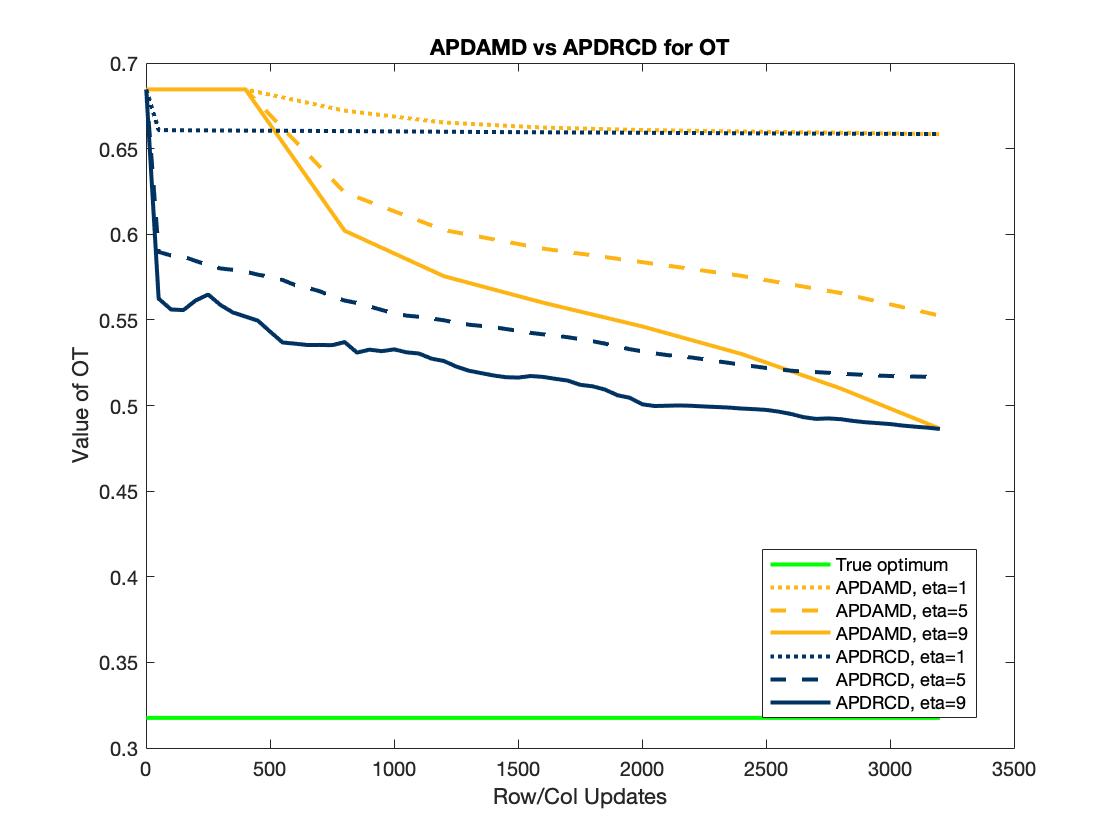}
\end{minipage}
\quad
\begin{minipage}[b]{.5\textwidth}
\includegraphics[width=65mm,height=48mm]{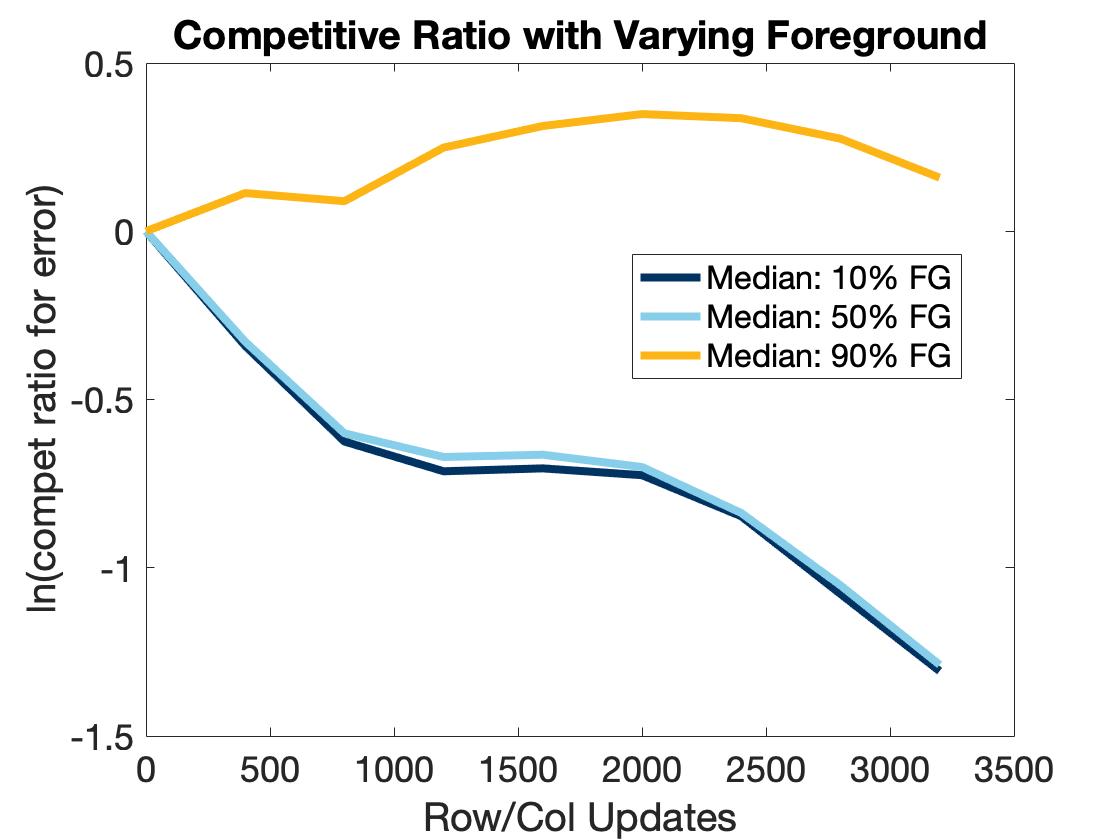}
\end{minipage}
\caption{\small Performance of APDRCD and APDAMD algorithms on the synthetic images. The organization of the images is similar to those in Figure~\ref{fig:rcd-agd-syn}.}
\label{fig:rcd-amd-syn}
\end{figure*}

\section{Experiments}
\label{Sec:experiments}

We carry out comparative experiments between the APDRCD, APDGCD algorithms and the state-of-art primal-dual algorithms for the OT problem including the APDAGD and APDAMD algorithms, on both synthetic images and the MNIST Digits dataset.\footnote{\href{http://yann.lecun.com/exdb/mnist/}{http://yann.lecun.com/exdb/mnist/}} Due to space constraints, the comparative experiments between the APDGCD algorithm and APDAGD/APDAMD algorithms, further experiments for the APDRCD algorithm on larger synthetic datasets and CIFAR10 dataset are deferred to Appendix~\ref{subsec:further_exp}. Note that for the above comparisons, we also utilize the default linear programming solver in MATLAB to obtain the optimal value of the
original optimal transport problem without entropic regularization. 
\subsection{APDRCD Algorithm with Synthetic Images}
\label{subsec:APDRCD_synthetic}
We compared the performance of the APDRCD algorithm with the APDAGD and APDAMD algorithms on synthetic images. The generation of synthetic images follows the procedure of~\citep{Altschuler-2017-Near, lin2019efficient}. The images are of size $20\times20$ and generated by randomly placing a foreground square in the otherwise black background. For the intensities of the background pixels and foreground pixels, we choose uniform distributions on [0,1] and [0, 50] respectively. We vary the proportion of the size of the foreground square in {0.1, 0.5, 0.9} of the full size of the image and implement all the algorithms on different kinds of synthetic
images.

\textbf{Evaluation metric:} We utilize the metrics from~\citep{Altschuler-2017-Near}. The first metric is the distance between the
output of the algorithm and the transportation polytope, $d(X) : = ||r(X)-r||_1+||l(X)-1||_1$, where $r(X)$ and $l(X)$ are the row and column marginal vectors of the output matrix $X$ while $r$ and
$l$ stand for the true row and column marginal vectors. The second metric is the competitive ratio, defined by $\log(\frac{d(X1)}{d(X2)})$ where $d(X1)$ and $d(X2)$ refer to the distance between the
outputs of two algorithms and the transportation polytope.

\begin{figure*}[!ht]
\begin{minipage}[b]{.3\textwidth}
\includegraphics[width=52mm,height=39mm]{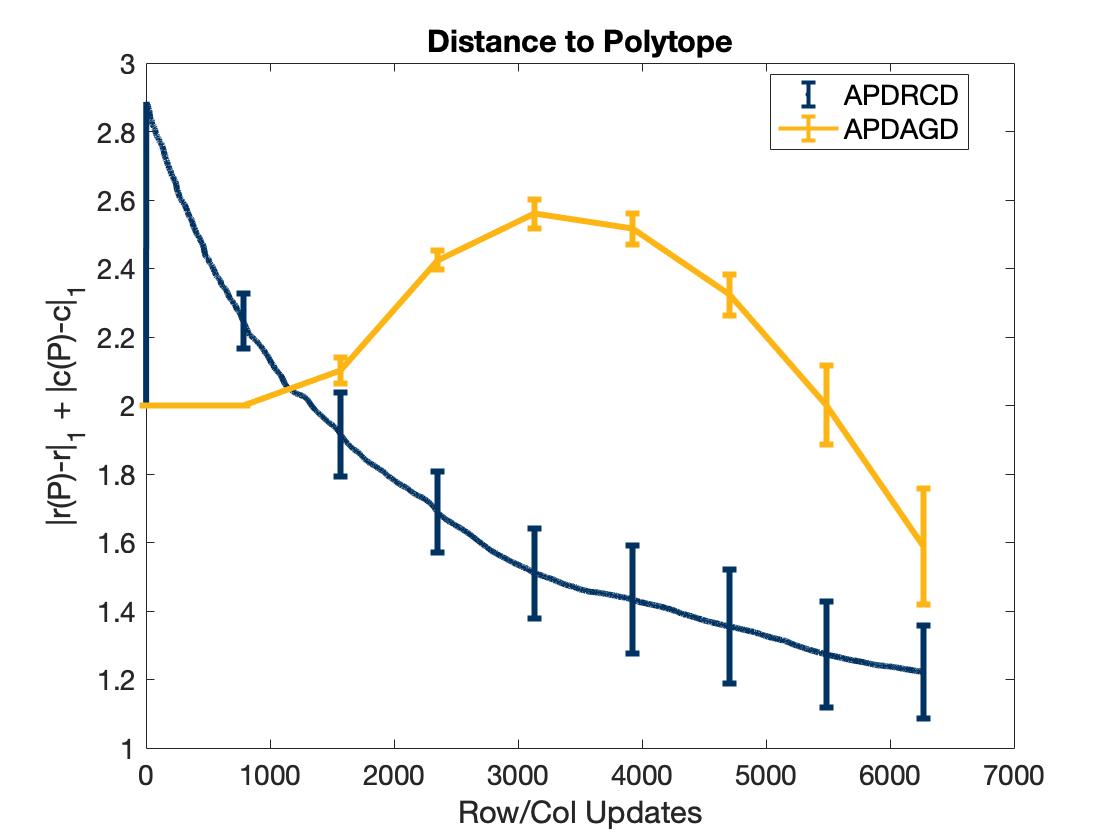}
\end{minipage}
\quad
\begin{minipage}[b]{.3\textwidth}
\includegraphics[width=52mm,height=39mm]{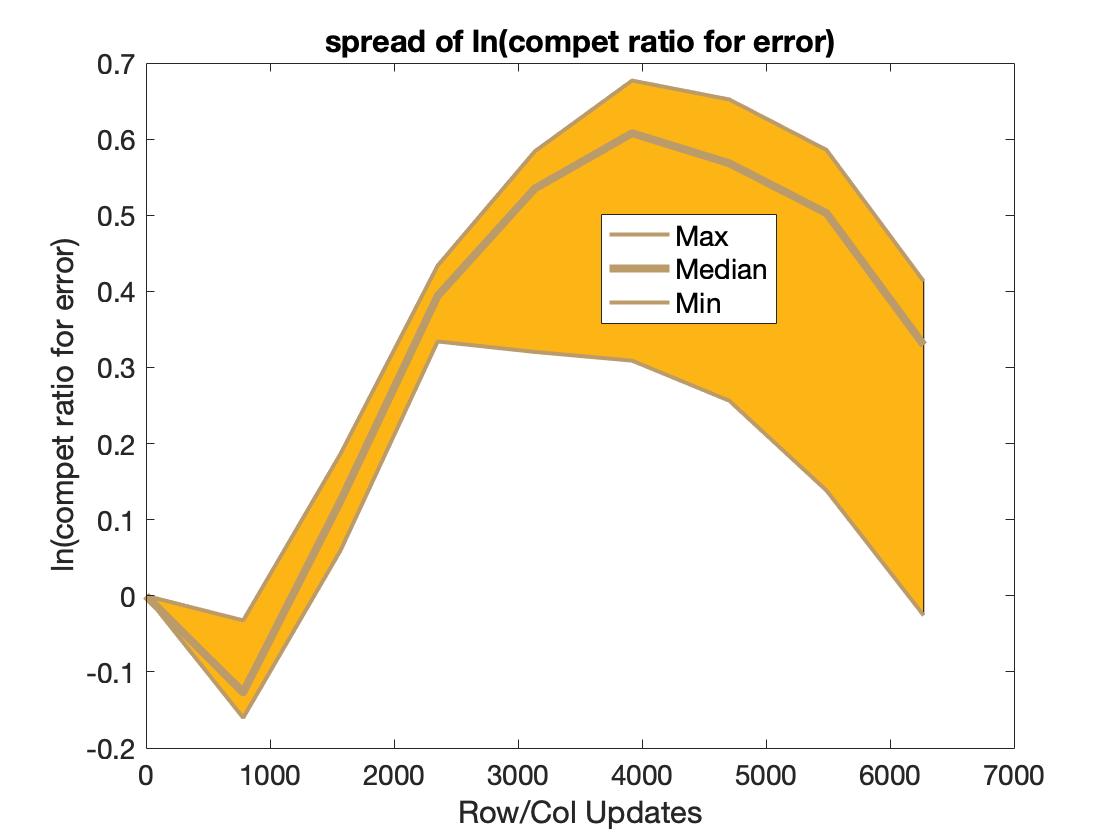}
\end{minipage}
\quad
\begin{minipage}[b]{.3\textwidth}
\includegraphics[width=52mm,height=39mm]{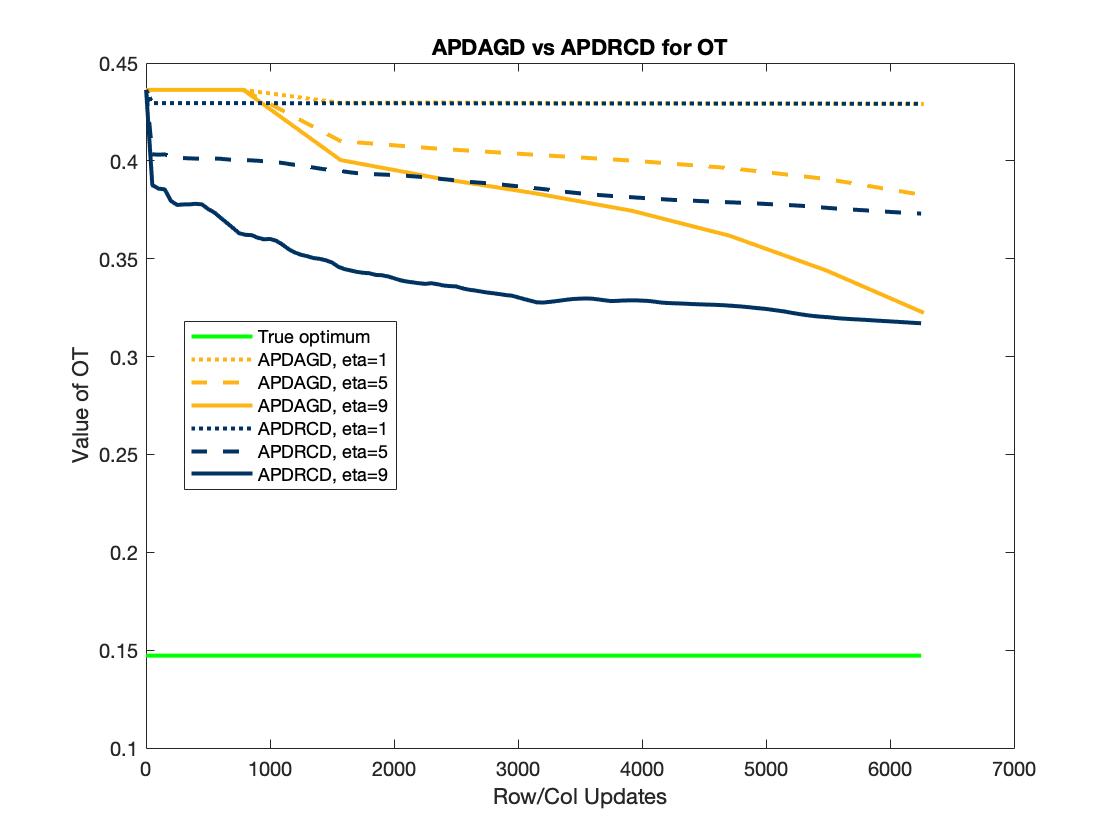}
\end{minipage}
\\
\begin{minipage}[b]{.3\textwidth}
\includegraphics[width=52mm,height=39mm]{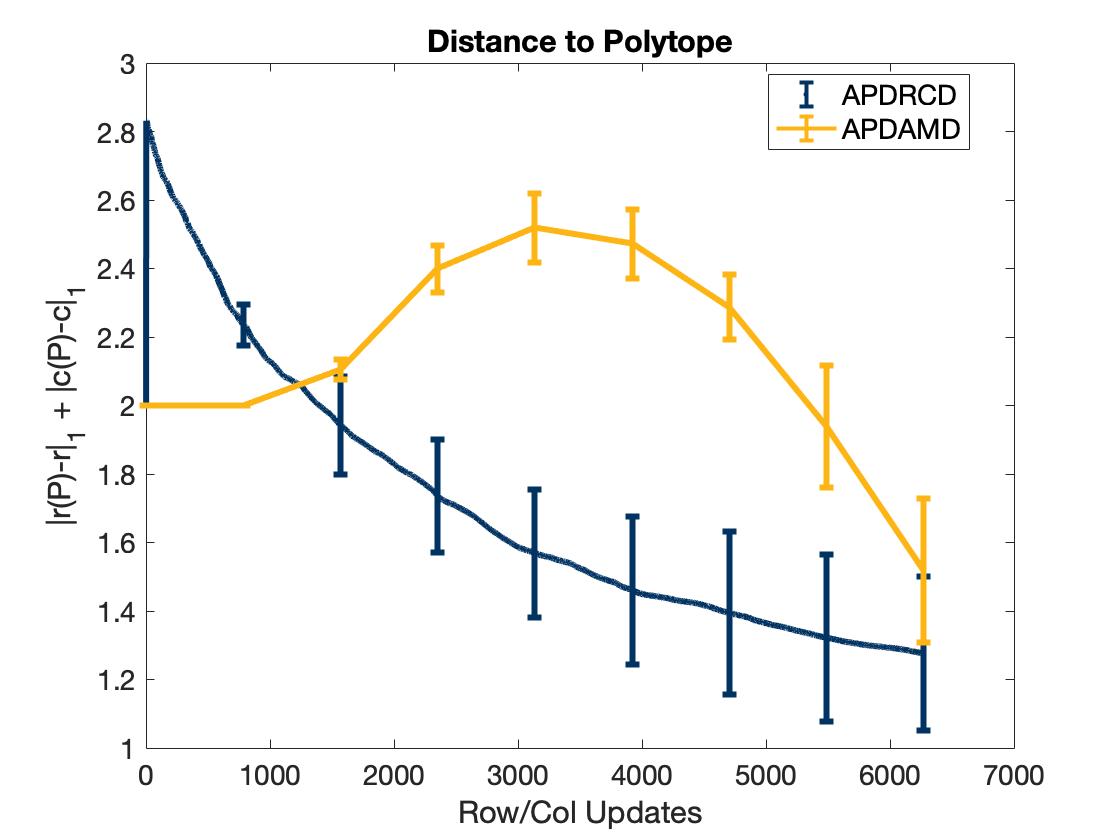}
\end{minipage}
\quad
\begin{minipage}[b]{.3\textwidth}
\includegraphics[width=52mm,height=39mm]{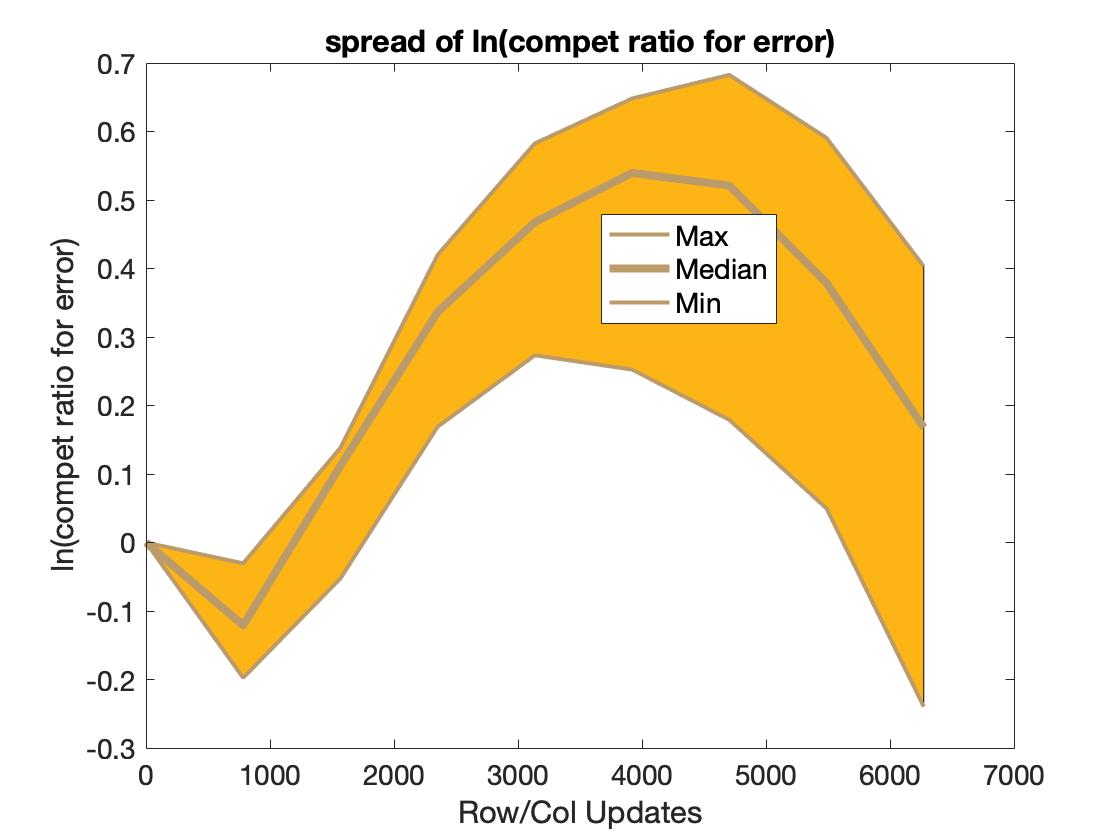}
\end{minipage}
\quad
\begin{minipage}[b]{.3\textwidth}
\includegraphics[width=52mm,height=39mm]{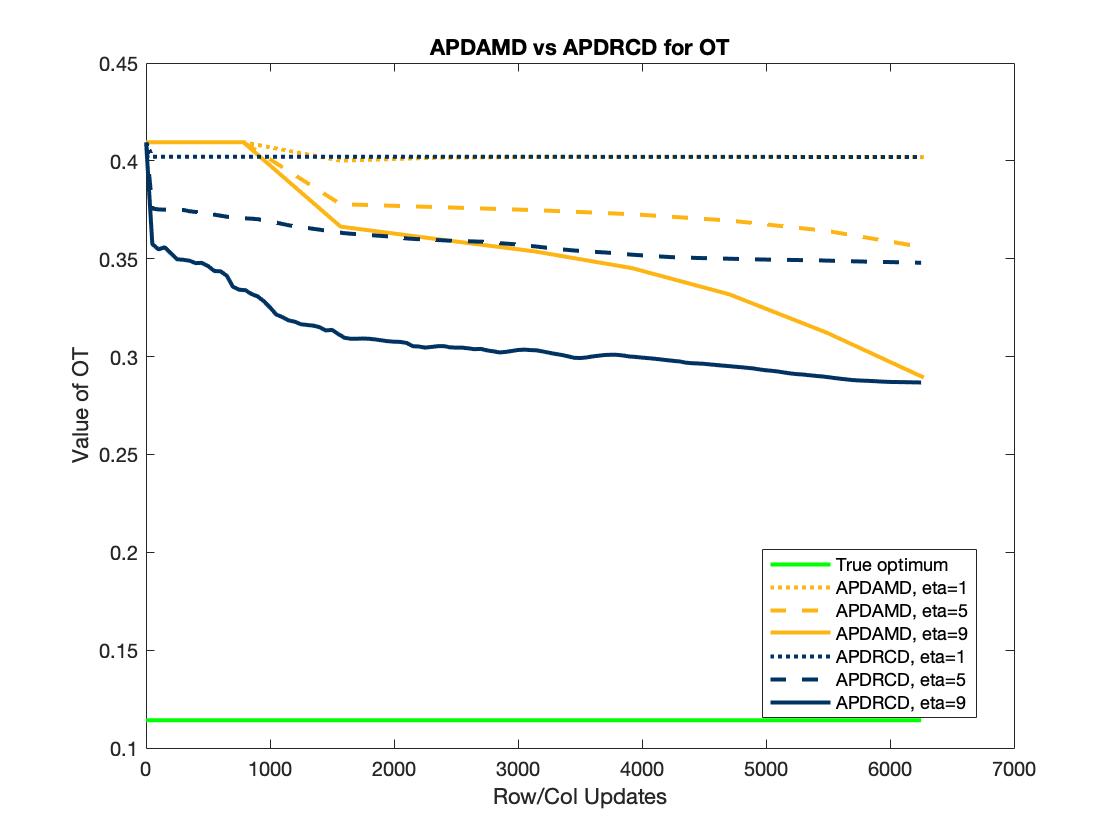}
\end{minipage}
\caption{\small Performance of the APDRCD, APDAGD and APDAMD algorithms on
the MNIST images. In the first row of images, we compare the APDRCD and APDAGD
algorithms in terms of iteration counts. The leftmost image specifies the distances $d(P)$ to
the transportation polytope for two algorithms; the middle image specifies the maximum,
median and minimum of competitive ratios on ten random pairs of MNIST images; the rightmost image specifies the values of regularized OT with varying regularization parameter $\eta = \{1,5,9\}$. In addition, the second row of images present comparative results for
APDRCD versus APDAMD.}
\label{fig:rcd-agd-amd-mnist}
\end{figure*}
\textbf{Experimental settings and results:}
We perform two pairwise comparative experiments for the APDRCD algorithm versus the APDAGD and APDAMD algorithms by running these algorithms with ten randomly selected pairs of synthetic images. We also evaluate
all the algorithms with varying regularization parameter $\eta \in \{1, 5, 9\}$ and the optimal value of the original OT problem without the entropic regularization, as suggested by~\citep{Altschuler-2017-Near, lin2019efficient}.

We present the results in Figure~\ref{fig:rcd-agd-syn} and Figure~\ref{fig:rcd-amd-syn}. The APDRCD algorithm has better performance than the APDAGD and APDAMD algorithms in terms of the iterations. When the number of iterations is small, the APDRCD algorithm achieves faster and more stable decrements than the other two algorithms with regard to both the distance to polytope and the value of OT during the computing process, which is beneficial for the purposes of tuning and illustrates the advantage of using randomized coordinate descent on the dual regularized problem.

\subsection{APDRCD Algorithm with MNIST Images}
\label{subsec:APDRCD_mnist}
We compare the APDRCD algorithm and the APDAGD and APDAMD algorithms on MNIST images with the same set of evaluation metrics. The image pre-processing follows the same pre-processing procedure as suggested in~\citep{lin2019efficient}. We omit the details for the sake of brevity.

We present the results with the MNIST images in Figure~\ref{fig:rcd-agd-amd-mnist} with various values for the  regularization parameter $\eta \in \{1, 5, 9\}$. We also evaluate the algorithms with the optimal value
of the original OT problem without entropic regularization. As shown in Figure~\ref{fig:rcd-agd-amd-mnist}, the APDRCD algorithm outperforms both the APDAGD and APDAMD algorithms on the MNIST dataset in terms of the number of iterations. Additionally, the APDRCD algorithm displays faster and smoother convergence than the other algorithms at small iteration numbers with regard to both the evaluation metrics, which implies the advantage that it is easier to be tuned in practice. 

\FloatBarrier

\section{Discussion}
\label{Sec:discussion}
We have proposed and analyzed new accelerated primal-dual coordinate descent algorithms for approximating the optimal transport distance between two discrete probability measures. These accelerated primal-dual coordinate descent algorithms have comparable theoretical complexity as that of existing accelerated primal-dual algorithms while enjoying better experimental performance.
Furthermore, we show that the APDRCD and APDGCD algorithms are suitable for other large-scale problems apart from computing the OT distance; we propose extensions that approximate the Wasserstein barycenters for multiple probability distributions. 

There are several directions for future work. Given the favorable practical performance of the APDRCD and the APDGCD algorithms over existing primal-dual algorithms, it is of interest to carry out more experiments of the distributed APDRCD and APDGCD algorithms for computing Wasserstein barycenters. Another important direction is to construct fast distributed algorithms for the case of time-varying and directed machine networks to approximate the Wasserstein barycenters. It remains as an interesting and important open research question how the dynamics of the network affect the performance of these algorithms. 

\subsubsection*{Acknowledgements}
We thank Tianyi Lin for many helpful discussions.
This work was supported in part by the Mathematical Data Science program of the Office of Naval Research under grant number N00014-18-1-2764.

\newpage
\bibliography{References}

\clearpage
\begin{center}
\textbf{\Large{Appendix}}
\end{center}
In the appendix, we first provide detailed proofs of all the remaining results in Section~\ref{sec:supplementary_material}. Then, we present an application of APDRCD and APDGCD algorithms for the Wasserstein barycenter problem in Section~\ref{Sec:WB}. Finally,  further comparative experiments between APDGCD algorithm versus APDRCD, APDAGD and APDAMD algorithms, and experiments on larger synthetic image datasets and CIFAR10 dataset are in Section~\ref{subsec:further_exp}.
\appendix
\section{Proofs for all results} 
\label{sec:supplementary_material}
In this appendix, we provide the complete proofs for remaining results in the main text.
\subsection{Proof of Lemma~\ref{lemma 3}}
\label{subsec:proof:lemm_3}
By Eq.~\eqref{a2}, we have the following equations 
\begin{align*}
    \lefteqn{|\nabla\varphi_{i_k}(y^k)|^2} \\
    &=  2|\nabla\varphi_{i_k}(y^k)|^2 -|\nabla\varphi_{i_k}(y^k)|^2\nonumber \\
    &=2 L\big(y_{i_k}^k-\lambda_{i_k}^{k+1}\big)\nabla_{i_k}\varphi(y^k)-L^2(\lambda_{i_k}^{k+1}-y_{i_k}^{k+1})^2.
\end{align*}
Combining the above equations with Lemma~\ref{lemma2}, we have
\begin{align} \label{15}
    \lefteqn{\varphi(\lambda^{k+1}) \nonumber}\\
    &\leq \varphi(y^k)-\frac{1}{2L}(\nabla_{i_k}\varphi(y^k))^2 \nonumber \\
    &= \varphi(y^k)+(\lambda_{i_k}^{k+1}-y_{i_k}^k)\nabla_{i_k}\varphi(y^k)+\frac{L}{2}(\lambda_{i_k}^{k+1}-y_{i_k}^{k+1})^2. 
\end{align}
Furthermore, the results from Eq.~\eqref{a2} and Eq.~\eqref{a3} lead to the following equations
\begin{align*}
    \lambda_{i_k}^{k+1}-y_{i_k}^{k} & = -\frac{1}{L}\nabla_{i_k}\varphi(y^k), \\
    z_{i_k}^{k+1}-z_{i_k}^{k} & = -\frac{1}{2n L\theta_k}\nabla_{i_k}\varphi(y^k).
\end{align*}
Therefore, we have
\begin{equation*}
     \lambda_{i_k}^{k+1}-y_{i_k}^{k} =2n\theta_k(z_{i_k}^{k+1}-z_{i_k}^{k}).
\end{equation*}
The above equation together with Eq.~\eqref{15} yields the following inequality
\begin{align}\label{17}
 \lefteqn{\varphi(\lambda^{k+1})}\nonumber \\
 &\leq \varphi(y^k)+ 2n\theta_k \big(z_{i_k}^{k+1}-z_{i_k}^k \big)\nabla_{i_k}\varphi_{i_k}(y^k) 
 \nonumber \nonumber\\ 
 &+ 2n^2L\theta_k^2 \big(z_{i_k}^{k+1}-z_{i_k}^k \big)^2. 
\end{align}
By the result of Eq.~\eqref{a3}, we have 
\begin{equation*}
    (z_{i_k}^{k+1}-z_{i_k}^k)+\frac{1}{2n L\theta_k}\nabla_{i_k}\varphi(y^k) = 0.
\end{equation*}
Therefore, for any $\lambda \in \mathbb{R}^{2n}$, we find that
\begin{equation*} 
    (\lambda_{i_k}-z_{i_k}^{k+1})[(z_{i_k}^{k+1}-z_{i_k}^k)+\frac{1}{2n L\theta_k}\nabla_{i_k}\varphi(y^k)] = 0.
\end{equation*}
Note that the above equation is equivalent to the following:
\begin{align*}
    \lefteqn{\frac{1}{n L\theta_k}(\lambda_{i_k}-z_{i_k}^{k+1})\nabla_{i_k}\varphi(y^k)} \\
    &= -2 (\lambda_{i_k}-z_{i_k}^{k+1})(z_{i_k}^{k+1}-z_{i_k}^k) \nonumber\\
    &=(\lambda_{i_k}-z_{i_k}^{k+1})^2-(\lambda_{i_k}-z_{i_k}^k)^2+(z_{i_k}^{k+1}-z_{i_k}^k)^2
\end{align*}
where the second equality in the above display comes from simple algebra. Rewriting the above equality, we have: 
\begin{align*}
\lefteqn{(z_{i_k}^{k+1}-z_{i_k}^k)^2} \\
&=\frac{1}{n L\theta_k}(\lambda_{i_k} -z_{i_k}^{k+1})\nabla_{i_k}\varphi(y^k)\\
&-(\lambda_{i_k}-z_{i_k}^{k+1})^2 + (\lambda_{i_k}-z_{i_k}^k)^2  
\end{align*}
Combining the above equation with Eq.~\eqref{17} yields the following inequality:
\begin{align}\label{21}
    \lefteqn{\varphi(\lambda^{k+1})}\nonumber\\
    &\leq  \varphi(y^k)+ 2n \theta_k (\lambda_{i_k}-z_{i_k}^k)\nabla_{i_k}\varphi(y^k) \nonumber\\
    & + 2 n^2 L\theta^2_k \biggr[(\lambda_{i_k}-z_{i_k}^k)^2-(\lambda_{i_k}-z_{i_k}^{k+1})^2 \biggr]. 
\end{align}
Recall the definition of $y^{k}$ in Step 3 of Algorithm~\ref{Algorithm:APDCD} as follows: 
\begin{equation*}
    y^k = (1-\theta_k)\lambda^k+\theta^k z^k,
\end{equation*}
which can be rewritten as: 
\begin{equation}\label{simple22}
    \theta_k(\lambda-z^k)=\theta_k(\lambda-y^k)+(1-\theta_k)(\lambda^k-y^k)
\end{equation}
for any $\lambda \in \mathbb{R}^{2n}$. 

The above equation implies that
\begin{align*}
\lefteqn{\varphi(y^k)+\theta_k(\lambda-z^k)^T\nabla\varphi(y^k)} \nonumber \\
&\leq \theta_k[\varphi(y^k)+(\lambda-y^k)^T\nabla\varphi(y^k)] \\
&+ (1-\theta_k)[\varphi(y^k)+(\lambda^k-y^k)^T\nabla\varphi(y^k)] \nonumber \\
&\leq \theta_k[\varphi(y^k)+(\lambda-y^k)^T\nabla\varphi(y^k)] + (1-\theta_k)\varphi(\lambda^k)
\end{align*}
where the last inequality comes from the convexity of $\varphi$.
Combining this equation and taking expectation over $i_k$ for the first two terms of Eq.~\eqref{21}, we have:
\begin{align}\label{lemma3part1}
     \lefteqn{\varphi(y^k)+(2n\theta_k) \mathbb{E}_{i_k}[(\lambda_{i_k}-z_{i_k}^k)\nabla_{i_k}\varphi(y^k)] \nonumber} \\
     &= \varphi(y^k)+\theta_k(\lambda-z^k)^T\nabla\varphi(y^k) \nonumber \nonumber \nonumber \\
     &\leq \theta_k[\varphi(y^k)+(\lambda-y^k)^T\nabla\varphi(y^k)]+(1-\theta_k)\varphi(\lambda^k)
\end{align}
where we use Eq.~\eqref{simple22} and the convexity of the dual function in the last step. For the last term in the right hand side of Eq.~\eqref{21}, by taking expectation over $i_k$, we have: 
\begin{align}\label{lemma3part2}
    \lefteqn{\mathbb{E}_{i_k} \biggr[2n^2L\theta^2_k \bigr[(\lambda_{i_k}-z_{i_k}^k)^2-(\lambda_{i_k}-z_{i_k}^{k+1})^2\bigr]\biggr] }\nonumber \\
    &= 2n^2L \theta_k^2 \mathbb{E}_{i_k} \bigr[||\lambda-z^k||^2-||\lambda-z^{k+1}||^2\bigr] 
\end{align}
where the last equality comes from the following equations:
\begin{align*}
    \lefteqn{\mathbb{E}_{i_k} \bigr[(\lambda_{i_k}-z_{i_k}^k)^2- \bigr(\lambda_{i_k}-z_{i_k}^{k+1}\bigr)^2 \bigr]} \\
    &= \mathbb{E}_{i_k} \biggr[(\lambda_{i_k}-z_{i_k}^k)^2- \biggr(\lambda_{i_k}-z_{i_k}^k+\frac{1}{2n L\theta_k}\nabla_{i_k}\varphi(y^k) \biggr)^2 \biggr]\\
    &= \frac{1}{2n}(\|\lambda-z^k\|^2-\sum_{i_k=0}^{2n} \biggr(\lambda_{i_k}-z_{i_k}^k+\frac{1}{2n L\theta_k}\nabla_{i_k}\varphi(y^k)\biggr)^2) \\
    &= \frac{1}{2n}\|\lambda-z^k\|^2-\frac{1}{2n} \|\lambda - z^k+\frac{1}{2n L\theta_k}\nabla\varphi(y^k)\|^2 \\
    &= \frac{1}{2n} \biggr[-\frac{(\lambda-z^k)}{n L\theta_k}\nabla \varphi(y^k)-\frac{1}{4 n^2L^2\theta_k^2}\|\nabla\varphi(y^k)\|^2 \biggr] \\
    &= (-2)(\lambda-z^k)\mathbb{E}_{i_k}[z^k-z^{k+1}]-  \mathbb{E}_{i_k}[||z^k-z^{k+1}||^2]
\end{align*}
where the last inequality is due to the fact that $\nabla \varphi(y^k)= 4 \mathbb{E}_{i_k}[(z^k-z^{k+1})n^2L\theta_k]$ and Jensen's inequality. Therefore, by simple algebra, we have
\begin{align*}
    \lefteqn{\mathbb{E}_{i_k} \biggr[(\lambda_{i_k}-z_{i_k}^k)^2-(\lambda_{i_k}-z_{i_k}^{k+1})^2\biggr] }\\
    &= (-2)(\lambda-z^k) \mathbb{E}_{i_k}[z^k-z^{k+1}]- \mathbb{E}_{i_k}[||z^k-z^{k+1}||^2]\\
    &= \mathbb{E}_{i_k}[\|\lambda-z^k\|^2-\|\lambda-z^{k+1}\|^2].
\end{align*}
Notice that equation (\ref{21}) holds for any value of $i_k$. Hence, by combining the results from Eq.~\eqref{lemma3part1} and Eq.~\eqref{lemma3part2} with Eq.~\eqref{21}, at each iteration with a certain value of $i_k$, we obtain that
\begin{align*}
    \lefteqn{\mathbb{E}_{i_k} [\varphi(\lambda^{k+1})]} \\
    &\leq (1-\theta_k)\varphi(\lambda^k)+\theta_k[\varphi(y^k)+(\lambda-y^k)^{\top} \nabla\varphi(y^k)] \\
    &+ 2 n^2 L \theta_k^2 \biggr(||\lambda-z^k||^2- \mathbb{E}_{i_k}[||\lambda-z^{k+1}||^2]\biggr).
\end{align*}
As a consequence, we achieve the conclusion of the lemma.

\begin{figure*}[!ht]
\begin{minipage}[b]{.5\textwidth}
\includegraphics[width=65mm,height=45mm]{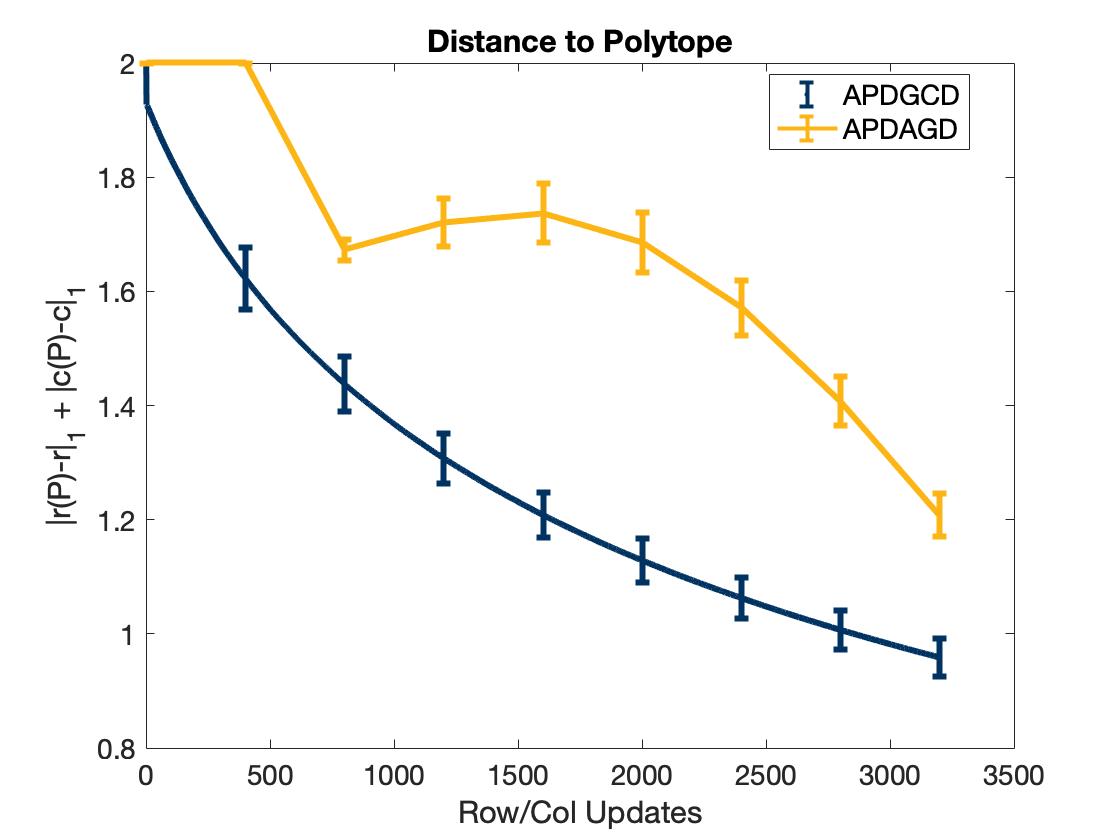}
\end{minipage}
\quad
\begin{minipage}[b]{.5\textwidth}
\includegraphics[width=65mm,height=45mm]{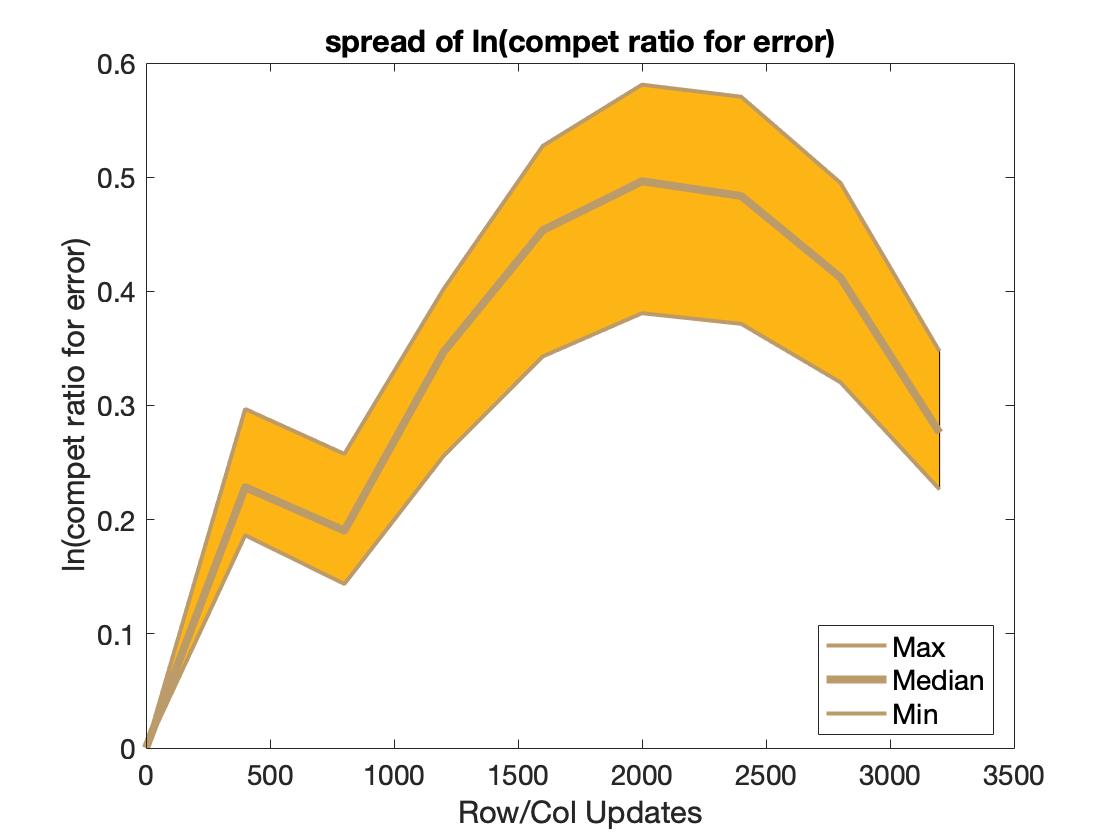}
\end{minipage}
\\
\begin{minipage}[b]{.5\textwidth}
\includegraphics[width=65mm,height=45mm]{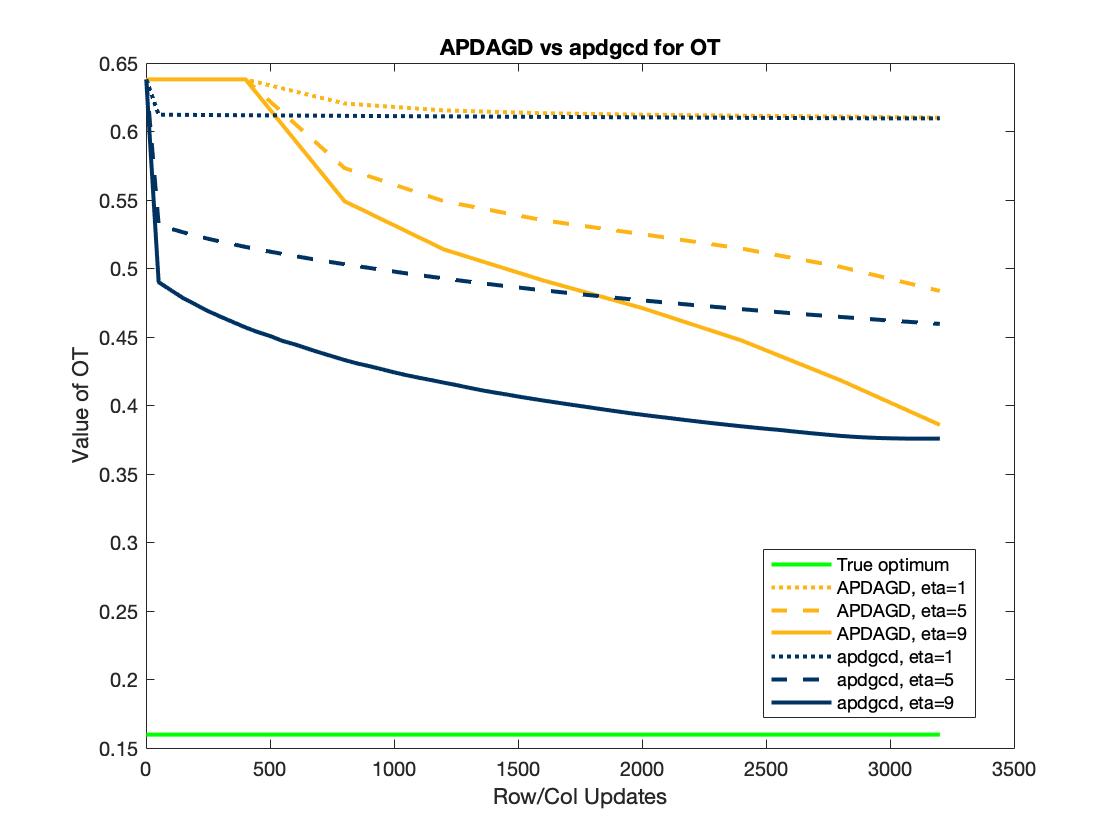}
\end{minipage}
\quad
\begin{minipage}[b]{.5\textwidth}
\includegraphics[width=65mm,height=45mm]{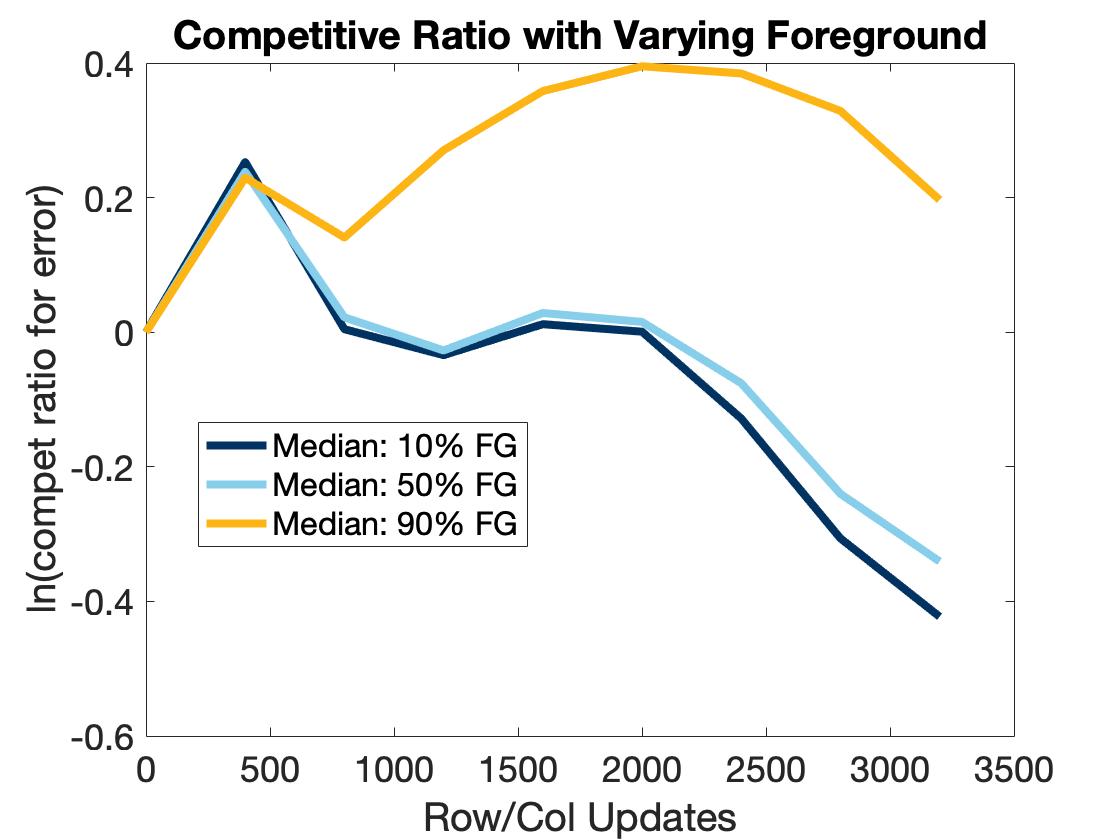}
\end{minipage}
\caption{\small Performance of APDGCD and APDAGD algorithms on the synthetic images. The organization of the images is similar to those in Figure~\ref{fig:rcd-agd-syn}.}
\label{fig:gcd-agd-syn}
\end{figure*}

\subsection{Proof of Theorem~\ref{thm: OTconvergence}}
\label{subsec:proof:thm: OTconvergence}
By the result of Lemma~\ref{lemma 3} and the definition of the sequence $\{\theta_k\}$ in Algorithm~\ref{Algorithm:APDCD}, we obtain the following bounds:
\begin{align*}
    \lefteqn{\mathbb{E} \biggr[\frac{1}{\theta_k^2}\varphi(\lambda^{k+1})\biggr] }\\
    &\leq \mathbb{E} \biggr[ \frac{1-\theta_k}={\theta_k^2}\varphi(\lambda^k)+\frac{1}{\theta_k}[\varphi(y^k)+(\lambda-y^k)^T\nabla\varphi(y^k)] \\
    &+ 2 L n^2 \big(||\lambda-z^k||^2-||\lambda-z^{k+1}||^2\big) \biggr] \\ 
    &=  \mathbb{E} \biggr[ \frac{1}{\theta_{k-1}^2}\varphi(\lambda^k)+\frac{1}{\theta_k}[\varphi(y^k)+(\lambda-y^k)^T\nabla\varphi(y^k)] \\
    &+ 2 L n^2 \big( ||\lambda-z^k||^2-||\lambda-z^{k+1}||^2 \big) \biggr]  
\end{align*}
where the outer expectations are taken with respect to the random sequence of the coordinate indices in Algorithm~\ref{Algorithm:APDCD}. Keep iterating the above bound and using the fact that $\theta_0 = 1$ and $C_{k} = 1/ \theta_{k}^2$, we arrive at the following inequalities:
\begin{align*}
    \lefteqn{C_k  \mathbb{E} \big[ \varphi(\lambda^{k+1})\big]}  \\
    &\leq \sum_{i=0}^k \dfrac{1}{\theta_{i}} \mathbb{E} \big[ \varphi(y^i)+ \left \langle \nabla \varphi(y^i),\lambda-y^i\right \rangle \big] \\
    &+ 2 L n^2 \bigr( ||\lambda - z^0||^2 - \mathbb{E} \big [||\lambda-z^{k+1}||^2 \big] \bigr)  \\
    &\leq \min \limits_{\lambda \in \mathbb{R}^{2n}} \biggr( \sum_{i=0}^k \dfrac{1}{\theta_{i}} \mathbb{E} \big[ \varphi(y^i)+ \left \langle \nabla \varphi(y^i),\lambda-y^i\right \rangle \big] \\
    & + 2 L n^2 \bigr( ||\lambda - z^0||^2 - \mathbb{E} \big [||\lambda-z^{k+1}||^2 \big] \bigr) \biggr) \\
    &\leq \min \limits_{\lambda \in B_2 (2\hat{R})} \biggr( \sum_{i=0}^k \dfrac{1}{\theta_{i}} \mathbb{E} \big[ \varphi(y^i)+ \left \langle \nabla \varphi(y^i),\lambda-y^i\right \rangle \big] \\
    & + 2 L n^2 \bigr( ||\lambda - z^0||^2 - \mathbb{E} \big [||\lambda-z^{k+1}||^2 \big] \bigr) \biggr)
\end{align*}
 where $\hat{R} : = \eta \sqrt{n}(R+\frac{1}{2})$ is the upper bound for $l_2$-norm of optimal solutions of dual regularized OT problem~\eqref{eq:dual_entropic} according to Lemma 3.2 in~\citep{lin2019efficient} and $\mathbb{B}_2(r)$ is defined as
 \begin{equation*}
     \mathbb{B}_2(r) \defeq \{\lambda\in \mathbb{R}^{2n} \ | \ ||\lambda||_2 \leq r\}.
 \end{equation*}
As $\mathbb{E} \big [||\lambda-z^{k+1}||^2 \big] \geq 0$, the inequality in the above display can be further rewritten as 
\begin{align}
    \lefteqn{C_k  \mathbb{E} \big[ \varphi(\lambda^{k+1})\big] } \nonumber \\
    &\leq \min \limits_{\lambda \in \mathbb{B}_2 (2\hat{R})} \biggr( \sum_{i=0}^k \dfrac{1}{\theta_{i}} \mathbb{E} \big[ \varphi(y^i)+ \left \langle \nabla \varphi(y^i),\lambda-y^i\right \rangle \big] \nonumber \\
    &+ 2 L n^2 ||\lambda - z^0||^2 \biggr)  \nonumber \nonumber \\
    &\leq \min \limits_{\lambda \in \mathbb{B}_2 (2\hat{R})} \biggr( \sum_{i=0}^k \dfrac{1}{\theta_{i}} \mathbb{E} \big[ \varphi(y^i)+ \left \langle \nabla \varphi(y^i),\lambda-y^i\right \rangle \big] \nonumber \\
    &+ 8 L n^2 \hat{R}^2 \biggr) \label{eq:iter_count_first}
\end{align}
where the last inequality is due to $z^{0} = 0$. Furthermore, by the definition of the dual entropic regularized OT objective function $\varphi(\lambda)$, we can verify the following equations:
\begin{align*}
    \lefteqn{\varphi(y^i) + \left \langle \nabla \varphi(y^i), \lambda-y^i \right \rangle } \\
    &= \left \langle y^i,b-Ax(y^i)\right \rangle -f(x(y^i))+\left \langle \lambda-y^i,b-Ax(y^i)
\right \rangle \\
&= -f(x(y^i))+\left \langle \lambda, b-Ax(y^i) \right \rangle
\end{align*}
where $f(x) : = \left\langle C, x\right\rangle$, $x(\lambda) : = \mathop{\arg \max}\limits_{x \in \br^{n \times n}} \biggr\{ - f(x) - \left\langle A^{\top}\lambda, x\right\rangle \biggr\}$, and $b = \begin{pmatrix} r \\ l \end{pmatrix} $. The above equation leads to the following inequality:
\begin{align}
 \lefteqn{\sum_{i=0}^k \dfrac{1}{\theta_{i}} \mathbb{E} \big[ \varphi(y^i)+ \left \langle \nabla \varphi(y^i),\lambda-y^i\right \rangle \big]}  \nonumber \\
 &= \sum_{i=0}^k \dfrac{1}{\theta_{i}} \mathbb{E} \big[-f(x(y^i))+\left \langle \lambda, b-Ax(y^i) \right \rangle \big]  \nonumber \\
 &\leq - C_k f(\mathbb{E} \big[x^k\big]) + \sum_{i=0}^k \dfrac{1}{\theta_{i}} \left \langle \lambda,b-A \mathbb{E} \big[x(y^i) \big] \right \rangle  \nonumber \\
 &=  \ C_{k} \big(- f(\mathbb{E} \big[x^k\big]) + \left \langle \lambda,b-A \mathbb{E} \big[x^{k} \big] \right \rangle \big) \label{eq:iter_count_second}
\end{align}
where the second inequality is due to the convexity of $f$. Combining the results from~\eqref{eq:iter_count_first} and~\eqref{eq:iter_count_second}, we achieve the following bound 
\begin{align*}
    \lefteqn{C_k  \mathbb{E} \big[ \varphi(\lambda^{k+1})\big]} \\
 &\leq  - C_k f(\mathbb{E} \big[x^k\big]) + \min \limits_{\lambda \in \mathbb{B}_2(2\hat{R})} \{  C_k \left \langle \lambda,b-A \mathbb{E} \big[x^{k} \big] \right \rangle \} \\
 &+ 8Ln^2\hat{R}^2 \\
    &\leq - C_k f(\mathbb{E} \big[x^k\big]) + 8Ln^2\hat{R}^2 - 2 C_k \hat{R} \mathbb{E} \big[ ||Ax^k-b||_2 \big].
\end{align*}
which is equivalent to
\begin{align}\label{eq:iter_count_third}
&f(\mathbb{E}\big[x^k\big]) + \mathbb{E} \big[\varphi(\lambda^{k+1})\big] + 2\hat{R} \mathbb{E} \big[||Ax^k-b||_2\big] \nonumber\\
&\leq \frac{8 L n^2\hat{R}^2}{C_k}.
\end{align}

\begin{figure*}[!ht]
\begin{minipage}[b]{.5\textwidth}
\includegraphics[width=65mm,height=45mm]{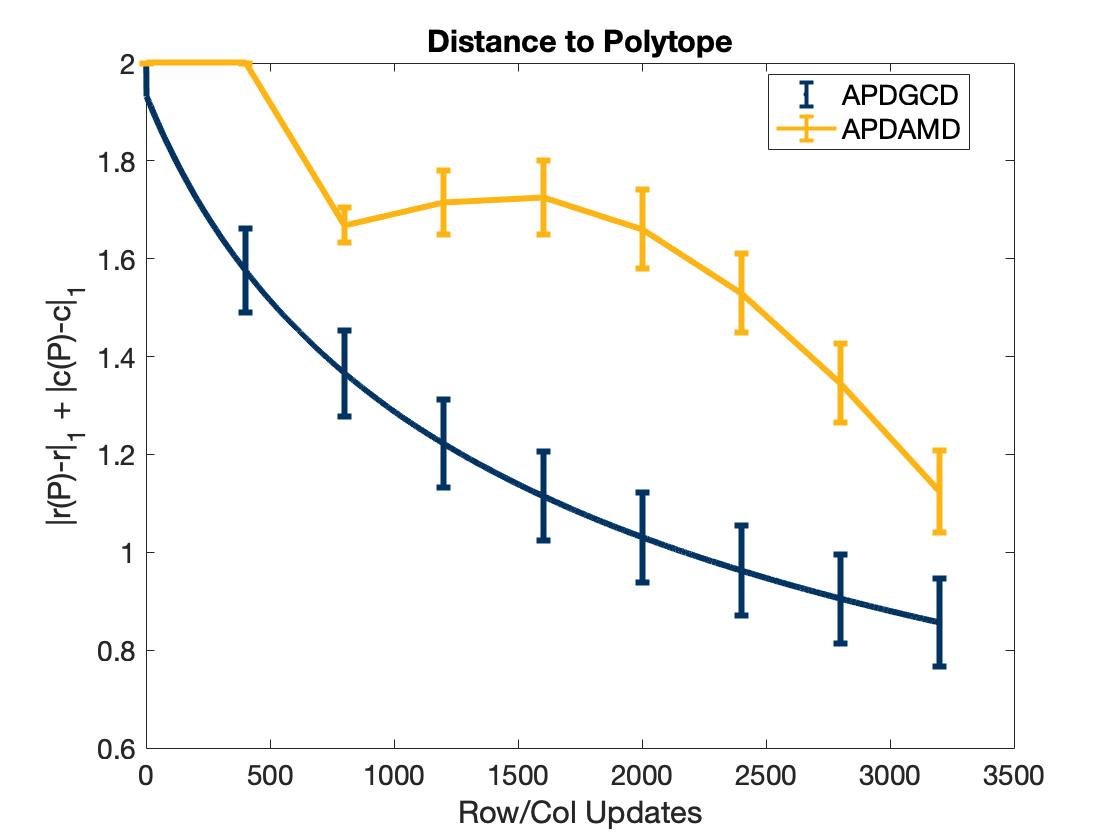}
\end{minipage}
\quad
\begin{minipage}[b]{.5\textwidth}
\includegraphics[width=65mm,height=45mm]{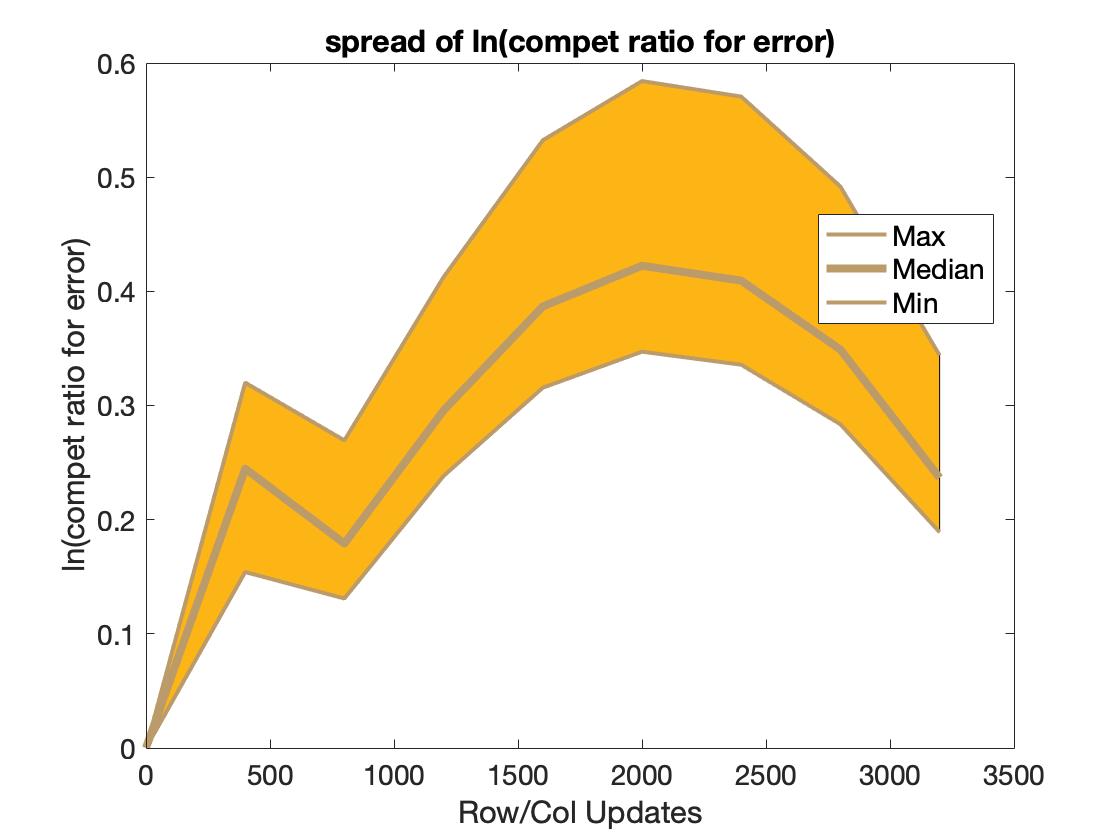}
\end{minipage}
\\
\begin{minipage}[b]{.5\textwidth}
\includegraphics[width=65mm,height=45mm]{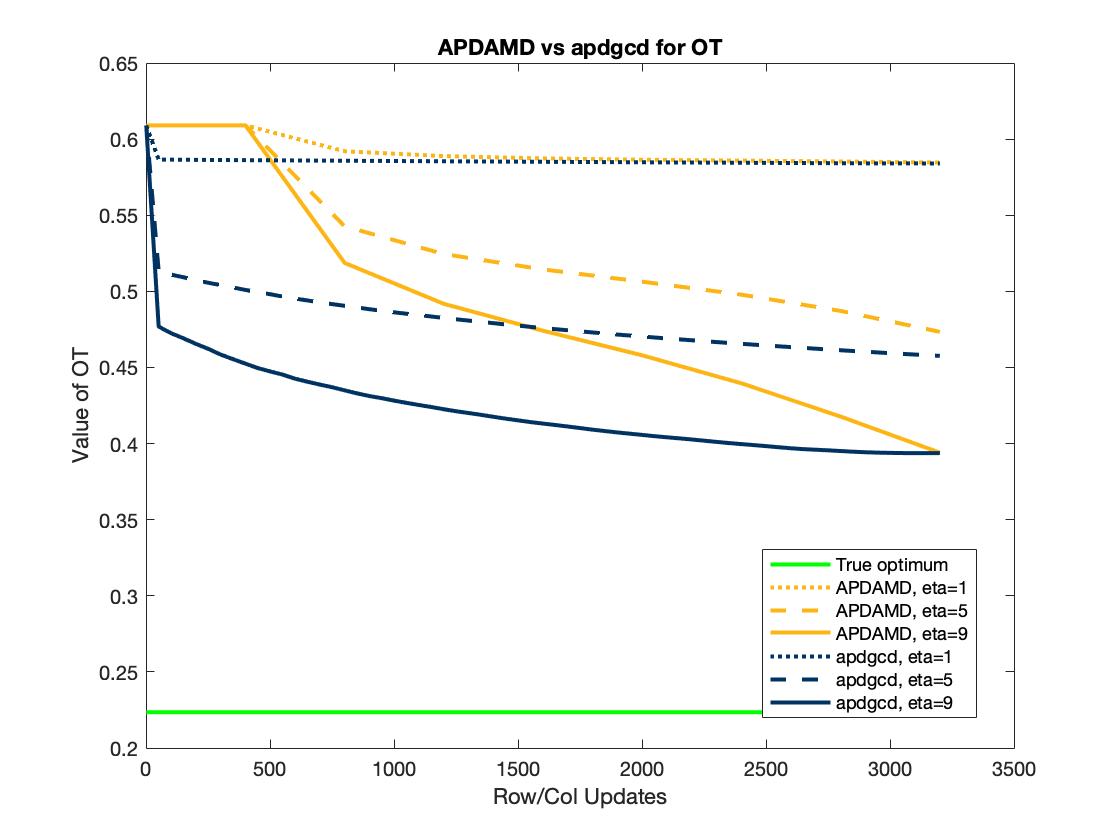}
\end{minipage}
\quad
\begin{minipage}[b]{.5\textwidth}
\includegraphics[width=65mm,height=45mm]{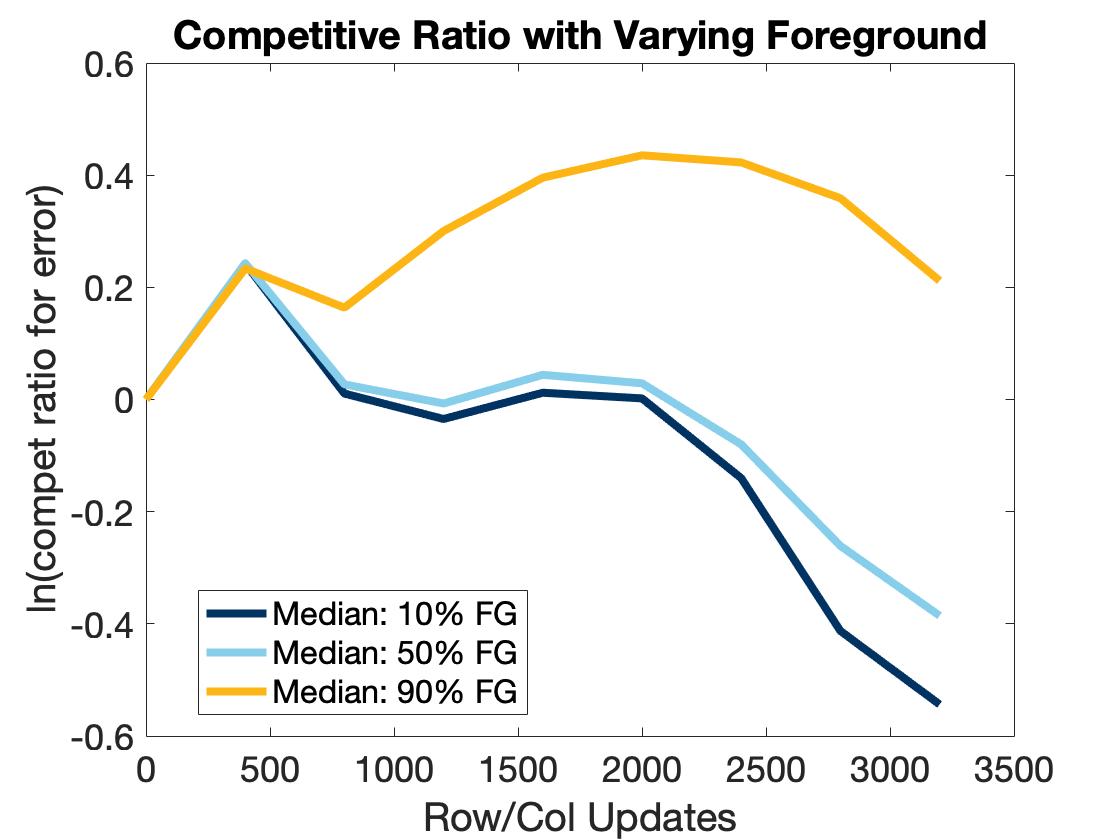}
\end{minipage}
\caption{\small Performance of APDGCD and APDAMD algorithms on the synthetic images. The organization of the images is similar to those in Figure~\ref{fig:gcd-amd-syn}.}
\label{fig:gcd-amd-syn}
\end{figure*}

Denoting $\lambda^*$ as the optimal solution for the dual entropic regularized OT problem~\eqref{eq:dual_entropic}. Then, we can verify the following inequalities
\begin{align}
    \lefteqn{f(\mathbb{E} \big[x^k\big]) + \mathbb{E} \big[ \varphi(\lambda^{k+1})\big]} \nonumber \\
    &\geq f(\mathbb{E} \big[x^k\big]) + \varphi (\lambda^*) \nonumber \\
    &= f(\mathbb{E} \big[x^k\big]) + \left \langle \lambda^*,b \right \rangle + \max \limits_{x \in \mathbb{R}^{n \times n}}\{-f(x)-\left \langle A^{\top}\lambda^*,x \right \rangle\} \nonumber \\
    &\geq f(\mathbb{E} \big[x^k\big]) + \left \langle \lambda^*,b \right \rangle - f(\mathbb{E} \big[x^k\big]) - \left \langle \lambda^*,A \mathbb{E} \big[x^k\big] \right \rangle \nonumber \\
    &= \left \langle \lambda^*,b - A \mathbb{E} \big[x^k\big]\right \rangle \nonumber \\
    &\geq - \hat{R} \mathbb{E} \big[ ||Ax^k-b||_2 \big] \label{eq:iter_count_fourth}
\end{align}
where the last inequality comes from H\"{o}lder inequality and the fact that $||\lambda^{*}||_2 \leq \hat{R}$. Plugging the inequality in~\eqref{eq:iter_count_fourth} to the inequality in~\eqref{eq:iter_count_third} leads to the following bound: 
\begin{align}
    \lefteqn{\mathbb{E} \big[ ||Ax^k-b||_2 \big]} \\
    &\leq  \frac{8 L n^2 \hat{R}}{C_k} \nonumber \\
    &= \frac{8 ||A||_1^2n^{\frac{5}{2}} (R+1/2)}{C_k} \nonumber \\
    &=  \frac{32 n^{\frac{5}{2}} (R+1/2)}{C_k}. \label{eq:iter_count_fifth}
\end{align}

Therefore,  $\mathbb{E} \big[ ||Ax^k-b||_1 \big] \leq \frac{32 n^3 (R+1/2)}{C_k}$. It remains to bound $C_k$. We will use induction to show that $\theta_k \leq \frac{2}{k+2}$ for all $k \geq 0$. The inequality clearly holds for $k = 0$ as $\theta_{0} = 1$. Suppose that the hypothesis holds for $k \geq 0$, namely, $\theta_k \leq \frac{2}{k+2}$ for $k \geq 0$. By the definition of $\theta_{k+1}$ and simple algebra, we obtain that
\begin{align*}
    \theta_{k+1}=\frac{\theta_k^2}{2} \biggr( \sqrt{1+\frac{4}{\theta_k^2}} - 1 \biggr) \leq \frac{2}{k+3}
\end{align*}
where the above inequality is due to $\theta_k \leq \frac{2}{k+2}$. Therefore, we achieve the conclusion of the hypothesis for $k + 1$. Now, simple algebra demonstrates that $C_k \geq \frac{1}{4}(k+1)(k+4) \geq \frac{1}{4}(k+1)^2$. Combining this lower bound of $C_{k}$ and the inequality in~\eqref{eq:iter_count_fifth} leads to the following result: 
\begin{equation*}
    \mathbb{E} \big[ ||Ax^k-b||_1 \big] \leq \frac{128 n^3 (R+1/2)}{(k+1)^2}.
\end{equation*}
As a consequence, we conclude the desired bound on the number of iterations $k$ required to satisfy the bound $\mathbb{E} \big[||A\text{vec}(X^k)-b||_1\big]\leq \varepsilon^\prime$.

\begin{figure*}[!ht]
\begin{minipage}[b]{.5\textwidth}
\includegraphics[width=65mm,height=45mm]{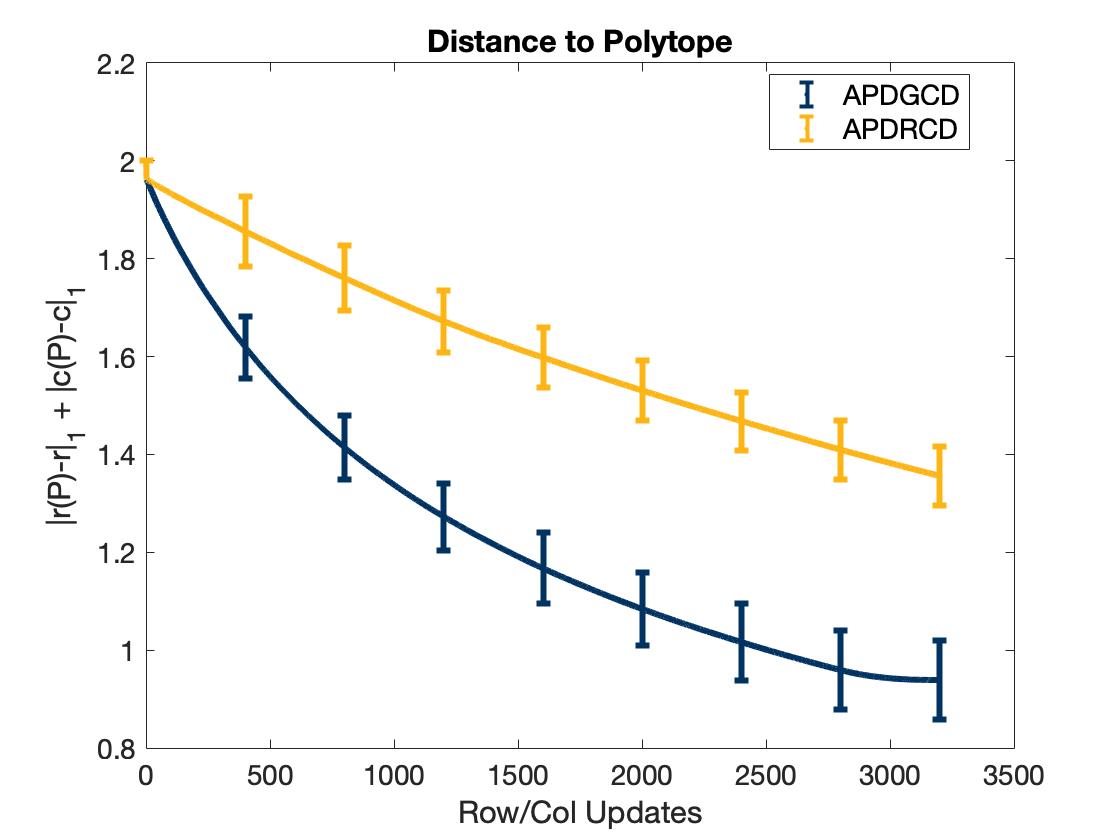}
\end{minipage}
\quad
\begin{minipage}[b]{.5\textwidth}
\includegraphics[width=65mm,height=45mm]{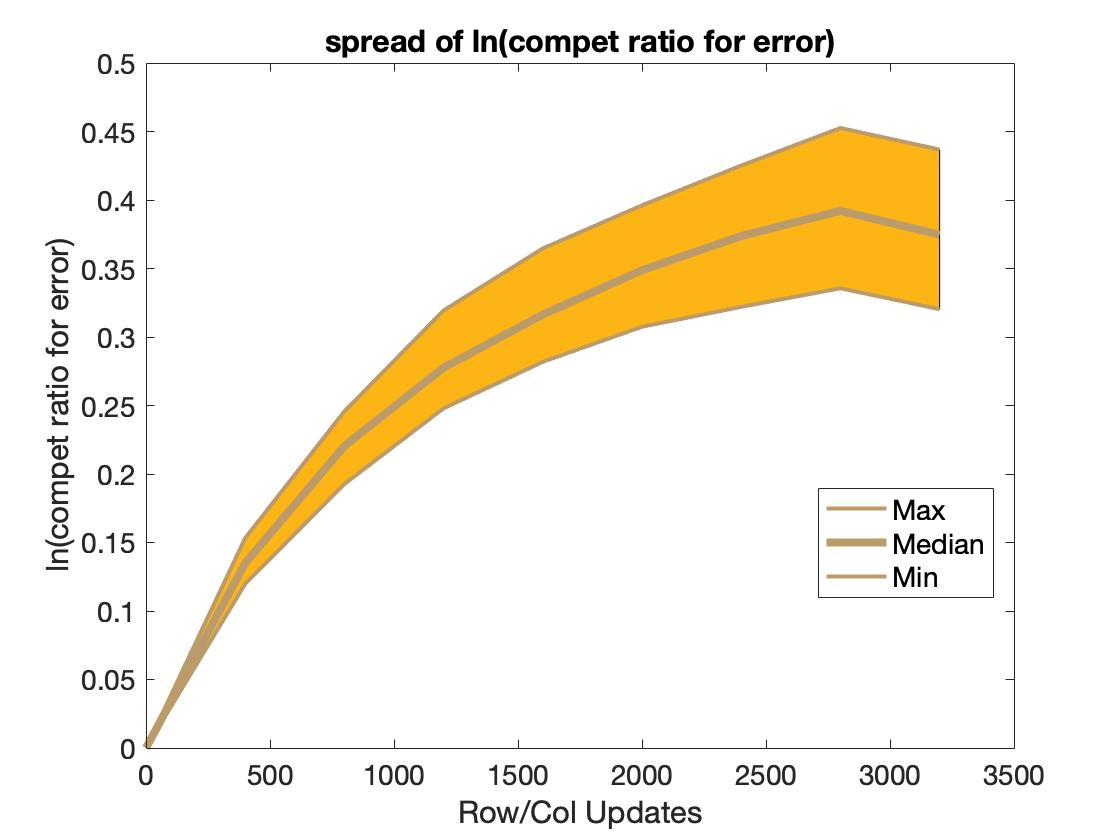}
\end{minipage}
\\
\begin{minipage}[b]{.5\textwidth}
\includegraphics[width=65mm,height=45mm]{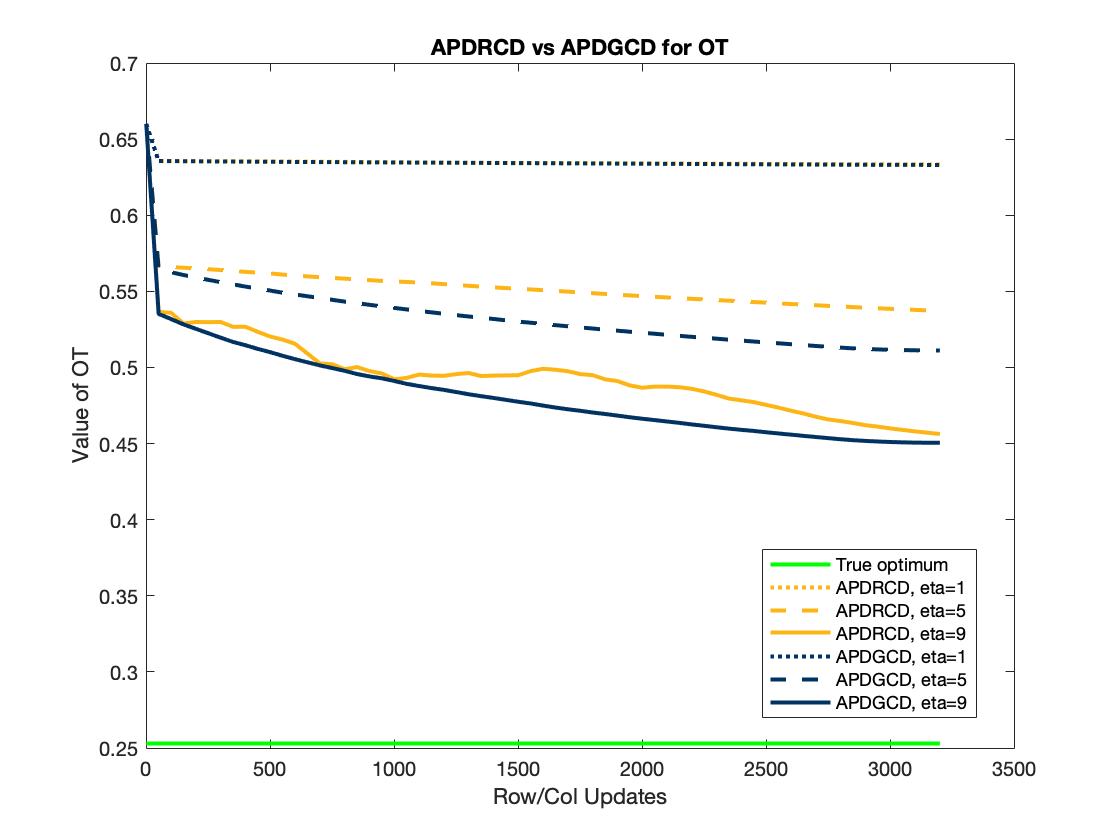}
\end{minipage}
\quad
\begin{minipage}[b]{.5\textwidth}
\includegraphics[width=65mm,height=45mm]{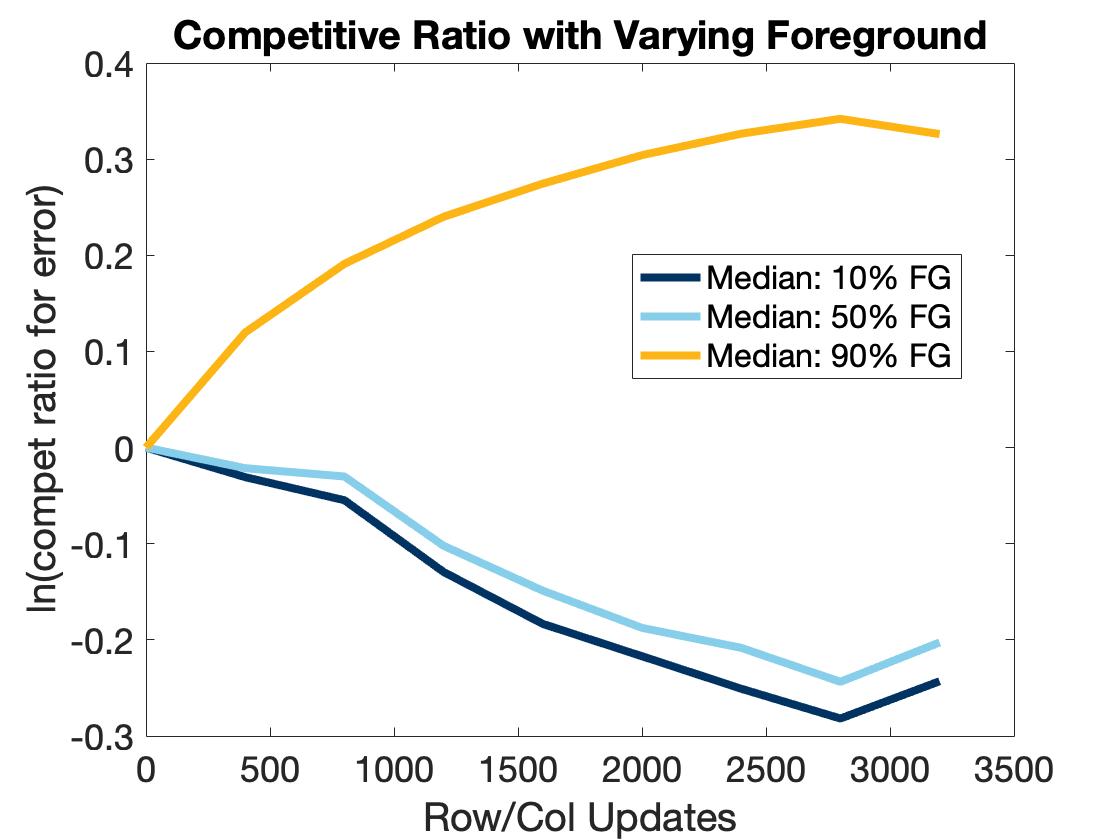}
\end{minipage}
\caption{\small Performance of APDGCD and APDRCD algorithms on the synthetic images. The organization of the images is similar to those in Figure~\ref{fig:rcd-agd-syn}.}
\label{fig:gcd-rcd-syn}
\end{figure*}

\subsection{Proof of Theorem~\ref{theorem:complex_APDRCD}}
\label{subsec:proof:theorem:complex_APDRCD}
The proof of the theorem follows the same steps as those in the proof of Theorem 1 in~\citep{Altschuler-2017-Near}. Here, we provide the detailed proof for the completeness. In particular, we denote $\tilde{X}$ the matrix returned by the APDRCD algorithm (Algorithm~\ref{Algorithm:APDCD}) with $\tilde{r}$, $\tilde{l}$ and $\varepsilon'/2$. Recall that, $X^*$ is a solution to the OT problem. Then, we obtain the following inequalities:
\begin{align*}
\lefteqn{\mathbb{E}[\langle C, \hat{X}\rangle] - \left\langle C, X^*\right\rangle}  \\
&\leq 2\eta\log(n) + 4\mathbb{E}\Big[ ||\tilde{X}\one 
- r||_1 + ||\tilde{X}^\top\one - l||_1 \Big]||C||_\infty \\
&\leq \frac{\varepsilon}{2}  + 4\mathbb{E}\Big[ ||\tilde{X}\one 
- r||_1 + ||\tilde{X}^\top\one - l||_1 \Big]||C||_\infty, 
\end{align*}
where the last inequality in the above display holds since $\eta = \frac{\varepsilon}{4\log(n)}$. Furthermore, we have
\begin{align*}
\lefteqn{\mathbb{E}\Big[ ||\tilde{X}\one 
- r||_1 + ||\tilde{X}^\top\one - l||_1 \Big]}  \\
& \leq \mathbb{E}\Big[\left\|\tilde{X}\one - \tilde{r}\right\|_1 + \left\|\tilde{X}^\top\one - \tilde{l}\right\|_1\Big] + \left\| r - \tilde{r}\right\|_1 + \left\| l - \tilde{l}\right\|_1 \\
& \leq \frac{\varepsilon'}{2} + \frac{\varepsilon'}{2} \ = \ \varepsilon'.
\end{align*}
Since $\varepsilon'=\dfrac{\varepsilon}{8\left\|C\right\|_\infty}$, the above inequalities demonstrate $\mathbb{E}[\langle C, \hat{X}\rangle] - \left\langle C, X^*\right\rangle \leq \varepsilon$. Hence, we only need to bound the complexity. Following the approximation scheme in Step 1 of Algorithm~\ref{Algorithm:ApproxOT_APDCD}, we achieve the following bound
\begin{align*}
   \lefteqn{R  = \frac{||C||_\infty}{\eta}+ \log(n) - 2 \log(\min \limits_{1\leq i,j\leq n} \{\widetilde{r}_i,\widetilde{l}_i\})} \\
     & \leq \frac{4\left\|C\right\|_\infty\log(n)}{\varepsilon} + \log(n) - 2\log\left(\frac{\varepsilon}{64n\left\|C\right\|_\infty}\right).
\end{align*}
Given the above bound with $R$, we have the following bound with the iteration count:
\begin{align*}
     \lefteqn{k \leq   1+ 12 n^{\frac{3}{2}}\sqrt{\frac{R+1/2}{\varepsilon^\prime}} } \\
      &\leq    1 + 12 n^{\frac{3}{2}}\biggr\{\frac{8 ||C||_\infty}{\varepsilon} \biggr(\frac{4||C||_\infty \log (n)}{\varepsilon}+\log(n) \\
     &- 2\log \big(\frac{\varepsilon}{64n||C||_\infty}\big)+\frac{1}{2}\biggr)\biggr\}^{\frac{1}{2}} \\
     &=  \mathcal{O} \biggr(\frac{n^{\frac{3}{2}} ||C||_\infty \sqrt{\log(n)}}{\varepsilon} \biggr).
\end{align*}
Combining the above result with the fact that each iteration the APDRCD algorithm requires $\mathcal{O}(n)$ arithmetic operations, we conclude that the total number of arithmetic operations required for the APDRCD algorithm for approximating optimal transport is $\mathcal{O} \biggr(\frac{n^{\frac{5}{2}} ||C||_\infty\sqrt{\log(n)}}{\varepsilon}\biggr)$. Furthermore, the column $\tilde{r}$ and row $\tilde{l}$ in Step 2 of Algorithm~\ref{Algorithm:ApproxOT_APDCD} can be found in $\bigO(n)$ arithmetic operations while Algorithm~2 in \citep{Altschuler-2017-Near} requires $\bigO(n^2)$ arithmetic operations. As a consequence, we conclude that the total number of arithmetic operations is $\mathcal{O} \biggr(\frac{n^{\frac{5}{2}} ||C||_\infty\sqrt{\log(n)}}{\varepsilon}\biggr)$. 

Note that by Markov inequality, $$P(\langle C, \hat{X} \rangle > a) \leq \frac{\mathbb{E} [\langle C, \hat{X} \rangle ]}{a}$$ for $a \geq 0$. Combining with theorem~\ref{theorem:complex_APDRCD} gives us a high probability bound for obtaining an $\epsilon$-optimal solution.

\begin{figure*}[!ht]
\begin{minipage}[b]{.3\textwidth}
\includegraphics[width=50mm,height=35mm]{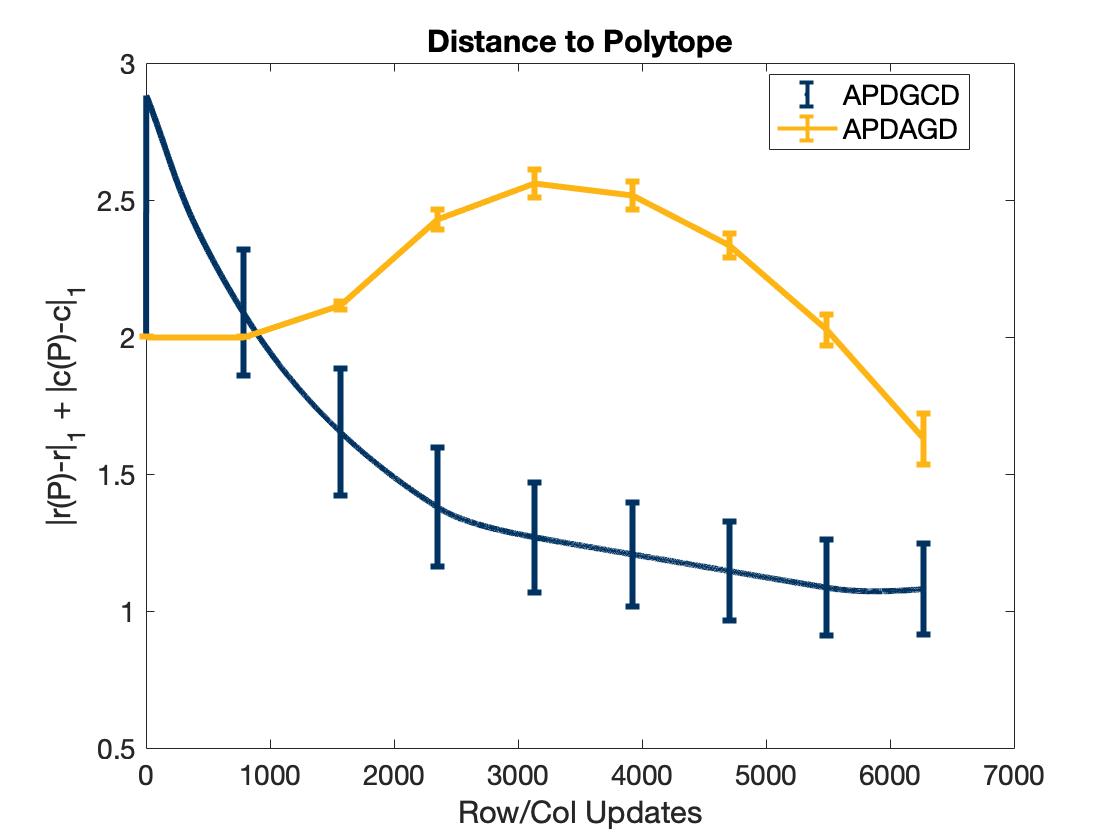}
\end{minipage}
\quad
\begin{minipage}[b]{.3\textwidth}
\includegraphics[width=50mm,height=35mm]{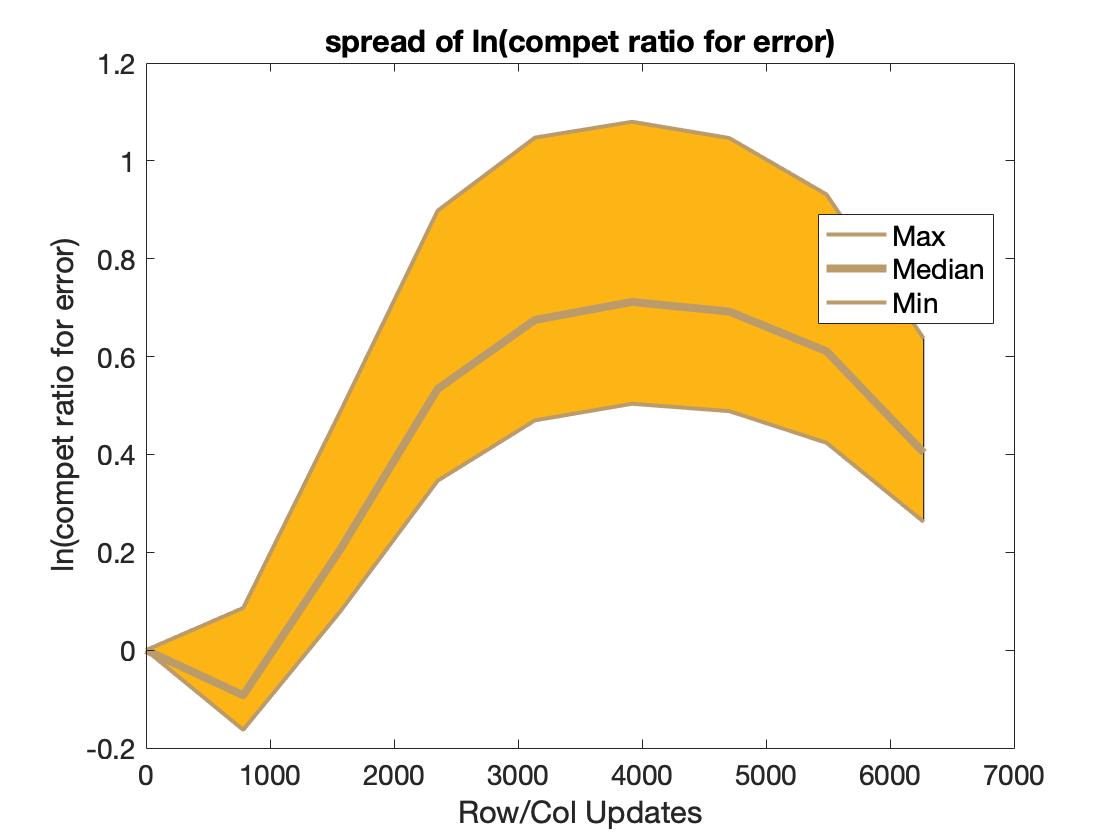}
\end{minipage}
\quad
\begin{minipage}[b]{.3\textwidth}
\includegraphics[width=50mm,height=35mm]{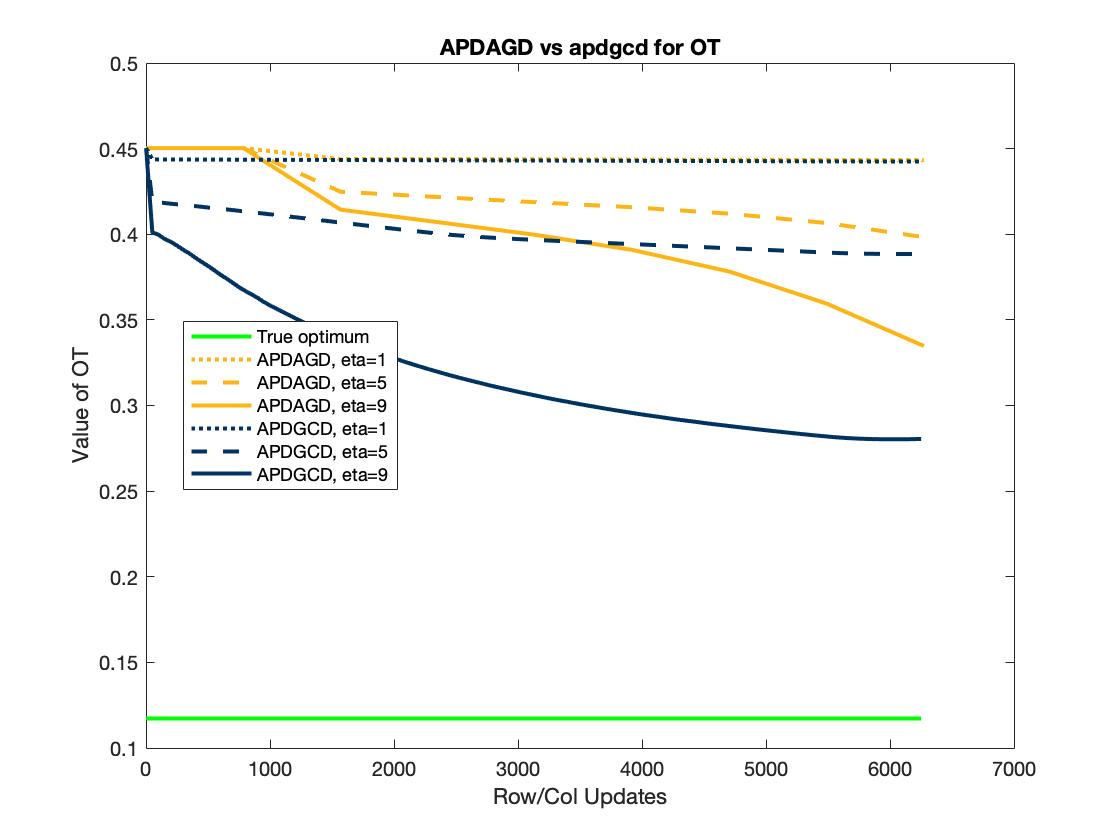}
\end{minipage}
\\
\begin{minipage}[b]{.3\textwidth}
\includegraphics[width=50mm,height=35mm]{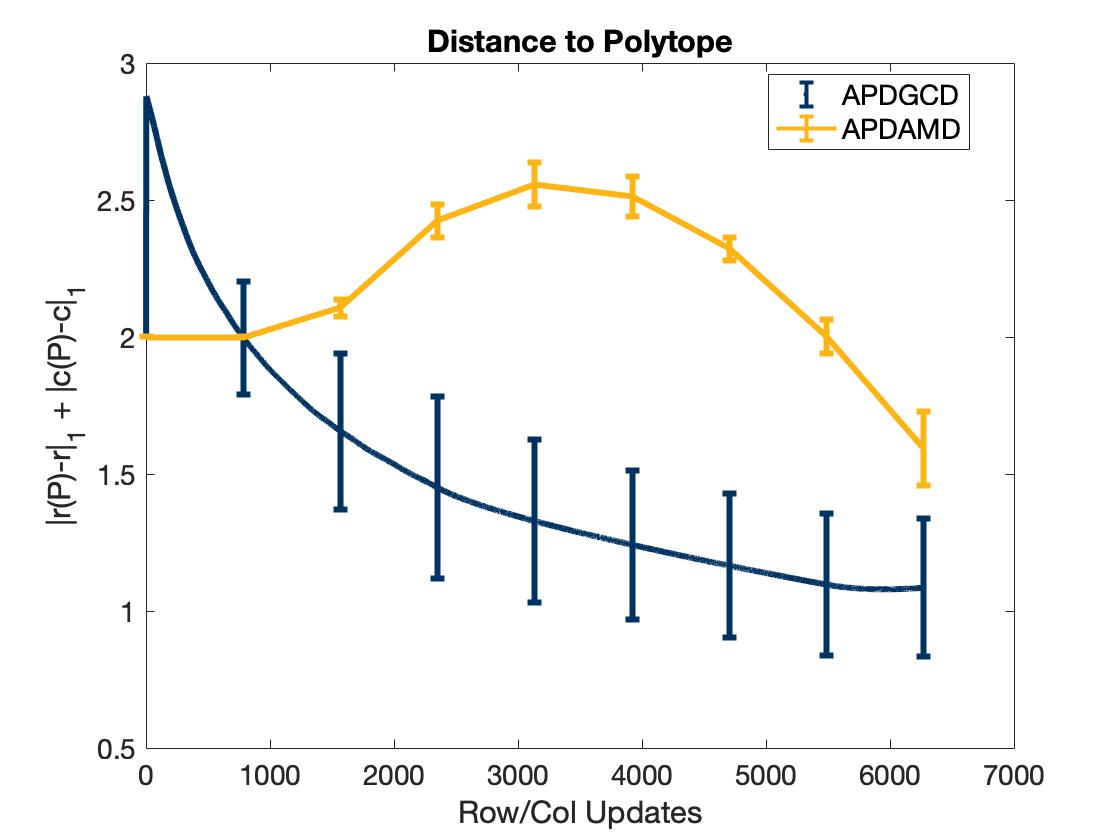}
\end{minipage}
\quad
\begin{minipage}[b]{.3\textwidth}
\includegraphics[width=50mm,height=35mm]{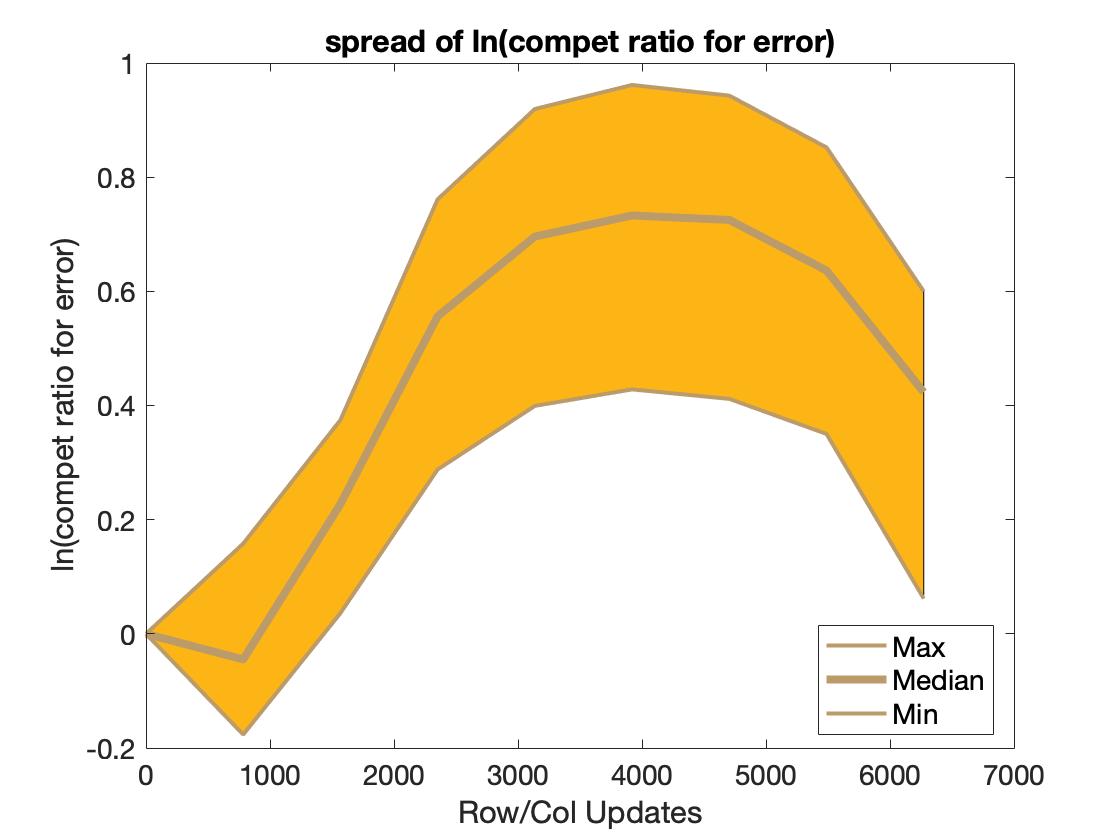}
\end{minipage}
\quad
\begin{minipage}[b]{.3\textwidth}
\includegraphics[width=50mm,height=35mm]{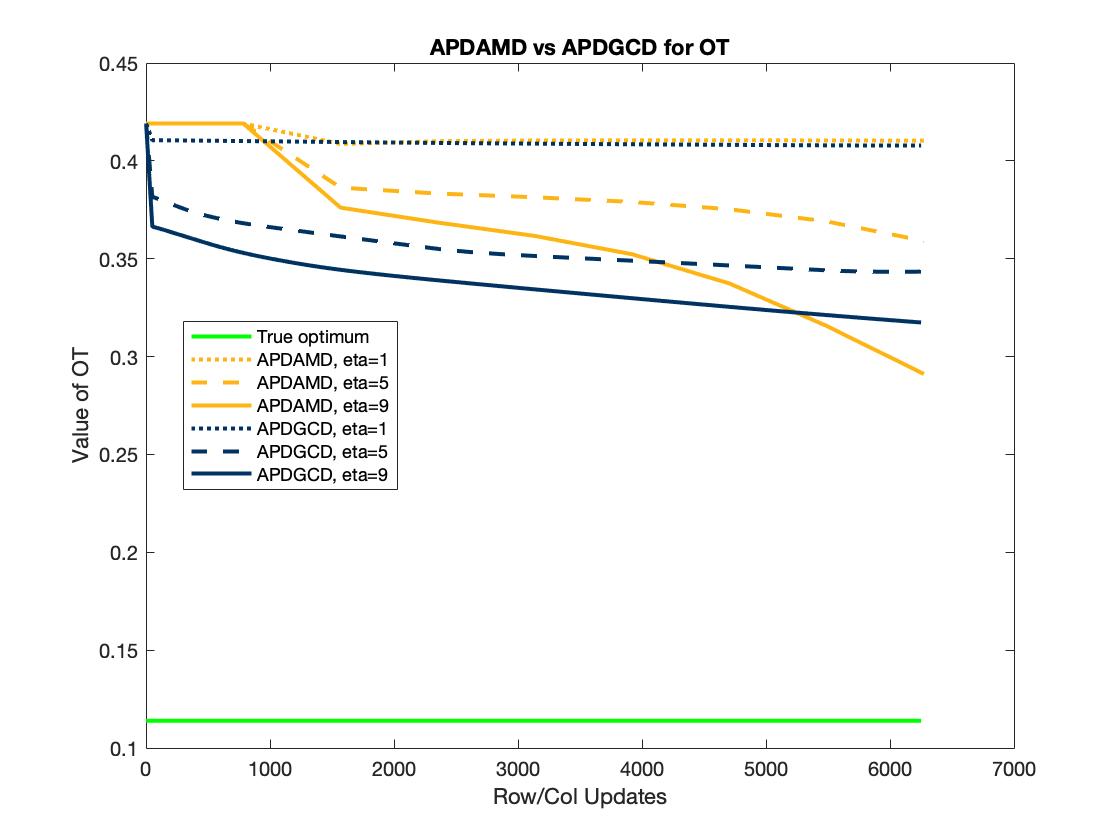}
\end{minipage}
\\
\begin{minipage}[b]{.3\textwidth}
\includegraphics[width=50mm,height=35mm]{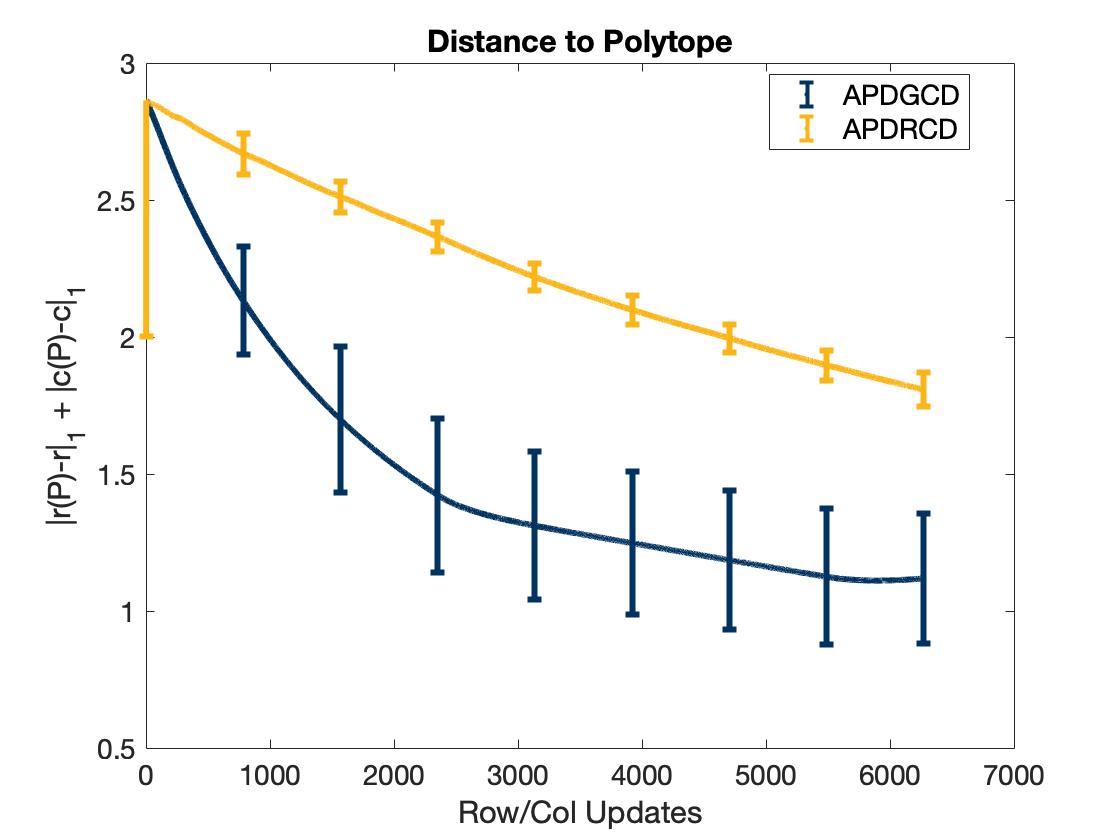}
\end{minipage}
\quad
\begin{minipage}[b]{.3\textwidth}
\includegraphics[width=50mm,height=35mm]{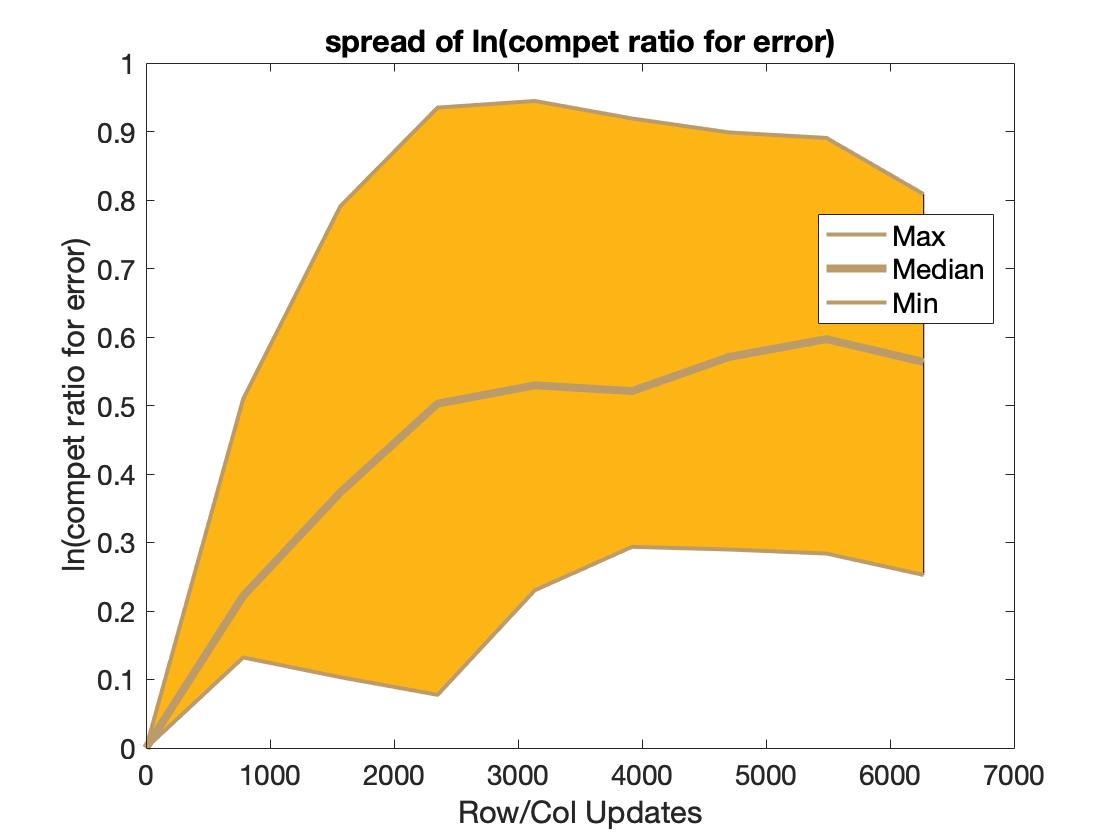}
\end{minipage}
\quad
\begin{minipage}[b]{.3\textwidth}
\includegraphics[width=50mm,height=35mm]{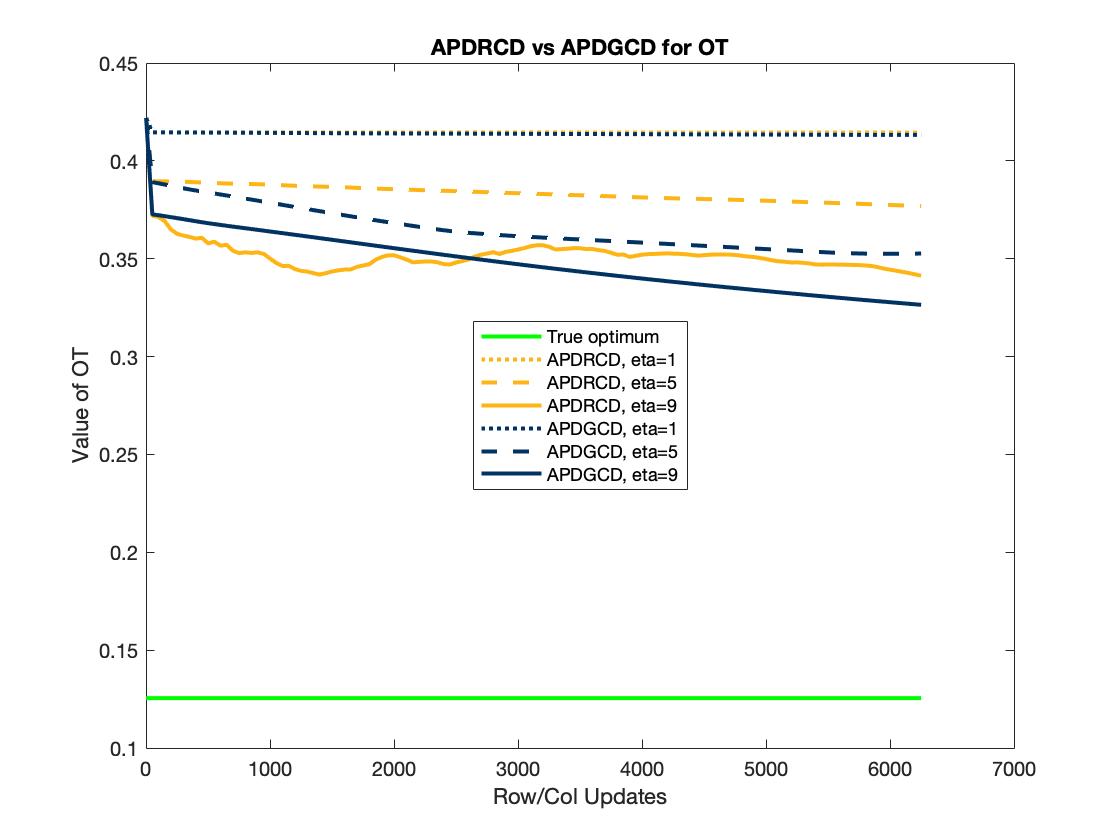}
\end{minipage}
\caption{\small Performance of the APDGCD, APDAGD, APDAMD, and APDRCD algorithms on
the MNIST real images. The organization of the images is similar to those in Figure~\ref{fig:rcd-agd-amd-mnist}.}
\label{fig:gcd-agd-amd-rcd-mnist}
\end{figure*}

\section{Approximating Wasserstein Barycenter with the APDRCD algorithm }
\label{Sec:WB}

In this section, we introduce the distributed Wasserstein barycenter problem and present the adapted Accelerated Primal-Dual Coordinate Descent algorithm to approximate the Wasserstein barycenter efficiently for a family of probability measures. We first introduce the setup of the distributed Wasserstein barycenter problem and its entropic regularization. Then we construct the dual form of the problem. We further generalize the APDRCD and the APDGCD algorithms to compute the Wasserstein barycenter and demonstrate its flexibility for computations in the decentralized distributed setting. 

\subsection{Distributed Wasserstein Barycenter Problem}

Given a network of probability measures, the optimal transport distance naturally defines the mean representative of the given set of measures. The Wasserstein barycenter problem consider finding the probability measure that is closest to all the given measures in terms of regularized Wasserstein distance. Wasserstein barycenters captur the structure of the given set of objects in a geometrically faithful way. For simplicity, we present the Wasserstein barycenter problem for $m$ discrete measures or histograms with entropic regularizations, but the algorithm could be easily generalized to any continous measures by drawing samples from the given measures. 

As introduced in Eq~\ref{prob:OT}, given two probability measures $r, l \in \Delta^n$, we define the regularized Wasserstein distance between $r$ and $l$ as: 
\begin{align}
\mathcal{W}(r,l) := \min_{\pi \in \Pi(r, l)} \left \langle \pi,C \right \rangle
\end{align}
where $\Pi(r,l) = \{\pi \in \br_+^{n\times n}: \pi \one = r, \pi^T\one=l\}$ is the set of all coupling measures between the measure $r$ and $l$.  Using the entropic regularization as in Eq~\ref{prob:regOT}, the regularized OT distance is defined as:
\begin{align}
\mathcal{W}_\gamma(r,l) := \min_{\pi \in \Pi(r, l)} \{\left \langle \pi,C \right \rangle + \gamma H(\pi)\}
\end{align}
where $\gamma \geq 0$ is the regularization parameter. 

For a given set of probability measures $r_1, r_2, ..., r_m$ and corresponding cost matrices $C_1, C_2, ..., C_m \in \br_+^{n\times n}$, the weighted regularized Wasserstein barycenter problem is therefore: 
\begin{align} \label{regWB}
 \min_{q \in \Delta^n}\Sigma_{k=1}^m w_k \mathcal{W}_\gamma(r_k, q)
\end{align}
where $w_k\geq 0, k=1, ...m, \Sigma_{k=1}^m w_k = 1$ are the weights over the given measures.

\begin{figure*}[!ht]
\begin{minipage}[b]{.5\textwidth}
\includegraphics[width=65mm,height=48mm]{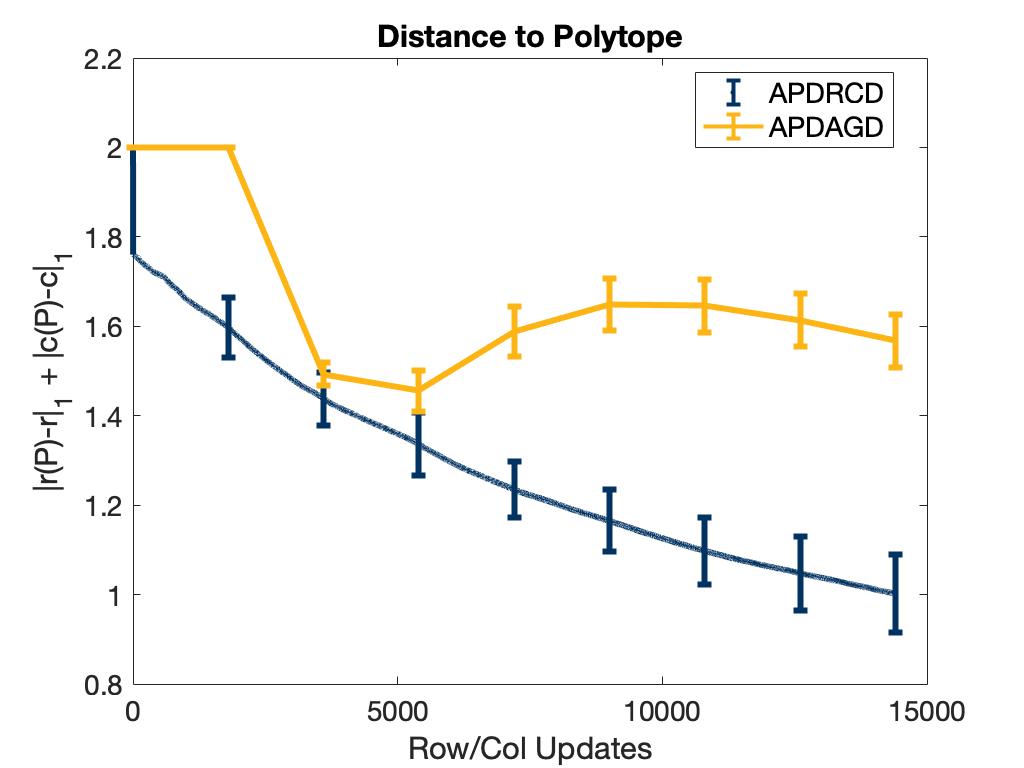}
\end{minipage}
\quad
\begin{minipage}[b]{.5\textwidth}
\includegraphics[width=65mm,height=48mm]{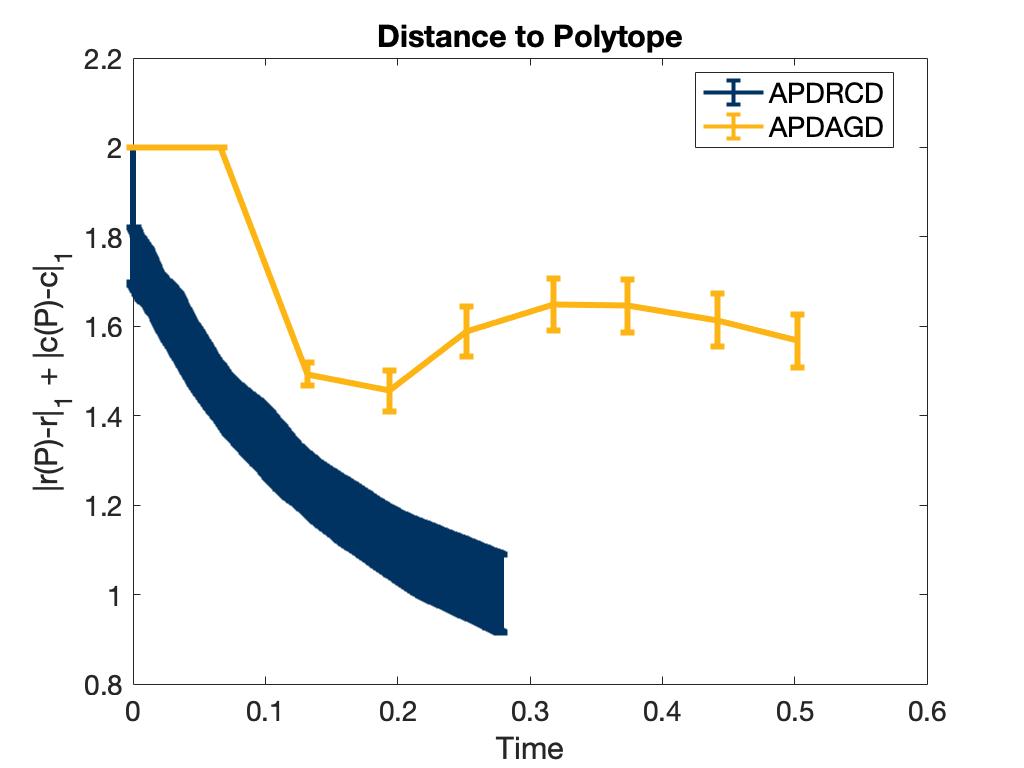}
\end{minipage}
\\
\begin{minipage}[b]{.5\textwidth}
\includegraphics[width=65mm,height=48mm]{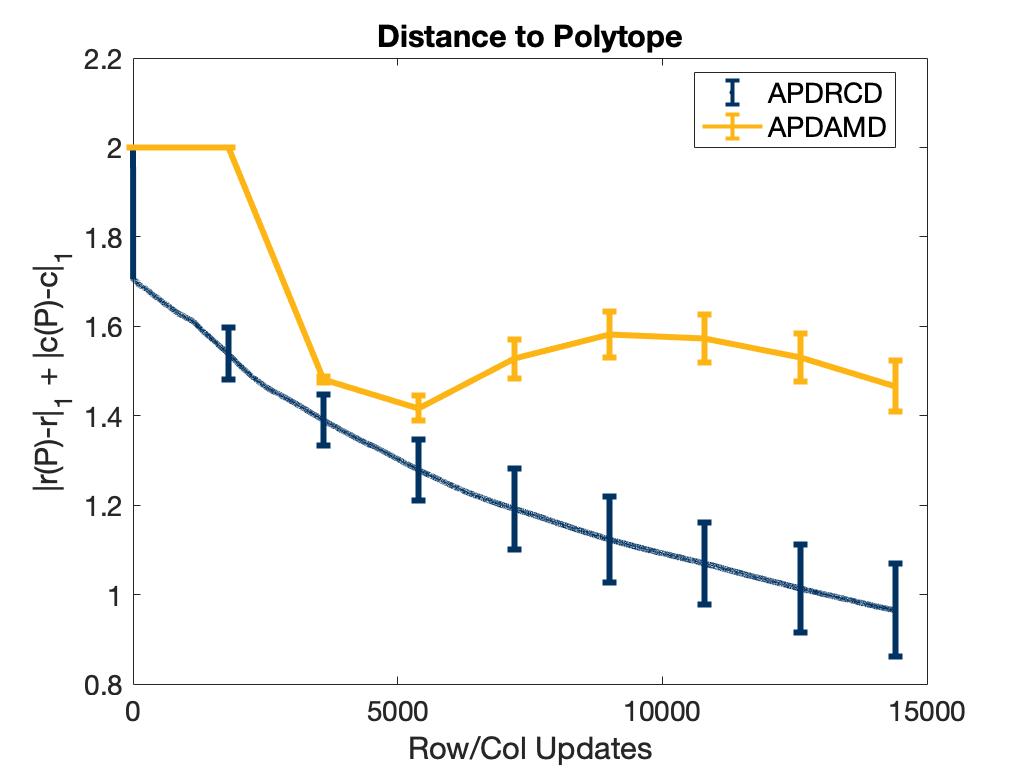}
\end{minipage}
\quad
\begin{minipage}[b]{.5\textwidth}
\includegraphics[width=65mm,height=48mm]{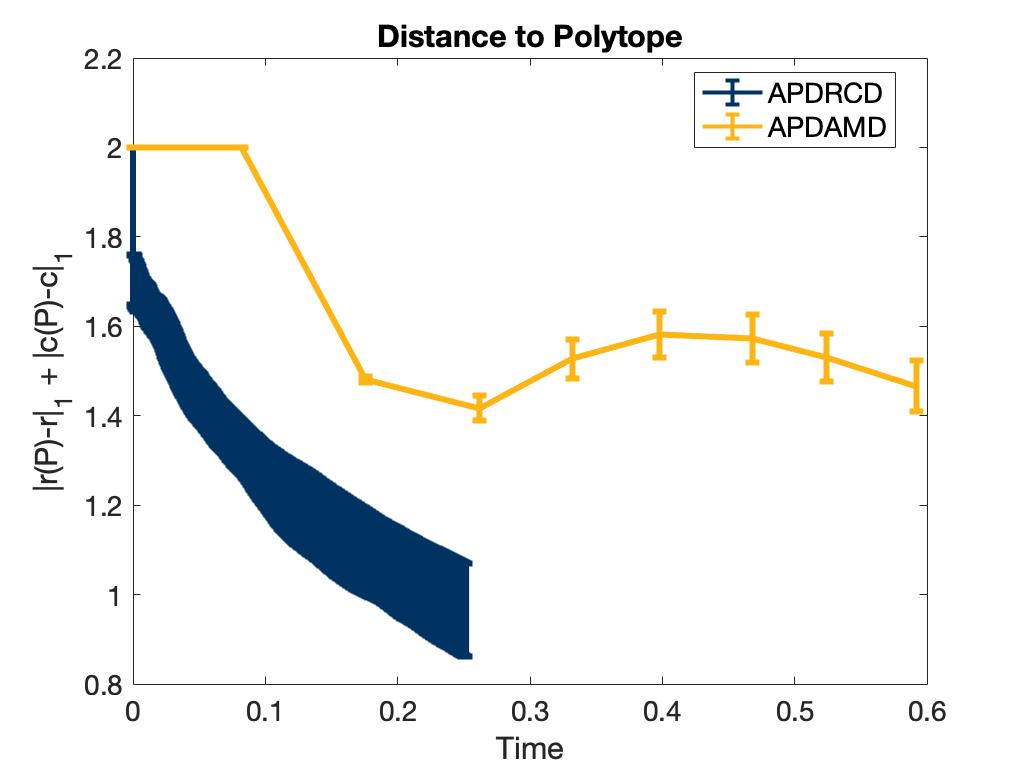}
\end{minipage}
\caption{\small Performance of APDRCD, APDAGD and APDAMD algorithms on $30*30$ synthetic images.}\label{fig:syn30}
\end{figure*}

\subsection{Network Constrains in the Barycenter Problem}
The given set of probability measures form a network, where each measure $r_i$ is held by an agent $i$ on the network. Such a network can be modeled as a fixed connected undirected graph $\mathcal{G} = (V, E)$ where $V$ is the set of $m$ nodes and $E$ is the set of edges. For convenience, we assume that the graph doesn't contain self-loops. The network structure add information constraints: each node can only exchange information with its direct neighbors. 

The communication constraints can be well-captured by the Laplacian matrix $\overline{\mathit{W} }\in \br^{m \times m}$of the graph $\mathcal{G}$ such that $[\overline{\mathit{W} }]_{ij} = -1$ if $(i,j) \in E$, $[\overline{\mathit{W} }]_{ij} = deg(i)$ if $i = j$, $[\overline{\mathit{W} }]_{ij} = 0$ otherwise. We further define the communitation matrix by $\mathit{W} := \overline{\mathit{W}}\otimes \mathit{I}_n$. 

We note some properties of the matrix $\mathcal{W}$. First, for connected and undirected $\mathcal{G}$, both $\overline{\mathit{W}}$ and $\mathcal{W}$ are positive semidefinite. Besides, $\sqrt{\mathcal{W}}\mathrm{q}=0$ if and only if $q_1 = ... = q_m$ where  $\mathrm{q} = [q_1, ...,q_m]^T \in \br^{mn}$.

\begin{algorithm}[!ht]
\caption{Generalized APDRCD for Computing Wasserstein Barycenters}
\label{Algorithm:WB-RCD}
\textbf{Input:}: Each agent $k \in V$ is assigned measure $r_l$, an upper bound $L$ for the Lipschitz constant of the gradient of the dual objective, and $N$. \\
\textbf{For all agents} $k \in V$, set $r_k = (1-\frac{\epsilon}{8})(r_k+\frac{\epsilon}{n(8-\epsilon)}\one)$,$ \gamma(k) = \frac{\epsilon}{4mw_k \ln n}, \eta_k^0 = \xi_k^0 = \lambda_k^0 = \hat{q}_k^0 = \bm{0} \in \br^n, A_0 = \alpha_0 = 0$ \\
\textbf{For each agents}  $k \in V$:\\
\textbf{for} $t = 0, ..., N-1$ \textbf{do} \\
Compute $\alpha_{t+1}$ as the largest root of \ \  $A_{t+1} := A_t + \alpha_{t+1} = 2L\alpha_{t+1}^2$  \\
Update $\lambda_k^{t+1} =\frac{\alpha_{t+1}\xi_k^t + A_k\eta_k^t}{A_{t+1}} $. \\
Calculate $\nabla \mathcal{W}_{\gamma(k),r_k}^\ast(\lambda_k^{t+1})$:  $[\nabla \mathcal{W}_{\gamma(k),r_k}^\ast(\lambda_k^{t+1})]_i = \sum\limits_{j = 1}^n[p_k]_j \frac{\exp(([\lambda]_i-[C_k]_{ij})/\gamma(k))}{\sum\limits_{s=1}^n \exp(([\lambda_s - [C_k]_{sj})/\gamma(k))}$  \\
Share $\nabla \mathcal{W}_{\gamma(k),r_k}^\ast(\lambda_k^{t+1})$ with $\{j|(i,j)\in E\}$ \\
\textbf{Randomly choose coordinate $s$} from $\{1, 2, ..., n\}$: \\
\textbf{Update} $[\xi_k^{t+1} ]_s= [\xi_k^t]_s - [\alpha_{t+1}\sum\limits_{j = 1}^m \mathcal{W}_{kj}\nabla \mathcal{W}_{\gamma(k),r_k}^\ast(\lambda_k^{t+1})]_s$ \\
\textbf{Update} $[\eta_k^{t+1}]_s = (\alpha_{t+1} [\xi_k^{t+1}]_s + A_k[\eta_k^{t+1}]_s)/A_{t+1}$ \\
\textbf{Update} $[q_k^{t+1}]_s = \frac{1}{A_{t+1}}\sum\limits_{k=0}^{t+1}\alpha_i[q_i(\lambda_k^{t+1})]_s = (\alpha_{t+1}[q_i([\lambda_{t+1}]_k)]_s+A_t[ q_k^t]_s)/A_{t+1}$ where $q_k(\cdot)$ is defined as $\nabla \mathcal{W}_{\gamma(k), r_k}^\ast(\cdot)$ \\
\textbf{end for} \\
\textbf{Output: } $\mathrm{q}^N = [q_1^T, ..., q_m^T]^T$
\end{algorithm}

\begin{figure*}[!ht]
\begin{minipage}[b]{.5\textwidth}
\includegraphics[width=65mm,height=48mm]{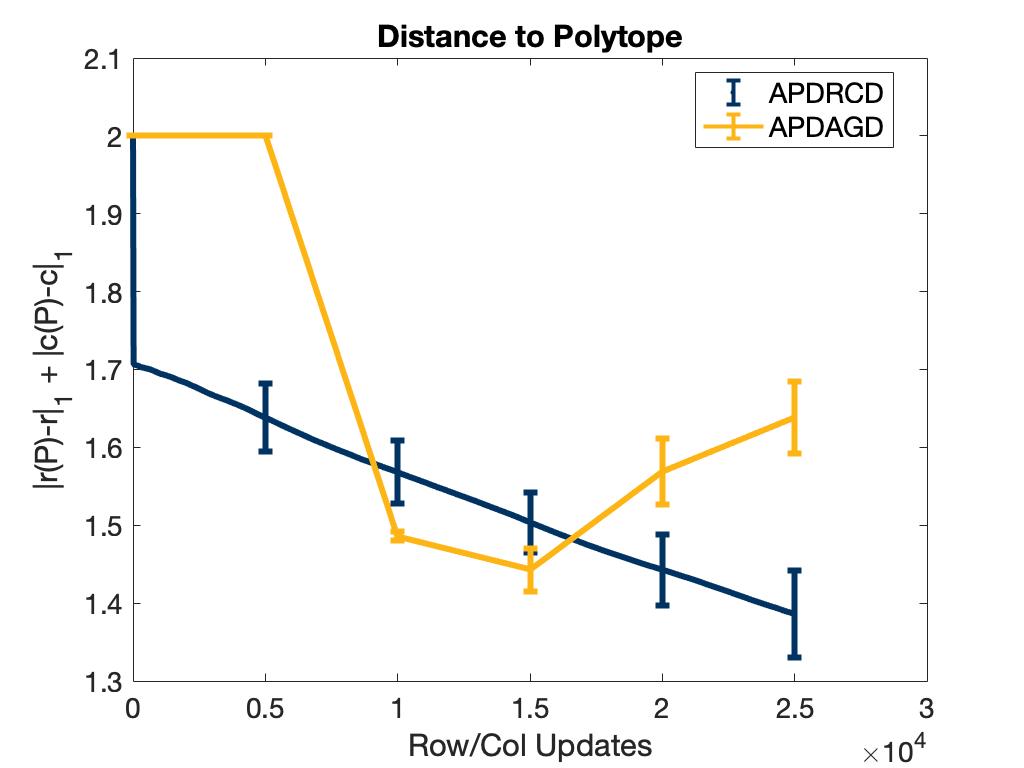}
\end{minipage}
\quad
\begin{minipage}[b]{.5\textwidth}
\includegraphics[width=65mm,height=48mm]{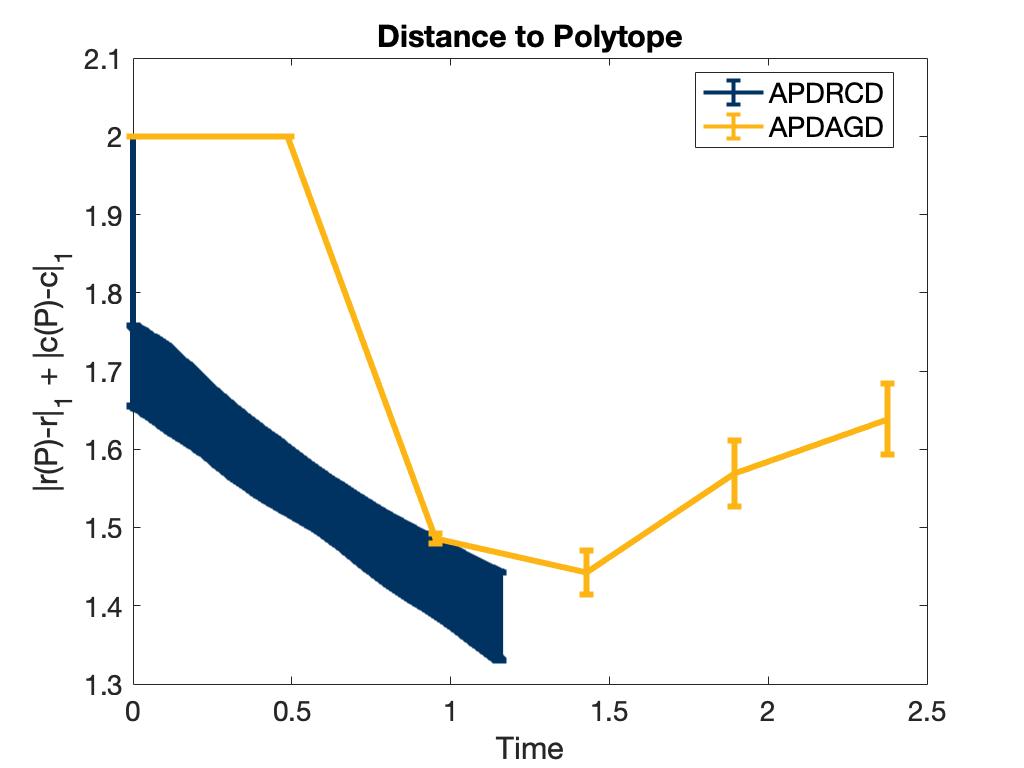}
\end{minipage}
\\
\begin{minipage}[b]{.5\textwidth}
\includegraphics[width=65mm,height=48mm]{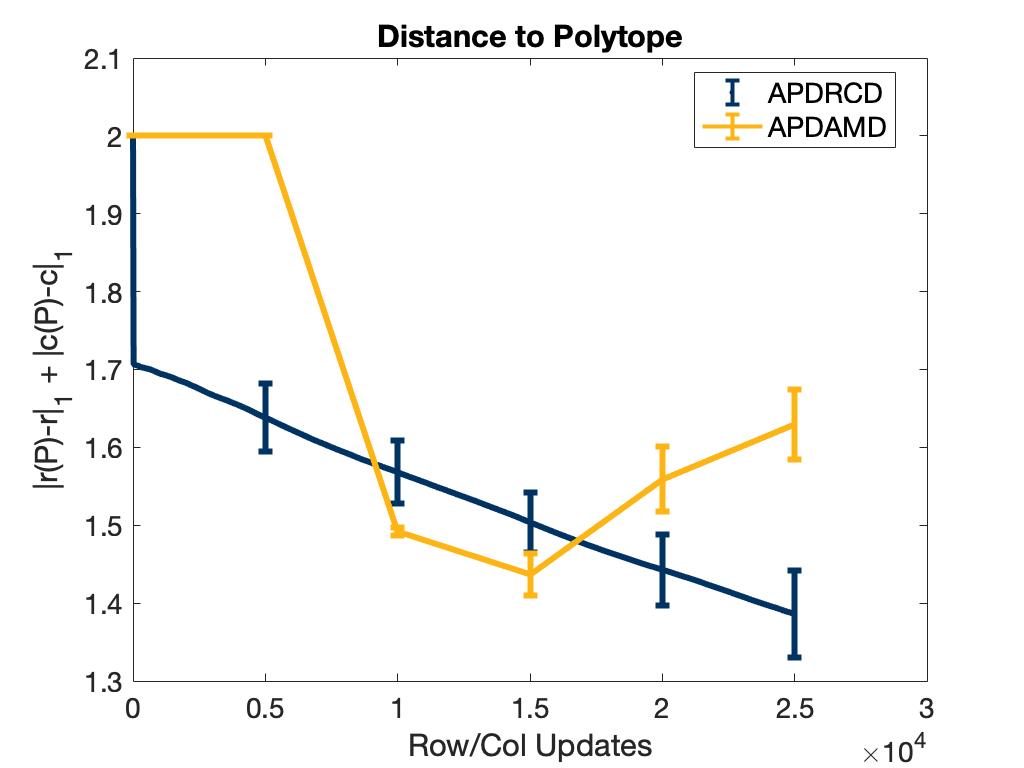}
\end{minipage}
\quad
\begin{minipage}[b]{.5\textwidth}
\includegraphics[width=65mm,height=48mm]{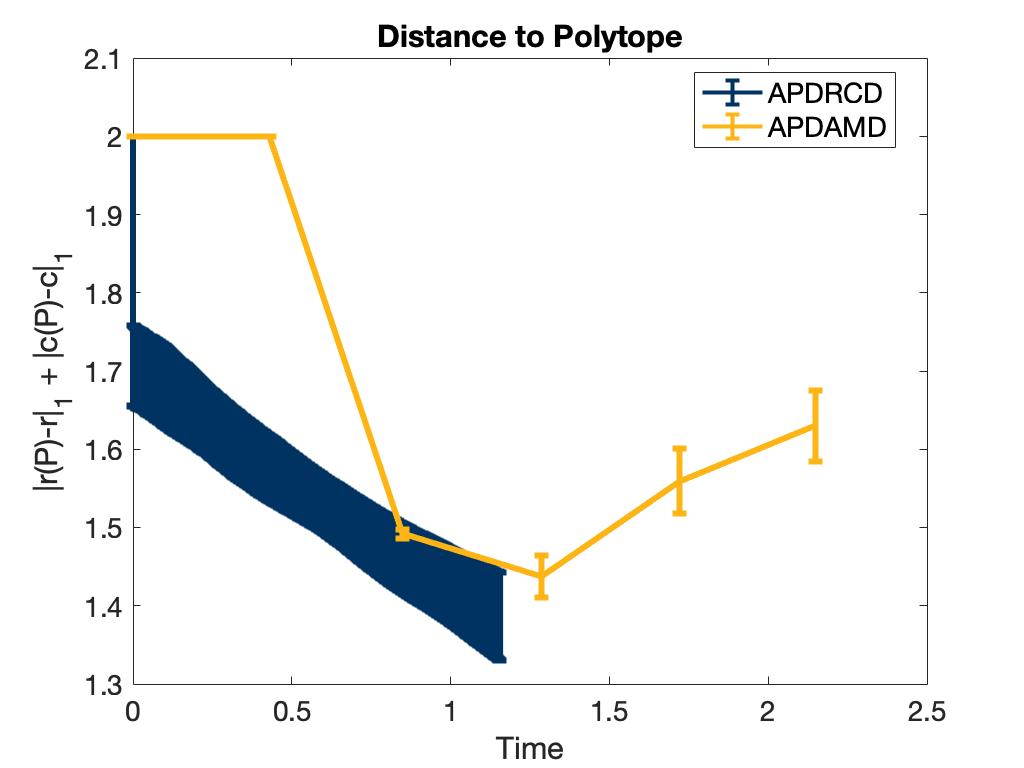}
\end{minipage}
\caption{\small Performance of APDRCD, APDAGD and APDAMD algorithms on $50*50$ synthetic images.}\label{fig:syn50}
\end{figure*}

\subsection{Dual Formulation of the Wasserstein Barycenter Problem}

To construct the dual problem, we first rewrite problem~\ref{regWB} as:
\begin{align}
\min_{\substack{q_1, ...,q_m \in \Delta^n \\ q_1=...=q_m}} \mathit{W}_\gamma (\mathrm{r}, \mathrm{q}) := \Sigma_{k=1}^mw_k \mathcal{W}_{\gamma(k)}(r_k, q_k)
\end{align}
where $\mathrm{r} = [r_1, ...,r_m]^T, \mathrm{q} = [q_1, ...,q_m]^T $.  

By using the property that $\sqrt{\mathcal{W}}\mathrm{q}=0$ if and only if $q_1 = ... = q_m$ where  $\mathrm{q} = [q_1, ...,q_m]^T \in \br^{mn}$, the above optimization problem can be further rewritten as:
\begin{align} \label{WBprimal}
\max_{\substack{q_1, ...,q_m \in \Delta^n \\ \sqrt{\mathcal{W}}\mathrm{q}=0}} - \Sigma_{k=1}^mw_k \mathcal{W}_{\gamma(k)}(r_k, q_k)
\end{align}

Given the above optimization problem with linear constrains, we introduce a vector of dual variables $\mathrm{\lambda}=[\lambda_1^T, ..., \lambda_m^T]^T \in \br^{mn}$ for the constraints $\sqrt{\mathcal{W}}\mathrm{q}=0$. The Lagrangian dual problem for (\ref{WBprimal}) is:
\begin{align} \label{WBdual}
 & \min_{\mathrm{\lambda} \in \br^{mn}}\max_{\mathrm{q}\in \br^{mn}} \nonumber \\
  & \biggr \{\Sigma_{k=1}^m \left \langle \lambda_k,[\sqrt{\mathit{W}}\mathrm{q}]_k \right \rangle  
 - \Sigma_{k=1}^mw_k \mathcal{W}_{\gamma(k)}(r_k, q_k) \biggr \} \nonumber \\
 &= \min_{\mathrm{\lambda}\in \br^{mn} }\Sigma_{k=1}^m w_k\mathcal{W}_{\gamma(k), r_k}^\ast([\sqrt{\mathcal{W}}\mathrm{\lambda}]_k/w_k)
\end{align}
where $\mathcal{W}_{\gamma(k),r_k}^\ast(\cdot)$ is the Fenchel-Legendre transofrm of $\mathcal{W}_{\gamma(k)}(r_k, \cdot)$. The function $\mathcal{W}_{\gamma(k),r_k}^\ast(\cdot)$ enjoys the nice property that it is a smooth function with Lipschitz-continous gradient as shown in ~\citep{dvurechenskii2018decentralize}.

\begin{algorithm}[!ht]
\caption{Generalized APDGCD for Computing Wasserstein Barycenters}
\label{Algorithm:WB-GCD}
\textbf{Input:}: Each agent $k \in V$ is assigned measure $r_l$, an upper bound $L$ for the Lipschitz constant of the gradient of the dual objective, and $N$. \\
\textbf{For all agents} $k \in V$, set $r_k = (1-\frac{\epsilon}{8})(r_k+\frac{\epsilon}{n(8-\epsilon)}\one)$,$ \gamma(k) = \frac{\epsilon}{4mw_k \ln n}, \eta_k^0 = \xi_k^0 = \lambda_k^0 = \hat{q}_k^0 = \bm{0} \in \br^n, A_0 = \alpha_0 = 0$ \\
\textbf{For each agents}  $k \in V$:\\
\textbf{for} $t = 0, ..., N-1$ \textbf{do} \\
Compute $\alpha_{t+1}$ as the largest root of \ \  $A_{t+1} := A_t + \alpha_{t+1} = 2L\alpha_{t+1}^2$  \\
Update $\lambda_k^{t+1} =\frac{\alpha_{t+1}\xi_k^t + A_k\eta_k^t}{A_{t+1}} $. \\
Calculate $\nabla \mathcal{W}_{\gamma(k),r_k}^\ast(\lambda_k^{t+1})$:  $[\nabla \mathcal{W}_{\gamma(k),r_k}^\ast(\lambda_k^{t+1})]_i = \sum\limits_{j = 1}^n[p_k]_j \frac{\exp(([\lambda]_i-[C_k]_{ij})/\gamma(k))}{\sum\limits_{s=1}^n \exp(([\lambda_s - [C_k]_{sj})/\gamma(k))}$  \\
Share $\nabla \mathcal{W}_{\gamma(k),r_k}^\ast(\lambda_k^{t+1})$ with $\{j|(i,j)\in E\}$ \\
\textbf{Select coordinate $s$} from $\{1, 2, ..., n\}$ where $s = \argmax_s |[\nabla \mathcal{W}_{\gamma(k), r_k}^\ast(\cdot)]_s| $ \\
\textbf{Update} $[\xi_k^{t+1} ]_s= [\xi_k^t]_s - [\alpha_{t+1}\sum\limits_{j = 1}^m \mathcal{W}_{kj}\nabla \mathcal{W}_{\gamma(k),r_k}^\ast(\lambda_k^{t+1})]_s$ \\
\textbf{Update} $[\eta_k^{t+1}]_s = (\alpha_{t+1} [\xi_k^{t+1}]_s + A_k[\eta_k^{t+1}]_s)/A_{t+1}$ \\
\textbf{Update} $[q_k^{t+1}]_s = \frac{1}{A_{t+1}}\sum\limits_{k=0}^{t+1}\alpha_i[q_i(\lambda_k^{t+1})]_s = (\alpha_{t+1}[q_i([\lambda_{t+1}]_k)]_s+A_t[ q_k^t]_s)/A_{t+1}$ where $q_k(\cdot)$ is defined as $\nabla \mathcal{W}_{\gamma(k), r_k}^\ast(\cdot)$ \\
\textbf{end for} \\
\textbf{Output: } $\mathrm{q}^N = [q_1^T, ..., q_m^T]^T$
\end{algorithm}
\subsection{Approximate Wasserstein Barycenters with Accelerated Coordinate Descent}
We apply the accelerated primal-dual randomized coordinate descent  (APDRCD) algorithm and accelerated primal-dual greedy coordinate descent  (APDGCD) algorithm to solve the pair of primal and dual problems for Wasserstein barycenter. The algorithms are presented in Algorithm \ref{Algorithm:WB-RCD} and Algorithm \ref{Algorithm:WB-GCD}.

\begin{figure*}[!ht]
\begin{minipage}[b]{.5\textwidth}
\includegraphics[width=65mm,height=48mm]{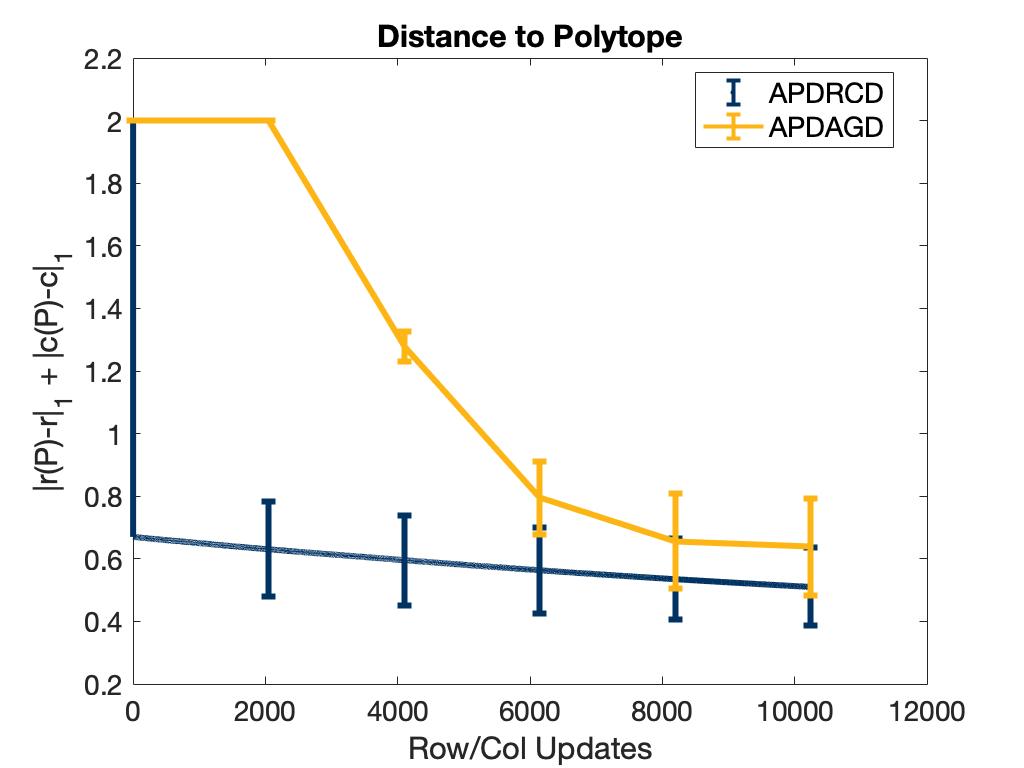}
\end{minipage}
\quad
\begin{minipage}[b]{.5\textwidth}
\includegraphics[width=65mm,height=48mm]{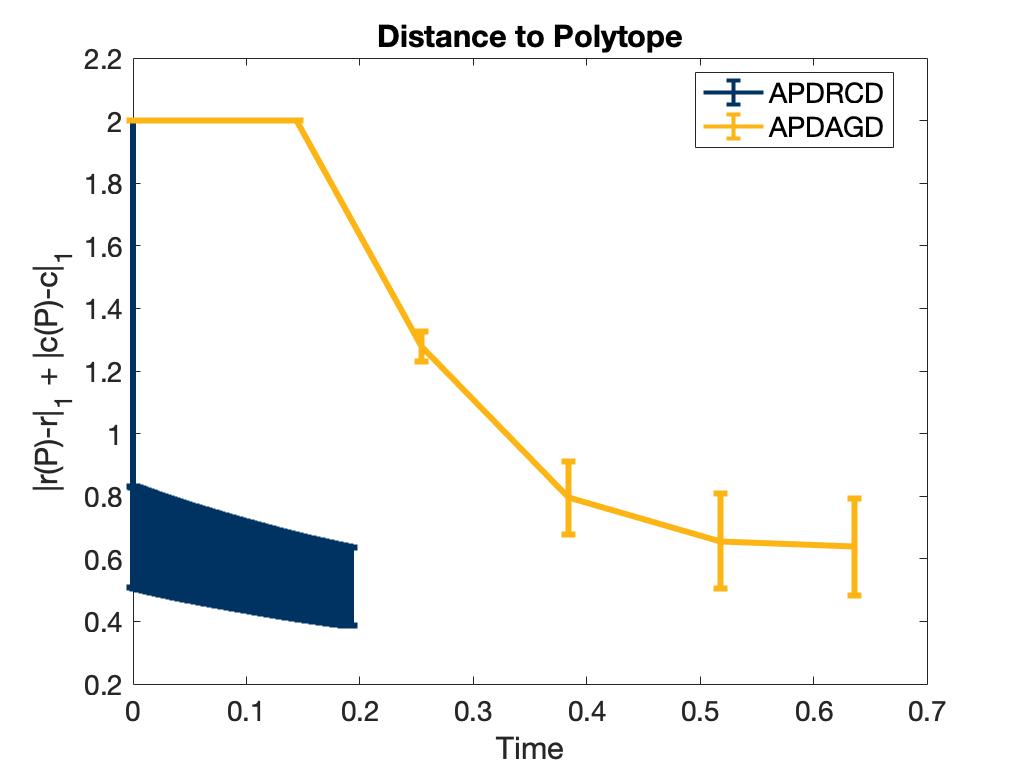}
\end{minipage}
\\
\begin{minipage}[b]{.5\textwidth}
\includegraphics[width=65mm,height=48mm]{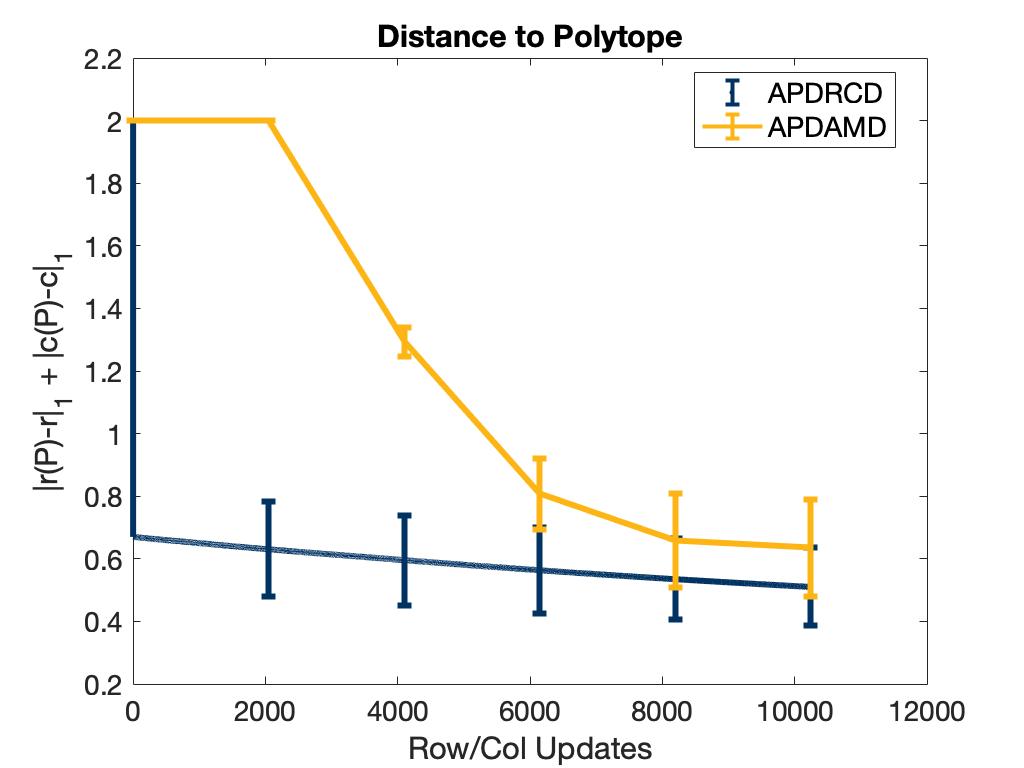}
\end{minipage}
\quad
\begin{minipage}[b]{.5\textwidth}
\includegraphics[width=65mm,height=48mm]{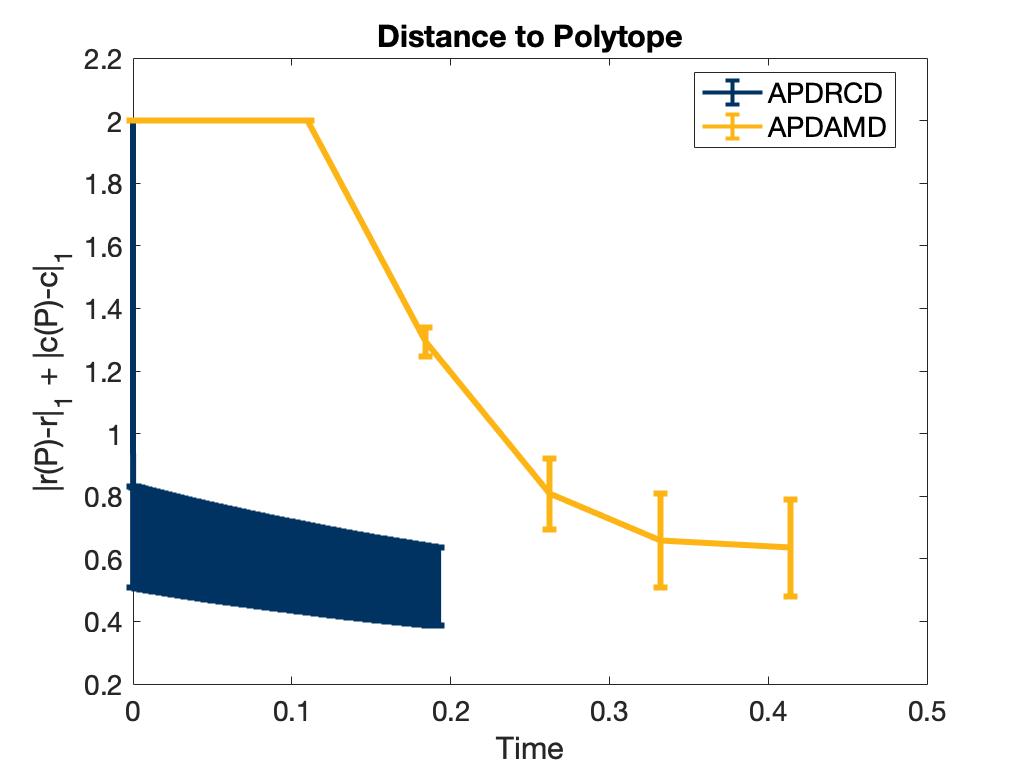}
\end{minipage}
\caption{\small Performance of APDRCD, APDAGD and APDAMD algorithms on 10000 CIFAR10 test images.}\label{fig:cifar}
\end{figure*}

\section{Further Experimental Results}
\label{subsec:further_exp}
In this appendix, we provide further comparative experiments between APDGCD algorithm versus APDRCD, APDAGD and APDAMD algorithms. We also provide further results on the performance of the APDRCD algorithm using larger synthetic datasets and CIFAR10. 

Experiments in Section~\ref{Sec:experiments} and Appendix~\ref{subsec:further_exp}) show that APDRCD enjoys consistent favorable practical performance than APDAGD, APDAMD on larger synthetic datasets and CIFAR10. Besides, APDGCD enjoys favorable practical performance than APDAGD, APDAMD, and APDRCD algorithms on both synthetic and real datasets. This demonstrates the benefit of choosing the best coordinate to descent to optimize the dual objective function of entropic regularized OT problems in the APDGCD algorithm comparing to choosing the random descent coordinate in the APDRCD algorithm. 

\subsection{APDGCD algorithm with synthetic images}
\label{subsec:APDGCD_synthetic}
The generation of synthetic images as well as the evaluation metrics are similar to those in Section~\ref{subsec:APDRCD_synthetic}. We respectively present in Figure~\ref{fig:gcd-agd-syn}, Figure~\ref{fig:gcd-amd-syn} and Figure~\ref{fig:gcd-rcd-syn} the comparisons between APDGCD algorithm versus APDAGD, APDAMD and APDRCD algorithms.  

According to Figure~\ref{fig:gcd-agd-syn}, Figure~\ref{fig:gcd-amd-syn} and Figure~\ref{fig:gcd-rcd-syn}, the
APDGCD algorithm enjoys better performance than the APDAGD, APDAMD and also the APDRCD algorithms in terms
of the iteration numbers in terms of both the evaluation metrics. Besides, at the same number of iteration number, the APDGCD
algorithm achieves even faster decrements than other three algorithms with regard to both
the distance to polytope and the value of OT metrics during the computing process. This is beneficial in practice
for easier tuning and smaller error when the update number is limited.

\subsection{APDGCD algorithm with MNIST images}
\label{subsec:APDGCD_mnist}
We present comparisons between APDGCD algorithm versus APDAGD, APDAMD, and APDRCD algorithms in Figure~\ref{fig:gcd-agd-amd-rcd-mnist} with MNIST images. 

According to Figure~\ref{fig:gcd-agd-amd-rcd-mnist}, the
APDGCD algorithm enjoys better performance than the APDAGD, APDAMD and also the APDRCD algorithms in terms
of the iteration numbers in terms of both the evaluation metrics. Furthermore, the convergence of the APDGCD
algorithm is faster than other three algorithms with regard to both
the distance to polytope and the value of OT metrics during the computing process when the number of iterations are small. This is beneficial in practice
for easier tuning and smaller error when the total update number is limited.

\subsection{APDRCD algorithm with larger synthetic image datasets and CIFAR10}
In this section, we provide further experiments on the APDRCD algorithm on larger syntheric image datasets and CIFAR10 dataset. Results are included in Figure~\ref{fig:syn30}, Figure~\ref{fig:syn50} and Figure~\ref{fig:cifar}. First, we provided results on the comparisons of APDRCD with APDAGD, APDAMD algorithms on larger synthetic datasets, with $n = 30*30$ and $n = 50*50$. We also provided the results on the CIFAR10 dataset. For each comparison, we provide the plots of the error of the dual variable versus the number of updates; and the error of the dual variable versus CPUtime (CPU: 3.1 GHz Intel Core i7) per iteration. The supplementary experiments show that APDRCD is more stable and achieve faster convergence in both number of row/col updates of the dual variables, and CPU time/iteration. The experimental setup are the same as the previous experiments except the change of dataset, hence omitted here. 

\end{document}